\documentclass{article}
\usepackage[margin=1in]{geometry}
\usepackage{setspace}
\usepackage[utf8]{inputenc}
\usepackage{lineno}
\usepackage{natbib}
\usepackage{graphicx}        
\usepackage{wrapfig}         
\usepackage{subcaption}      
\usepackage{amsmath}
\usepackage{amsthm}
\usepackage{amssymb}
\usepackage{algorithm}
\usepackage[noend]{algpseudocode}
\usepackage{caption}
\usepackage{setspace}
\usepackage{url}
\usepackage{enumitem}
\usepackage[dvipsnames,usenames]{xcolor}       
\usepackage{authblk}
\usepackage{orcidlink}

\usepackage{accents}
\newlength{\dhatheight}

\DeclareMathOperator{\dist}{dist}

\RequirePackage{hyperref}
\hypersetup{colorlinks=true, urlcolor=Blue, citecolor=Green, linkcolor=BrickRed, breaklinks, unicode}
\usepackage[capitalize]{cleveref} 

\usepackage{xcolor}  
\usepackage{marginnote}  

\newcommand{\VD}{\textsf{VD}}
\newcommand{\Vor}{\textsf{Vor}}
\newcommand{\Otild}{\tilde{O}}
\newcommand{\weight}{{\rm \omega}}
\newcommand{\Comp}{Y}
\newcommand{\bottom}{b}

\newcommand{\cotree}[1]{\hat{\mathcal T}^*_{#1}}

\makeatletter
\renewcommand{\ALG@name}{Procedure} 
\crefname{algorithm}{Procedure}{Procedures}
\Crefname{algorithm}{Procedure}{Procedures}
\makeatother

\theoremstyle{plain}
\newtheorem{theorem}{Theorem}[section]
\newtheorem{lemma}[theorem]{Lemma}
\newtheorem{proposition}[theorem]{Proposition}
\newtheorem{definition}[theorem]{Definition}
\newtheorem{corollary}[theorem]{Corollary}

\title{Distances in Planar Graphs are Almost for Free!\thanks{This research was partially funded by Israel Science Foundation grants Nos. 1867/20 and 810/21.}}

\author[1]{Shay Mozes}
\affil[1]{Reichman University, Israel\\
\href{smozes@runi.ac.il}{smozes@runi.ac.il}
\orcidlink{0000-0001-9262-1821}
}

\author[2]{Daniel Prigan}
\affil[2]{Reichman University, Israel\\
\href{daniel.prigan@post.runi.ac.il}{daniel.prigan@post.runi.ac.il}
\orcidlink{0009-0000-9285-3227}
}

\begin{document}
\date{}
\maketitle

\begin{abstract}
We prove that, up to subpolynomial or polylogarithmic factors, there is no tradeoff between preprocessing time, query time, and size of exact distance oracles for planar graphs.
Namely, we show how given an $n$-vertex weighted directed planar graph $G$, one can compute in $n^{1+o(1)}$ time and space a representation of $G$ from which one can extract the exact distance between any two vertices of $G$ in $\log^{2+o(1)}(n)$  time. 
Previously, it was only known how to construct oracles with these space and query time in $n^{3/2+o(1)}$ time [STOC 2019, SODA 2021, JACM 2023].
We show how to construct these oracles in $n^{1+o(1)}$ time.

\end{abstract}

\thispagestyle{empty}
\setcounter{page}{0}
\newpage

\section{Introduction}
Computing distances in graphs is one of the most fundamental, natural and important problems in algorithmic graph theory. 
Ideally, one would like to efficiently convert the local adjacency-based representation of a graph into a representation of the same size that allows constant-time access to distances between any pair of vertices. 
Making the above statement more precise, the ideal situation would be a linear-time algorithm that converts a graph of size $n$, represented by the adjacency list of each vertex, into a representation of size $O(n)$ that supports distance queries between any pair of vertices in $O(1)$ time.

In this paper we focus on the above problem in planar graphs, which is one of the graph families in which distance oracles were most extensively studied, both for its relevance in modeling various real-world networks, and as a theoretical model which can be generalized and extended to other graph families. 
A very long and extensive line of work on computing distances in planar graph exists. 
The earliest works considered algorithms for computing the distances (and shortest paths) for a specific pair of vertices, or rather the shortest path tree rooted at a specific vertex. 
It is easy to see that one must at least read the entire graph in $\Omega(n)$ time even for this more restricted problem, and even if one is willing to settle for approximate distances.
Then researchers began investigating the distance oracle problem, with dozens of papers studying the various tradeoffs between size, query time, preprocessing time, and approximation guarantees.
We show that, up to subpolynomial factors, no tradeoff is required; One can get almost the best we can hope for in all criteria simultaneously. In other words, once we invest slightly more than the linear time it takes to just read the graph, we can get any distance in the graph nearly for free!
\begin{theorem} \label{thm:main}
Let $G$ be a directed weighted planar graph with $n$ vertices (and hence $O(n)$ edges, and an adjacency input representation of size $O(n)$). There exists an $n^{1+o(1)}$-time algorithm that produces a representation of $G$ of size $n^{1+o(1)}$ that supports querying the exact distance between any pair of vertices of $G$ in $\log^{2+o(1)} n$ time. 

More generally, for any $m$ that is in $\omega(1) \cap o(\log n)$,
and any $k$ that is $O(\log n)$
we can construct in 
$n^{1+o(1)}$ time 
the oracles of \cite{ourJACM} that have 
$O(m k n^{1 + 1/m + 1/k})$ space 
and $O(2^m k \log^{2}{n} \log\log n)$ query time.
\end{theorem}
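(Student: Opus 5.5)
We will not design a new oracle. Instead we keep the data structure of \cite{ourJACM} (Long and Pettie's refinement of the Charalampopoulos et al.\ oracle) exactly as is, and only replace its $n^{3/2+o(1)}$-time construction with an $n^{1+o(1)}$-time one. Recall the structure. We fix an $m$-level recursive decomposition (an $\vec r$-division with $r_i = n^{i/m}$). For every piece $P$ at level $i$ and every vertex $u$ of the parent piece $Q \supseteq P$ (or of the whole graph, depending on the variant), the oracle stores a compact representation of the additively weighted Voronoi diagram $\VD^*(u,P)$ of the outside $G\setminus P$. Here the sites are the boundary vertices $\partial P$, each with additive weight $\dist(u,\cdot)$. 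The representation is the dual tree $\cotree{}$ of Voronoi vertices, with point-location data. It also stores MSSP structures (persistent dynamic trees) for distances inside pieces. The $k$ parameter enters through the point-location structures (chords and the $n^{1/k}$ fan-out of the \cite{ourJACM} query). Space and query time are therefore inherited verbatim. What we must show is that every stored object can be produced in time proportional to its size, up to $n^{o(1)}$ factors.

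The construction proceeds bottom-up over the levels, in four steps.

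First, we build the MSSP structures for all pieces and their complements. By Klein's algorithm and its dense-distance-graph variants this takes $\tilde O(n)$ per level in the pieces. For the complements, we use the recursion itself rather than explicit complements, as described below.

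Second, and this is the crux, we need the additive weights $\dist_G(u,x)$ for all $x\in\partial P$ and all relevant $u$. The sizes are $\sum_P |Q|\cdot|\partial P| \approx n\cdot \sqrt{r_{i+1}}\cdot (n/r_i)\sqrt{r_i}/(n/r_{i+1})$. This is $n^{1+O(1/m)}$ per level, so the output is small. The difficulty is computing these distances without Dijkstra-like work proportional to $n$ per source. The plan is to obtain them by querying the oracle of the coarser levels, which is built first: the distances are needed only from $u$ to boundary vertices, and the levels above are already built. Each query costs $\log^{2+o(1)}n$, so the total is $n^{1+o(1)}$. The circularity has to be broken carefully. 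Distances that leave $Q$ decompose through $\partial Q$ and can be answered by the level above. Distances that stay inside $Q$ are computed with FR-Dijkstra on the dense distance graph of $Q$'s subpieces, in $\tilde O(|\partial|)$-type time per source. We therefore order construction from the top level (the whole graph, whose single-piece boundary is empty) downward.

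Third, given the weights, we build each Voronoi diagram $\VD^*(u,P)$ in time $\tilde O(|\partial P|)$ rather than $\tilde O(|G\setminus P|)$. Following the divide-and-conquer Voronoi construction of Gawrychowski et al.\ and Charalampopoulos et al., we find the trichromatic Voronoi vertices by binary searches that use MSSP distance queries in $G\setminus P$. Since $G\setminus P$ is not a small piece, these queries are themselves answered through the recursive structure, namely the coarser-level Voronoi diagrams and point location. Each binary search step then costs polylog times $n^{o(1)}$, and a diagram has $O(|\partial P|)$ vertices. Finally, we attach the point-location auxiliary data of \cite{ourJACM} (the centroid decomposition of $\cotree{}$ and the chord-oracle tables) in time linear in the diagram size times polylog.

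Summing over the $m$ levels gives $O(m\,k\,n^{1+1/m+1/k})$ work up to polylog, which is $n^{1+o(1)}$ for $m=\omega(1)$, $k=\omega(1)$. The main obstacle is the second and third steps: supplying distances and MSSP-type queries in the \emph{complement} $G\setminus P$ without building an explicit structure of size $|G|$ per piece. Our first attempt will be to implement a "distance query in $G\setminus P$ from a site on $\partial P$" as a restricted oracle query on the already-built coarser levels, showing that the oracle query procedure of \cite{ourJACM} can be confined to avoid $P$. If that fails, we fall back on storing one MSSP structure per level-$i$ complement face, organized hierarchically so that the total size stays $n^{1+o(1)}$.
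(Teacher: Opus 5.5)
The gap is your third step, which is where the whole difficulty of the theorem lies. Your outer framing matches the paper's: keep the oracle of \cite{ourJACM} unchanged, build it from level $m$ downward, answer the level-$i$ distance queries (from sites on $\partial R_i$ into $R_i^{out}$) with the already-built coarser levels, and charge all work to the total number of sites.

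In that third step you assert that trichromatic Voronoi vertices can be found ``by binary searches that use MSSP distance queries,'' at $\mathrm{polylog}\cdot n^{o(1)}$ cost per step. You never say what is being searched or why the tested predicate is monotone, and distance queries alone give no such order. The constructions you cite do not supply one either. The previous directed algorithm \cite{CharalampopoulosGMW19} builds an $S$-site diagram in $\Otild(\sqrt{nS})$ time via FR-Dijkstra, and plugging it in is exactly what gave the old $n^{3/2+o(1)}$ bound. The binary-search approaches of \cite{CharalampopoulosGM21,icalp2025planar} rely on separators whose boundaries split into monotone intervals. That holds for alignment grids, and for undirected graphs (with $\Otild(|H|)$ preprocessing), but not for general directed graphs.

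The missing idea is the paper's tree elimination on a shortest path tree:
\begin{enumerate}
\item Reduce an $S$-site diagram to $\Otild(S)$ three-site searches (\cref{lem:d-and-c-VD}).
\item Show that along each rootward path of the fundamental cycle of a green edge in the cotree $T^*_g$, the green edges form a prefix (\cref{lem:TwoColorMonotonicity}). This lets you find the critical edges by binary search.
\item Show that the colors at the two critical edges decide whether the green vertex of $\tilde f$ lies below that edge (\cref{lem:SimpleTrichromaticDecisionCorrectness}). This drives a centroid search on $T_g$.
\end{enumerate}

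Even granting this idea, your resources would not suffice to run it. The search needs random access to cotree paths (level-ancestor queries on $T^*_g$), not just distances. For $i<m-1$ no such structure exists for $R_i^{out}$. Your ``restricted oracle query'' yields only distances. Your fallback of hierarchically organized MSSPs of total size $n^{1+o(1)}$ amounts to something like part (A), whose trees live on $R_i^{out}\cap R_{i+1}$ with shortcut edges and do not represent the cotree of $T_x$ in $R_i^{out}$.

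The paper handles this with:
\begin{itemize}
\item coarse primal trees and dual coarse trees, assembled from part (A) and the level-$(i+1)$ Voronoi diagrams;
\item a stack of Voronoi cells that keeps the fundamental cycle relative to $T_{x_b}$;
\item a segment decomposition of that cycle;
\item a nested call to the same elimination procedure with a bisector-prefix predicate.
\end{itemize}
The resulting cost is $(\log n)^{O(m)}\cdot Q$ per trichromatic face. This is $n^{o(1)}$ only for $m=o(\log n/\log\log n)$. Covering all $m\in\omega(1)\cap o(\log n)$ additionally requires a scaffolding decomposition with $\sqrt m$ levels. So your claimed per-step cost for every $m$ is also unsupported.

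A minor point: $k$ enters through the MSSP variant with $O(kn^{1+1/k})$ space and $O(k\log\log n)$ query time, not through point-location fan-out.
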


\cref{thm:main} can be thought of in some sense as an analogue of the fast Fourier transform (FFT) for distances in planar graphs. FFT is a nearly linear time  algorithm that converts an input sequence from a local representation (time or space domain) into a non-local representation (frequency domain) of the same size, from which one can query information in the non-local domain in constant time per query.
A naive algorithm for the Fourier transform would take quadratic time (as would a naive algorithm for all-pairs shortest paths in a planar graph).
We stress that we find this analogy appealing not in the technical sense of the structural properties or the algorithmic techniques, but in the general notion of converting in almost linear time between two useful representations of a dataset. 
Because of the simplicity of the representation (just an array), we do not usually think of FFT as a data structure problem. Moreover, we are by now quite used to the ``miracle'' of lack of tradeoff between preprocessing time, size of the transformed sequence and query time for the discrete Fourier transform problem.
Now we are finally able to show that an analogous ``miracle'' occurs for distances in planar graphs.

\paragraph{History of Exact Distance Oracles for Planar Graphs}

The first papers to discuss planar distance oracles date back to 1996.
Let the query time and the space of an oracle be denoted by $Q$ and $S$, respectively.
The early works ~\cite{ArikatiCCDSZ96,Djidjev96,ChenX00} mainly used planar separators and $r$-divisions~\cite{LiptonT80,Miller86,Frederickson87}
to achieve a space-query tradeoff of 
$Q=\tilde{O}(n/\sqrt{S})$ for $S\in [n^{4/3},n^2]$, 
and a weaker tradeoff of 
$Q=O(n^2/S)$ for $S \in [n,n^{4/3})$.  

In FOCS 2001 Fakcharoenphol and Rao~\cite{FakcharoenpholR06} 
used the \emph{Monge} property to obtain a distance oracle with $\tilde{O}(n)$ space and $\tilde{O}(\sqrt{n})$ query time.
In SODA 2012 Mozes and Sommer~\cite{MozesS2012} used  \cite{FakcharoenpholR06}, and  Klein's~\cite{Klein05,CabelloCE13} multiple source shortest path (MSSP) data structure,   to obtain the $Q=\tilde{O}(n/\sqrt{S})$ space-query tradeoff for nearly the full range $[n\log\log n, n^2]$. All of the above oracles can be  constructed in $\Otild(S)$ time. However, none of them provides polylogarithmic $\tilde O(1)$ query-time using truly subquadratic $O(n^{2-\varepsilon})$ space. 

In FOCS 2017, Cohen-Addad, Dahlgaard, and Wulff-Nilsen~\cite{Cohen-AddadDW17}, inspired by Cabello's use of \emph{additively weighted planar Voronoi diagrams} for the diameter problem, obtained the first exact distance oracle for planar graphs with subquadratic space and polylogarithmic query time. 
The Voronoi diagram based oracle in their breakthrough paper has a space-time tradeoff of $Q=\tilde{O}(n^{5/2}/S^{3/2})$
for $S\in[n^{3/2},n^{5/3}]$.
This came at the cost of increasing the preprocessing time from $\Otild(n)$ to $O(n^2)$, which was subsequently improved to match the space bound~\cite{ourDiameter}.

In SODA 2018, Gawrychowski, Mozes, Weimann, and Wulff-Nilsen~\cite{GawrychowskiMWW18} improved the space and preprocessing time to $\tilde{O}(n^{3/2})$ with $\tilde O(1)$ query-time (and the tradeoff to $Q=\tilde{O}(n^{3/2}/S)$ for $S\in[n,n^{3/2}]$) by defining a dual representation of Voronoi diagrams and developing an efficient {\em point-location} mechanism on top of it. 
In STOC 2019, Charalampopoulos, Gawrychowski, Mozes, and Weimann~\cite{CharalampopoulosGMW19} observed that the same point-location mechanism can be used on the Voronoi diagram for the complement of regions in the $r$-division. This observation alone suffices to improve the oracle size to $O(n^{4/3})$ with $\tilde O(1)$ query-time, at the cost of increasing the preprocessing time to $\Otild(n^{5/3})$.
By combining this idea with a sophisticated recursion (where a query at recursion level $i$ reduces to $O(\log n)$ queries at level $i+1$) they obtained an oracle of size $n^{1+o(1)}$ and  query-time $n^{o(1)}$. They were also able to speed up the construction time of their oracle to $O(n^{3/2 +o(1)})$ by developing an algorithm for computing the dual representation of a Voronoi diagram with $S$ sites in $\Otild(\sqrt{nS})$ time. This construction algorithm uses the algorithm of Fakcharoenphol and Rao~\cite{FakcharoenpholR06} to quickly zoom in and identify Voronoi vertices. 

Finally, in SODA 2021, 
Long and Pettie~\cite{LongPettie} showed that much of the point-location work can be done without recursion, and that only two (instead of $\log n$) recursive calls suffice. This led to the state-of-the-art oracle, requiring $n^{3/2 +o(1)}$ preprocessing time, $n^{1+o(1)}$ space and $\tilde O(1)$ query-time or $\tilde O(n)$ space and $n^{o(1)}$ query-time. See~\cite{ourJACM} for a JACM paper combining most of  the above Voronoi-based oracles.

The construction time in the state-of-the-art tradeoff is the only aspect in which the distance oracle problem was resolved in an almost optimal manner (i.e., up to subpolynomial or polylogarithmic terms). 
Settling the construction time problem is arguably the most important remaining open problem in the field of distance oracles for planar graphs.  

All of the distance oracles cited above report \emph{exact} distances. For $(1+\epsilon)$-approximate distances it has been known that, up to polylogarithmic factors, there is no tradeoff between preprocessing time, query time and size. Specifically,   Thorup~\cite{Thorup04} proved that in
directed planar graphs with non-negative integer weights bounded by $N$,
$(1+\epsilon)$-approximate distances 
can be reported
in $O(\epsilon^{-1}+\log\log nN)$ time by an oracle of space $O(n\epsilon^{-1}\log(n)\log(nN) )$. The construction time is $O(n(\log n)^3(\log(nN))\epsilon^{-2})$.
Refer to~\cite{Thorup04,KawarabayashiKS11,KawarabayashiST13,GuX19,Wulff-Nilsen16,ChanS19,LeW21}
for tradeoffs in the polylogarithmic terms in approximate distance oracles for \emph{undirected} planar graphs and to~\cite{LeW21} for \emph{directed} planar graphs.

\paragraph{Additional related work} 
Charalampopoulos, Gawrychowski, Mozes, and Weimann showed in ICALP 2021 \cite{CharalampopoulosGM21} an oracle satisfying \cref{thm:main} in the special case when the planar graph is an {\em alignment graph} of two strings. 
The alignment graph is a directed weighted acyclic grid graph that arises in string alignment problems. The simpler structure allowed the authors to adapt and specialize the state-of-the-art oracle of \cite{LongPettie} and, importantly, to develop a novel construction algorithm that runs in $O(n^{1+o(1)})$ time rather than in $O(n^{3/2+o(1)})$ as in \cite{LongPettie}.
Their approach exploits a monotonicity property of the interaction between shortest paths and small cycle separators in the alignment graph. Their algorithm constructs a representation of a Voronoi diagram with $S$ sites in time $\Otild(S)$ using binary search, instead of using the $\Otild(\sqrt{nS})$ algorithm of \cite{CharalampopoulosGMW19}.   
 
Very recently, Boneh, Golan, Mozes, Prigan and Weimann showed in ICALP 2025~\cite{icalp2025planar} how to achieve the same $\Otild(S)$-time construction of Voronoi diagrams in undirected weighted planar graphs by employing a similar (at least at a high level) binary search approach on shortest path separators instead of on small cycle separators as in \cite{CharalampopoulosGM21}. 
They used their Voronoi diagram construction algorithm to construct in $\Otild(n^{4/3})$ time the exact distance oracle with space $\Otild(n^{4/3})$ and query time $O(\log^2 n)$ of \cite{ourJACM}.
However, they were unable to combine their new construction algorithm with the recursive structure of the state-of-the-art oracles of \cite{ourJACM} because their algorithm relies on a data structure that they could not construct efficiently during the recursive calls. 
Our work provides a better approach for constructing Voronoi diagrams via a different binary search based on different structural properties. 
Our approach, which works for directed planar graphs, is arguably simpler than the approach of \cite{icalp2025planar}, and can be made to work (with significant effort) with the recursive structure of the state-of-the-art oracles of \cite{ourJACM} to achieve almost linear construction time.

\paragraph{Voronoi Diagrams}
The state-of-the-art distance oracles for planar graphs rely on point location in Voronoi Diagrams. Consider a planar graph $H$ and with an infinite face $h$ with $S$ vertices. 
The $S$ vertices of $h$ are called the {\em sites} of the Voronoi diagram, and each site $s\in h$ has a weight $\weight(s)\geq 0$ associated with it.
The additive distance between a site $s$ and a vertex $v\in H$, denoted by $\dist^\weight(s,v)$ is defined as $\weight(s)$ plus the length of the $s$-to-$v$ shortest path in $H$. 
The {\em additively weighted Voronoi diagram} $\VD(h,\weight)$ is a partition of the vertices of $H$ into pairwise disjoint sets, one set $\Vor(s)$ for each site $s$. 
The set $\Vor(s)$, called the {\em Voronoi cell} of $s$, contains all vertices in $H$ that are closer (w.r.t.~$\dist^\weight(\cdot,\cdot)$) to $s$ than to any other site.
A point location query for a vertex $v$ returns the site $s$ such that $v$ is in the Voronoi cell $\Vor(s)$.

\begin{figure}[htb]
\begin{minipage}[c]{0.25\textwidth}
    \includegraphics[width=\textwidth]{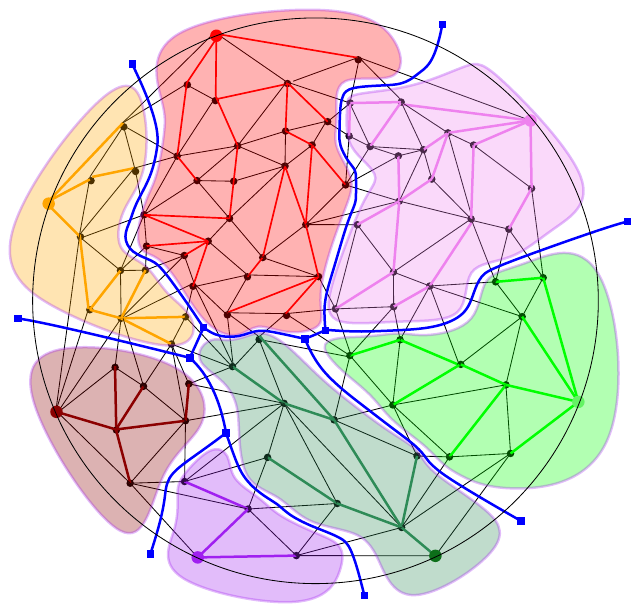}
  \end{minipage}\hfill
  \begin{minipage}[c]{0.7\textwidth}
    \caption{A graph (region) $H$. The vertices of $H$'s infinite face $h$ are the sites of the Voronoi diagram $\VD$. Each site is represented by a unique color, which is also used to shade its Voronoi cell. The dual representation $\VD^*$ is illustrated as the blue tree. The tree has 7 leaves (corresponding to 7 copies of $h^*$) and 5 internal nodes (corresponding to the 5  trichromatic faces of $\VD$). The edges of the tree correspond to contracted subpaths in $H^*$. \label{fig:VD}}
  \end{minipage}
\end{figure}

The main idea for using Voronoi diagrams in  a distance oracle for a graph $G$ is the following.  
Let $u$ be a vertex of $G$ and let $H$ be a subgraph of $G$ with a face $h$ such that $u \notin H$, and every path (in $G$) from $u$ to a vertex of $H$ intersects $h$. 
The oracle stores a Voronoi diagram for $H$ with additive distances $\weight(s) = \dist_G(u,s)$, where $\dist_G(u,s)$ denotes the distance in $G$ from $u$ to $s$. 
To answer a distance query from $u$ to a vertex $v \in H$ the oracle locates the site $s \in h$ such that $v \in \Vor(s)$ using a point location query for $v$. It then returns $\weight(s)$ plus $\dist_H(s,v)$. The former quantity is stored in the Voronoi diagram. The latter quantity is either available directly via an additional data structure, or obtained using recursion.
The various oracles described above differ in the way $G$ is decomposed into subgraphs, in the vertices for which Voronoi diagrams are stored, and in the use of additional data structures or recursion. 

We shall describe in detail the representation of Voronoi Diagrams in the technical part of the paper. For now let us just focus on the key concept of {\em trichromatic faces}. 
We assume, without loss of generality, that each face of the graph $H$ other than $h$ is a triangle. I.e., each face consists of 3 vertices. 
A face (other than $h$) is trichromatic iff its three vertices belong to distinct Voronoi cells.
Computing the representation of a Voronoi diagram essentially boils down to finding its trichromatic faces.
In the case of alignment graphs~\cite{CharalampopoulosGM21}, finding a trichromatic face was done by zooming in on gradually smaller rectangular portions of the grid. 
The crucial structural property of the alignment graph is that the boundary of any rectangle is partitioned by the Voronoi diagram into intervals of consecutive vertices that belong to the same Voronoi cell. 
There is at most one interval per cell, and the order of the intervals along the boundary of the rectangle is the same as the order of the sites along the face $h$. 
Moreover, it suffices to compare the intervals along the left-upper boundary of the rectangle and the lower-right boundary of the rectangle to determine if and which trichromatic faces exist inside a rectangle. 
This structure allows for a binary search procedure that identifies the endpoints of the intervals in $\Otild(S)$ time (recall that $S$ is the number of sites of the VD), and thereby for finding all $S$ trichromatic vertices in $\Otild(S)$ time.

Trying to extend these ideas, \cite{icalp2025planar} replaced the rectangles in the alignment grid with subgraphs bounded by two shortest paths (shortest path separators) in undirected planar graphs. 
The fact that each pair of shortest paths in an undirected graph crosses at most once implies that, under a certain technical restriction, a similar partition of the separator into nicely behaved intervals as in~\cite{CharalampopoulosGM21} exists also in \cite{icalp2025planar}.
This results in an algorithm that after $\Otild(|H|)$ preprocessing time, can compute any additively weighted Voronoi diagram of the $S$ sites of $H$ in $\Otild(S)$ time.
However, enforcing and working under the aforementioned technical restriction makes the algorithm and arguments quite complicated. More importantly, it requires computing certain data structures whose computation time in~\cite{icalp2025planar} takes $\Otild(|H|)$ time.
This suffices when the distance oracle does not use recursion. 
However, the state-of-the-art oracles are recursive, and do not allow for linear preprocessing time at each step of the recursion. 
A possible solution is to devise a way to efficiently reuse the data structures already computed for one level in subsequent recursive levels, but it is unknown whether this idea is applicable to the data structure required in~\cite{icalp2025planar}.

\subsection{Our Results and Techniques}
We had already mentioned that our main result (\cref{thm:main}) is achieved by developing a faster construction algorithm for additively weighted Voronoi diagrams. Using our algorithm, we give nearly optimal constructions for all the distance oracles presented in \cite{CharalampopoulosGMW19,LongPettie,ourJACM}.
Specifically, 
for any $m$ that is in $\omega(1) \cap o(\log n)$,
and any $k$ that is $O(\log n)$
we can construct in 
$n^{1+o(1)}$ time 
the oracles of \cite{LongPettie,ourJACM} with 
$O(m k n^{1 + 1/m + 1/k})$ space 
and $O(2^m k \log^{2}{n} \log\log n)$ query time.
Similarly, we can construct any distance oracle with space $S$ and query time $Q$ developed in \cite{CharalampopoulosGMW19}
in $\Otild(S\cdot Q)$ time. 

It is likely that, beyond settling the main open problem in the field, the techniques we develop here, and the fact that one can now access any pairwise distance in a planar graph almost for free, can lead to progress in other problems that involve distances in planar graphs. 
See for example the immediate implication for dynamic distance oracles we mention at the end of this section. 
Other likely candidate problems to benefit from this are other variants of distance oracles, and, in the longer term, perhaps the diameter problem (which seems to require significant additional ideas). 

It was shown in \cite{icalp2025planar} that computing the representation of a Voronoi diagram for a graph $H$ with a face $h$ with $S$ sites, essentially reduces to computing $\Otild(S)$ Voronoi diagrams for $H$ with just 3 sites each.
A Voronoi diagram with just 3 sites has at most a single trichromatic face.
Our main technical contribution is a completely new algorithm for identifying the (at most) single trichromatic face of a Voronoi diagram whose sites are only 3 of the vertices of the face $h$.

Unlike the previous efficient algorithms for locating trichromatic faces, which used small separators \cite{CharalampopoulosGM21} or shortest path separators \cite{icalp2025planar} to zoom in on the trichromatic face, our algorithm does not use separators at all. 
We use new monotonicity properties to design a new binary-search-based algorithm that 
finds the trichromatic face by eliminating portions of the shortest path tree rooted at one of the three sites. 
We identify a property that can be efficiently tested to determine whether a subtree contains the trichromatic face.
Our algorithm for finding a trichromatic face  is completely different from those devised in \cite{CharalampopoulosGM21,icalp2025planar}.

We provide here a description of the basic approach.
We call the three sites green ($g$), red ($r$), and blue ($b$).
Consider $T_g$, the shortest path tree of $H$ rooted at $g$. It is well known that the edges of $H$ not in $T_g$ form a spanning tree $T^*_g$ of the dual graph $H^*$. 
$T^*_g$ is often called the {\em cotree} of $T_g$. Consider the face $h^*$ as the root of $T^*_g$. 
For every edge $e\in T_g$, its dual edge $e^*$ defines a fundamental cycle $C^*_e$ in $T^*_g$ -- the cycle formed by $e^*$ and the two paths $P^*_1$ and $P^*_2$ in $T^*_g$ that go from the endpoints of $e^*$ towards the root.
We observe that for each of these paths $P^*_j$ ($j \in \{1,2\}$), the part of $P^*_j$ that lies in the green Voronoi cell forms a prefix of $P^*_j$. Hence this prefix can be found efficiently via binary search (it is "easy" to check if a specific vertex or edge are closer to $g$ than to the other two sites).

Our next important observation is that the green vertex of the trichromatic face (i.e., the vertex of the trichromatic face that is in the green Voronoi cell) is in the subtree of the edge $e$ in $T_g$ if and only if the above-mentioned green prefixes $P^*_1$ and $P^*_2$ are non-empty, and one of them leaves the green Voronoi cell into the red cell, while the other leaves the green Voronoi cell into the blue cell.
Thus, we have identified a local decision rule (local in the sense that it only involves inspecting the endpoints of the prefixes of $P_1^*$ and $P_2^*$) that allows us to eliminate one of the two subtrees one would obtain from $T_g$ by deleting the edge $e$.
This implies that the green vertex of the trichromatic face can be found using a binary search approach\footnote{We note that using tree centroids for binary search (centroid search) is a common technique that has also been used in~\cite{CharalampopoulosGMW19,ourJACM}. However, that use is in a different context (the query algorithm, not the construction), on a different tree (the dual Voronoi diagram, not the shortest path tree in the primal graph), and using different structural properties.} --  choose $e$ to balance the size of $T_g$ on either side of $e$ (a centroid), eliminate one side of $T_g$ using the local decision rule, and repeat the process on the side of $T_g$ that was not eliminated. 

Turning this into an efficient algorithm mainly depends on our claim that it is "easy" to check if a specific vertex is closer to $g$ than to the other two sites, and on our ability to efficiently perform random access of edges of the dual paths $P^*_1$ and $P^*_2$. 
If we are allowed to preprocess $H$ in $\Otild(|H|)$ time, then the MSSP data structure~\cite{Klein05, CabelloCE13} provides both operations with respect to the shortest path tree rooted at any site on the face $h$ in $O(\log|H|)$ time, and then finding the trichromatic face of any 3 sites and any set of additive weights takes $O(\log^2|H|)$ time. 
This simple and elegant approach, which is the basis of our result, is presented in \cref{sec:simple}.

Things are much more complicated when we cannot afford to invest $\Otild(|H|)$ preprocessing time. This is the case when constructing Voronoi diagrams for the complements of regions in a recursive decomposition of a graph $G$ into gradually smaller regions $H$, which is the recursive structure used in the state-of-the-art oracles of \cite{CharalampopoulosGMW19,LongPettie,ourJACM}. 
To overcome this issue we devise a mechanism that implements the tree elimination and fundamental dual cycle approach on coarse versions of the shortest path trees and their duals, at different levels of granularity. 
To achieve this we identify and exploit additional structure of the relationship between trees (both primal and dual) at different levels of coarseness, as well as of the relationship between coarse dual trees and Voronoi diagrams at different levels of the recursion. 
Interestingly, we are able to solve some of the problems that arise in this scheme using the same tree elimination approach described above, but applying it to the problem of finding a prefix of a dual path, rather than just to the original problem of finding a trichromatic face. 
The resulting recursive algorithm is quite involved. It finds a single trichromatic face in $(\log{n})^{O(m)}\cdot Q$ time, where $m$ is the depth of the recursive decomposition, and $Q$ is the query time of the oracle we are trying to construct. This algorithm is presented in \cref{sec:complete}.
This time dependency on $m$ gives $n^{1+o(1)}$ construction time for the entire oracle whenever 
$m=o(\log n / \log\log n)$, which includes all the tradeoffs in \cite{CharalampopoulosGMW19} (where $Q=O(\log^{m}{n})$), and part of the range of the tradeoffs in \cite{ourJACM} (where $Q=\Otild(2^{m})$).  
For the remainder of the range of tradeoffs in \cite{ourJACM}, i.e., $m \in \Omega(\log n / \log\log n) \cap o(\log n)$, we are also able, with some additional effort, to obtain $n^{1+o(1)}$ construction time. See \cref{sec:jacm_complete} for the details. 
In \cref{sec:jacm_complete} we also describe how to efficiently construct other parts of the oracles of~\cite{ourJACM} (i.e., parts that are not Voronoi diagrams). The construction of these parts poses no significant technical challenges, but must be included for completeness.

\paragraph{Application to dynamic {\em directed} distance oracles} It was pointed out in~\cite{icalp2025planar} that an $\Otild(1)$ time-algorithm for finding the trichromatic face of a 3-site Voronoi diagram implies a {\em dynamic} distance oracle that supports $\tilde O(1)$-time {\em directed} distance queries from a single-source $s$, and $\tilde O(n^{2/3})$-time updates consisting of edge insertions, edge deletions, and  changing the source $s$.
The dynamic distance oracle is obtained by simply plugging the Voronoi diagram construction algorithm into the oracle of Charalampopoulos and Karczmarz \cite{charalampopoulos2022ssp} (thus improving its update time from $\tilde O(n^{4/5})$ to $\tilde O(n^{2/3})$). 
The algorithm of~\cite{icalp2025planar} implied $\Otild(1)$-query and $\Otild(n^{2/3})$-time updates for undirected planar graphs, and our result now implies that same for directed planar graphs. 
As was also pointed out in \cite{icalp2025planar}, 
this new oracle achieves the same bounds as the current best {\em all-pairs} dynamic oracle by   Fakcharoenphol and Rao \cite{FakcharoenpholR06} but has the benefit that, between source changes, each query takes only $\tilde O(1)$ time instead of $\Otild(n^{2/3})$ time. It is known that achieving $\Otild(1)$-query and $O(n^{1/2-\epsilon})$-time updates for any constant $\epsilon>0$ is would refute the APSP conjecture~\cite{AbboudD16}.

\section{Preliminaries}

Throughout the paper we consider as input a weighted directed planar graph $G=(V(G),E(G))$, embedded in the plane. 
To be concrete, we assume the graph is given by adjacency lists. This representation can be converted into a combinatorial embedding of the graph in linear time \cite{hopcroftTarjanPlanarity1974}.
We use $|G|$ to denote the number of vertices in $G$. Since planar graphs are sparse, $|E(G)| = O(|G|)$ as well.
The dual of a planar graph $G$ is a planar graph $G^*$ whose vertices are the faces of $G$. Edges of $G^*$ are in one-to-one correspondence with edges of $G$; 
there is an edge $e^*$ from vertex $f^*$ to vertex $g^*$ of $G^*$
if and only if the corresponding faces $f$ and $g$ of $G$ are to the left and right of the edge $e$, respectively.
Thus, every edge $e$ in $G$ has a
corresponding dual edge $e^*$ in $G^*$.
Geometrically, we may consider both the primal graph $G$ and its dual $G^*$ as embedded as curves in the same plane.
The embeddings of a primal edge $e$ and its dual edge $e^*$ cross each other at a single point in the plane.
We may therefore refer to curves that are composed by concatenating a path $P$ of primal edges that ends with a primal edge $e$ and a path $Q^*$ of dual edges that starts with the dual edge $e^*$.
The concatenated curve starts along the primal path $P$ until the intersection point of $e$ and $e^*$, and then continues along the dual path $Q^*$.

We assume that the input graph has no negative length cycles.
We can transform the graph in a standard way in $O(n \frac{\log^2 n}{\log \log n})$ time~\cite{MozesW10}, so that all edge weights are non-negative and distances are preserved.
With another standard transformation we can guarantee that each vertex has constant degree and that the graph is triangulated, while distances are preserved and without increasing asymptotically the size of the graph.\footnote{We may assume the graph is simple (remove self loops and all parallel edges except the lightest one in each set of parallel edges). Replace each high degree vertex with a binary tree with zero length bidirectional edges. In the resulting graph the degree of each vertex is at most 3. Then, for every face $f$ that is not a triangle, let $v_1, v_2, v_3, v_4 \dots v_k$ be the vertices of $f$ in cyclic order. Add infinite length edges $v_2v_k, v_kv_3, v_3v_{k-1}, v_{k-1}v_4 \dots$. This replaces $f$ with $O(k)$ triangles, while increasing the degree of each vertex by a constant.}
We further assume that shortest paths are unique; this can be ensured in $O(n)$ time by a deterministic perturbation of the edge weights~\cite{EricksonFL18}.

For a rooted tree $T$, we use $T(v)$ to denote the subtree of $T$ rooted at $v$, and $T(u,v)$ to denote the path in $T$ from a vertex $u$ to its descendant $v$. 
Let $T$ be a spanning tree in a planar graph $G$. 
The edges of $G$ not in $T$ form a spanning tree of the dual graph $G^*$. This tree is denoted by $T^*$ and is called the {\em cotree} of $T$. We consider $T^*$ as a tree rooted at the infinite face of $G$.
For an edge $e \in T$, the {\em fundamental cycle} of $e^*$ w.r.t. $T^*$ is the cycle $C^*$ composed of $e^*$ and of the two rootward paths $P^*_1,P^*_2$ in $T^*$ from the endpoints of $e^*$ to their LCA $x^*$ in $T^*$.
The embedding of $G^*$ specifies a clockwise cyclic order on the edges incident to every vertex.
We say that $P^*_1$ is right of $P^*_2$ if, starting with the parent edge leading from the root of $T^*$ to the LCA $x^*$, we first encounter an edge of $P^*_1$ and then an edge of $P^*_2$. (To define the directions at the root we imagine embedding an artificial parent edge in the infinite face that enters the root.)

Since the graph has bounded degree, every spanning tree admits an {\em edge centroid decomposition}, defined as follows.
\begin{definition}[edge centroid decomposition]\label{def:centroidDecomposition}
Let $T$ be a tree with $n$ vertices and let $1<k\leq 2$ be some constant. A centroid of $T$ is an edge $e$ such that if $e$ is removed from $T$ then each resulting connected component has at most $\lfloor n/k \rfloor$ edges. A centroid decomposition of $T$ is a rooted binary tree $B$ defined recursively as follows:
\begin{itemize}
    \item If $T$ consists of a single edge $e$, then $B$ consists of a single node that corresponds to the edge $e$.
    \item Otherwise, let $e$ be a centroid of $T$, and let $T_1, T_2$ be the connected components of $T \setminus \{e\}$. The root of $B$ is a node corresponding to the edge $e$, and the two subtrees of the root of $B$ are the centroid decomposition trees for each $T_i$.
\end{itemize}
The centroid decomposition tree $B$ has height 
 $O(\log n)$.
\end{definition}

\paragraph{Multiple-source shortest paths.} For a planar graph $G$ with a distinguished face $h$, the multiple-source shortest paths (MSSP) data structure~\cite{Klein05,CabelloCE13} represents all shortest path trees rooted at the vertices of $h$, as well as all of their cotrees, using  persistent dynamic trees.
It can be constructed in $O(n \log n)$ time, requires $O(n\log n)$ space, and can support in $O(\log n)$ time standard tree operations on any of these trees such as reporting the distance from the root to any vertex, reporting whether a vertex is an ancestor of another vertex, reporting whether two vertices are left/right to one another in the tree, reporting an ancestor of a node at a specified depth, supporting leafmost/rootmost marked ancestor queries, and reporting the LCA of two nodes (cf. ~\cite[Theorem 18.0.2]{PlanarBook}). 
A variant parameterized by $k \geq 1$~\cite[Lemma 2.1]{ourJACM} achieves a construction time of $O(k n^{1+1/k})$, uses $O(k n^{1+1/k})$ space, and supports the same operations in $O(k \log\log n)$ time.
Furthermore, the MSSP structure can support maintaining a centroid decomposition of the shortest path trees \citep{goodrich1998dynamic}.

\paragraph{Separators and recursive decompositions.}

An $r$-division~\cite{Frederickson87} of a planar graph, for $r \in [1,n]$, is a decomposition of the graph into $O(n/r)$ regions, each of size $O(r)$, such that each region has $O(\sqrt{r})$ boundary vertices, i.e.~vertices that also belong to another region.
Another desired property of an $r$-division is that the boundary vertices lie on a constant number of faces of the region (called holes).
We denote the boundary vertices of a region $R$ by $\partial R$. 

It was shown in~\cite[Theorem 4]{KleinMS13} that, given a strictly decreasing sequence of numbers $(r_m, r_{m-1}, \ldots, r_1)$, where $r_1$ is a sufficiently large constant, and $r_m=n$, we can obtain the $r_i$-divisions for all $i$ in time $O(n)$ in total.
For $j<i$, the $r_j$-division is a refinement of the $r_i$-division.
For convenience we define the only region in the $r_m$ division to be $G$ itself, and the $r_0$ division to be single-vertex regions. 
We denote by $\mathcal{R}$ the set of all regions at all levels of this recursive decomposition, and by $\mathcal{R}_i$ the set of all regions of the $r_i$-division. 

We assume that  the boundary vertices of each region lie on a single hole (face of the region that is not a face of the original graph) that is a simple cycle. The outside of this hole with respect to a region $R$ is $G\setminus(R \setminus \partial R)$ and to simplify notation we denote it by $R^{out}$. 
The single hole assumption is not valid in general. 
However, for our case it can be made without loss of generality.
It is shown in \cite[Section 4.3]{CharalampopoulosGMW19} and \cite[Section 9]{ourJACM} that for the of oracles considered there (which are the oracles we construct in this paper), the case of multiple holes that are not necessarily simple cycles reduces, with no asymptotic overhead, to the single-hole case. The exact same reduction applies for the results in this paper. 

For $i<m$ and a region $R \in \mathcal R_i$, the parent of $R$ is the region $R' \in \mathcal R_{i+1}$ that contains $R$.

\paragraph {Voronoi diagrams.}
Recall that we assume that $G$ is connected, triangulated, and has constant degree. 
Let $H$ be a subgraph of $G$ with an infinite face $h$ such that $h$ is the only face of $H$ that is not necessarily a triangle.
Let $x$ be a vertex of $G$ such that $x \notin H$ or $x\in h$.
Note that every path (in $G$) from $x$ to a vertex of $H$ intersects $h$.
For every vertex $v \in H$, let $\weight_x(v)$ denote the distance $x$-to-$v$ in $G$.

\begin{definition}
The Voronoi diagram of $x$ within $H$ with additive weights $\weight_x(\cdot)$, denoted $\VD[H,h,\weight_x]$,
is a partition of $V(H)$ into pairwise disjoint sets, one set $\Vor(s)$ for each site $s \in h$. The set $\Vor(s)$, which is called the Voronoi cell of $s$, contains all vertices in $V(H)$ that are closer (w.r.t. additive weights) to $s$ than to any other site in the set of sites $S$.
Ties are always broken consistently, in favor of sites with larger additive weights --- formally, ties are broken with respect to reverse lexicographic order on $(\weight_x(s),s)$.
\end{definition}

We say that an edge $e$ is strictly inside $\Vor(s)$ of $\VD$ (the Voronoi diagram $\VD[H,h,\weight_x]$) if both primal vertices incident to $e$ are in $\Vor(s)$. Similarly, a face $f$ is strictly inside $\Vor(s)$ of $\VD$ if all 3 vertices incident to $f$ are in $\Vor(s)$.

\begin{lemma} [Lemma 2.1 of~\cite{GawrychowskiKMS18}] \label{lem:VD_shortest_path_color}
    For each $s\in h$, the vertices in $\Vor(s)$ form a connected subtree (rooted at $s$) of the shortest path tree $T_s$ of $H$.
\end{lemma}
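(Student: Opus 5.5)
The plan is to show that $\Vor(s)$ is closed under taking ancestors in $T_s$; together with $s \in \Vor(s)$, this makes $\Vor(s)$ a connected subtree of $T_s$ containing the root $s$.

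First, $s\in\Vor(s)$. For any other site $s'$, the inequality $\weight_x(s') + \dist_H(s',s) \ge \weight_x(s)$ holds, since $\weight_x$ is a $G$-distance from $x$ and $H\subseteq G$, so this is the triangle inequality in $G$. We must also check the tie-breaking rule. If equality holds, then $s'$ precedes $s$ on a shortest $x$-to-$s$ path, and with nonnegative weights and unique shortest paths we get $\weight_x(s')\le \weight_x(s)$. Ties are broken in favor of larger $(\weight_x(\cdot),\cdot)$, and zero-length edges could make these values equal, so the degenerate case must be handled. The simplest fix is to argue through the perturbation: with unique shortest paths and strictly positive perturbed lengths, equality forces $\weight_x(s') < \weight_x(s)$, so the tie is broken toward $s$.

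The main step: let $v\in\Vor(s)$ and let $u$ be any vertex on the $T_s$ path $T_s(s,v)$. Suppose, for contradiction, that $u\in\Vor(s')$ for some $s'\ne s$, i.e.\ $(\weight_x(s')+\dist_H(s',u), s') $ beats $(\weight_x(s)+\dist_H(s,u), s)$ in the order (smaller distance first, then ties by reverse lexicographic order on $(\weight_x,\text{site})$). Then the concatenation of the shortest $s'$-to-$u$ path with the subpath of $T_s$ from $u$ to $v$ gives
\[\weight_x(s')+\dist_H(s',v)\le \weight_x(s')+\dist_H(s',u)+\dist_H(u,v).\]
Since $u$ lies on the shortest $s$-to-$v$ path, we also have
\[\weight_x(s)+\dist_H(s,v)=\weight_x(s)+\dist_H(s,u)+\dist_H(u,v).\]
If the comparison at $u$ is strict, then $s'$ is strictly closer to $v$ than $s$, contradicting $v\in\Vor(s)$. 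If the comparison at $u$ is a tie decided by the tie-break, then at $v$ either $s'$ is strictly closer, or the distances tie again and the same (distance-independent) tie-break key favors $s'$. Either way $v\notin\Vor(s)$, a contradiction. The key observation is that the tie-breaking key $(\weight_x(\cdot),\cdot)$ depends only on the site, not on the vertex, which is what makes the tie case go through.

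The only obstacle I anticipate is this bookkeeping of ties, including the case $u=s$ handled in the first paragraph. The rest is the standard subpath-optimality argument.
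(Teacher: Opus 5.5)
Your proof is correct, and it is the standard ancestor-closure (subpath-optimality) argument behind the cited Lemma~2.1, which the paper uses without reproving it. Your handling of ties is also sound: the main step only needs the tie-break key to depend on the site alone, so it works for arbitrary additive weights (as needed when the lemma is applied with general $\omega$ in Section~\ref{sec:simple}), whereas $s\in\Vor(s)$ needs more than uniqueness of shortest paths---for instance your strictly positive perturbed lengths---since otherwise a zero-length arc $s'\to s$ with $\weight_x(s')=\weight_x(s)$, which the paper's degree reduction can create, could leave $\Vor(s)$ empty; the paper leaves this point implicit.
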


Following \cite{CharalampopoulosGMW19}, we use a dual representation $\VD^*$ (i.e. $\VD^*[H, h, w_x]$) of $\VD$ as a tree with $O(|h|)$ vertices (see \cref{fig:VD}): 
Let $H^*$ be the planar dual of $H$. 
Let $\VD^*_0$ be the subgraph of $H^*$ consisting of the duals of edges $uv$ of $H$ such that $u$ and $v$ are in different Voronoi cells. 
Let $\VD^*_1$ be the graph obtained from $\VD^*_0$ by repeatedly contracting edges incident to degree-2 vertices. 
Finally, we define $\VD^*$ to be the tree obtained from $\VD^*_1$ by replacing the node $h^*$ by multiple copies, one for each edge incident to $h^*$. 
The vertices of $\VD^*$ (except the copies of $h^*$) are called {\em Voronoi vertices}, and the edges of $\VD^*$ are called {\em Voronoi edges}.
By definition, each Voronoi edge $\tilde e^*$ of $\VD^*$ corresponds to some path $P^*$ in $H^*$ that was contracted. We say that the edges of $P^*$ are {\em represented} by $\tilde e^*$, and that $\tilde e^*$ is a {\em coarsening} of $P^*$. We associate with $\tilde e^*$ the first and last edges of $P^*$. 
Each Voronoi vertex is dual to a face whose three vertices belong to three different Voronoi cells. 
We call such a face (and its corresponding dual vertex) {\em trichromatic}.
The complexity (i.e., the number of vertices and edges) of $\VD^*$ is $O(|h|)$. 

An important special case is a VD of just two sites $s$ and $t$. In such a VD there are no trichromatic faces. 
In this case, $\VD^*_0$ is just cycle in $H^*$ whose corresponding primal edges form an $st$-cut in $H$. 
We call this cycle the $st$-{\em bisector} and denote it by $\beta^*(s,t)$. 
In a Voronoi diagram with many sites, a Voronoi edge $\tilde e$ separating the cells of sites $s$ and $t$ is a coarsening of a subpath of the bisector $\beta^*(s,t)$. The Voronoi diagram stores a bi-directional mapping between the Voronoi edges $\tilde e^*$ and their corresponding pairs of sites $s$ and $t$.

Another important special case is a VD of three sites (we call this a {\em trichromatic VD}). 
Such a VD has at most one trichromatic face (in addition to face $h$ itself). 
Note that, in a VD with more than two sites,  a trichromatic vertex is a meeting point of the three bisectors between the corresponding pairs of vertices.

Let $T_x$ be the shortest path (in $G$) rooted at $x$.
Let $T^*_x$ be the cotree of $T_x$.
That is, $T^*_x$ is a spanning tree of $G^*$ that consists of all the edges of $G^*$ whose duals are not in $T_x$.
It would be useful for our purposes to relate $\VD^*$, the dual Voronoi diagram of $\VD[H,h,\weight_x]$, and the cotree $T^*_x$ of $T_x$.\footnote{While the definition of the dual Voronoi diagram $\VD^*$ is not new to this work, the characterization of the relations between $\VD^*$ and $T^*_x$ is new.} 
\begin{lemma}\label{lem:VD0-Tx}
    The edges of $\VD^*_0$ that do not belong to $T^*_x$ are exactly the set of edges dual to $\{uv \in T_x \cap H : v \in h\}$.
\end{lemma}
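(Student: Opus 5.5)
The plan is to reduce the statement to a purely primal fact. An edge $e^*$ of $\VD^*_0$ is dual to an edge $uv$ of $H$ whose endpoints lie in different Voronoi cells. Such an edge does \emph{not} belong to $T^*_x$ exactly when $uv\in T_x$. So it suffices to prove the following: for every edge $uv\in T_x\cap H$, oriented so that $u$ is the parent of $v$ in $T_x$, the vertices $u$ and $v$ lie in different Voronoi cells of $\VD[H,h,\weight_x]$ if and only if $v\in h$.

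The key step is a characterization of the Voronoi cell of a vertex. For every $w\in V(H)$, I would show that $w\in\Vor(s)$, where $s$ is the \emph{last} vertex of $h$ on the $x$-to-$w$ path in $T_x$. Such a vertex exists because every path from $x$ into $H$ meets $h$ (and if $x\in h$, then $x$ itself qualifies).

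First, for every site $s'$ we have $\weight_x(s')+\dist_H(s',w)\ge \dist_G(x,w)$, so the minimum additive distance is at least $\dist_G(x,w)$. Second, the suffix of the $T_x$ path from $s$ to $w$ stays inside $H$. The reason is that its internal vertices are vertices of $H$ not on $h$. Since every face of $H$ other than $h$ is a triangle of $G$, all edges of $G$ incident to such a vertex already belong to $H$. Hence $\weight_x(s)+\dist_H(s,w)=\dist_G(x,w)$, so $s$ attains the minimum.

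The step I expect to need the most care is tie-breaking, since another site $s'$ might attain the same value. In that case the concatenation of the shortest $x$-to-$s'$ path with the $s'$-to-$w$ path in $H$ is also a shortest $x$-to-$w$ path. By uniqueness of shortest paths it coincides with the $T_x$ path, so $s'$ lies on it. Because $s$ is the last vertex of $h$ on that path, $s'$ precedes $s$, and hence $\weight_x(s')<\weight_x(s)$: weights are nonnegative, and the perturbation makes the relevant subpath lengths positive. The reverse-lexicographic rule therefore prefers $s$, and the characterization holds.

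With this characterization both directions are immediate.

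If $v\notin h$, then the $T_x$ path to $v$ is the $T_x$ path to $u$ extended by the edge $uv$. The last vertex of $h$ on the path to $v$ is therefore the same as the last vertex of $h$ on the path to $u$. (This includes the case where $u$ itself is that vertex, since $u\in H$.) So $u$ and $v$ lie in the same cell, and $(uv)^*\notin\VD^*_0$.

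If $v\in h$, then $v$ is the last vertex of $h$ on its own path, so $v\in\Vor(v)$. The path to $u$ is a proper prefix that does not contain $v$, so the site of $u$ is different from $v$. Thus $u$ and $v$ lie in different cells, and $(uv)^*\in\VD^*_0\setminus T^*_x$.

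Together these establish the claimed equality of edge sets.
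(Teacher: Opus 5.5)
Your proposal is correct and follows the same route as the paper. The paper's proof likewise rests on the fact that, by the tie-breaking rule, every $w\in H$ lies in the cell of the last site of $h$ on its $T_x$ path, and then observes that a tree edge $uv$ joins different cells exactly when $v\in h$. Your version fills in what the paper's two-line proof leaves implicit: why that characterization holds (the $s$-to-$w$ suffix stays in $H$, and ties favor $s$) and the reverse inclusion.
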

\begin{proof}
Note that the tie breaking rule in the definition of the Voronoi diagram implies that each vertex $v \in H$ belongs to the Voronoi cell of the last site on the shortest path from $x$ to $v$ in $T_x$.  
     Let $e^*$ be an edge of $\VD^*_0$. That is, $e=uv$ is such  that $u$ and $v$ are in distinct Voronoi cells. 
    If $e^* \notin T^*_x$ then $e \in T_x$, but then $u$ and $v$ are in the same Voronoi cell, unless, by the tie breaking rule, $v \in h$. 
\end{proof}

\begin{figure}[H]
    \centering
    \begin{minipage}{0.28\textwidth}
        \centering
        \includegraphics[width=\linewidth]{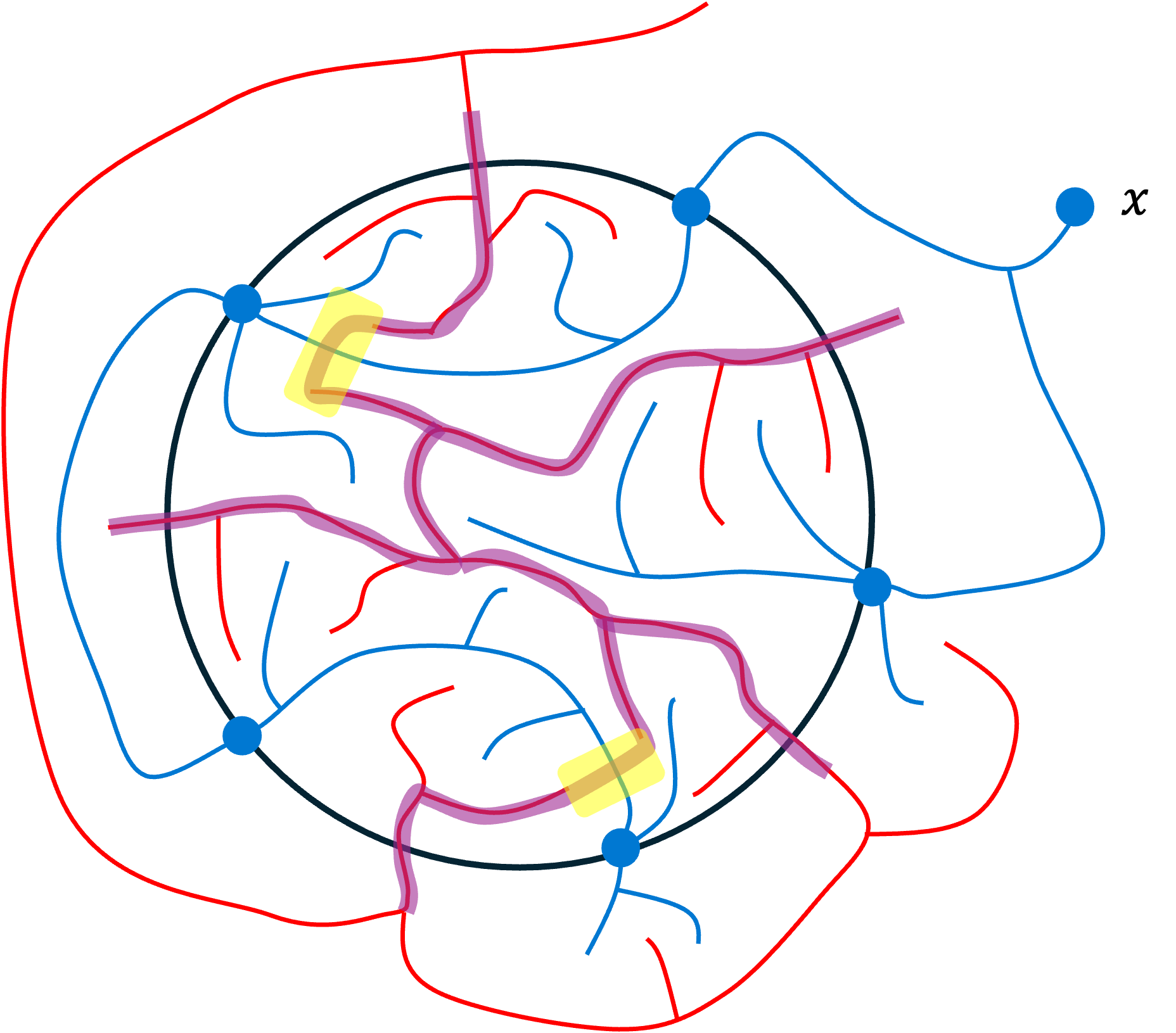}
    \end{minipage}%
    \hfill
    \begin{minipage}{0.7\textwidth}
        \captionof{figure}{%
            A schematic diagram for Lemma~\ref{lem:VD0-Tx}. 
            The shortest path tree $T_x$ is shown in blue, and its dual co-tree $T^*_x$ is shown in red. 
            The distinguished face $h$ is depicted as a black cycle, with five designated sites on $h$ indicated by solid blue nodes.
            The subgraph $\VD^*_0$ is highlighted in thick purple. 
            Edges of $\VD^*_0$ that do not belong to $T^*_x$ are additionally highlighted in yellow. The duals of these edges are exactly the edges of $T_x$ that enter a blue site.
            }
        \label{fig:VD0_tree-forest-edges}
    \end{minipage}
\end{figure}

\begin{lemma} \label{cor:VD-Tx}
An edge of $\VD^*$ that does not represent the dual to any edge of $T_x$ represents a path in $T^*$.

An edge of $\VD^*$ that represents a dual to an edge $e \in T_x$ represents a path of dual edges that comprises of two (possibly empty) subpaths of $T^*_x$ and of the edge $e^*$.
\end{lemma}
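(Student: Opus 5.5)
The plan is to derive the statement directly from \cref{lem:VD0-Tx} and the definition of $\VD^*$ as a contraction of $\VD^*_0$. Every edge $\tilde e^*$ of $\VD^*$ is a coarsening of a path $P^*$ in $\VD^*_0$. We may take $P^*$ to be a maximal path whose internal vertices have degree $2$ in $\VD^*_0$. Its endpoints are trichromatic faces or (copies of) $h^*$. By \cref{lem:VD0-Tx}, each edge of $P^*$ either lies in $T^*_x$ or is dual to an edge $uv\in T_x\cap H$ with $v\in h$. So the whole argument reduces to bounding how many edges of $P^*$ can be of the second kind.

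First case: no edge of $P^*$ is dual to an edge of $T_x$. Then every edge of $P^*$ is in $T^*_x$ by \cref{lem:VD0-Tx}, so $P^*$ is a path in $T^*_x$, which is the first claim. (Here $T^*$ in the statement is read as $T^*_x$.)

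Second case: $P^*$ contains some $e^*$ with $e=uv\in T_x$, $v\in h$. I will show that $e^*$ is the unique such edge on $P^*$. Removing $e^*$ then splits $P^*$ into two subpaths, each consisting only of $T^*_x$-edges, which gives the second claim.

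For uniqueness, I would use the fact that the dual edge $e^*$ of an edge $e=uv$ with $v\in h$ is incident to $h^*$. This holds because $h$ is a face of $H$ and $e$ lies on its boundary. The edge $e$ is on $h$ here, since edges of $H$ entering a boundary vertex $v$ from inside $H$ on a shortest path from outside... but this is not quite automatic. An edge $uv$ with $v\in h$ need not lie on $h$; it could be a chord into $v$ from the interior.

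Here is the better route. Let $e=uv\in T_x$ with $v\in h$ and $u$ in a different cell. By the tie-breaking rule, $u$'s cell is that of the last site on the $x$-to-$u$ path. If $u\in H\setminus h$, that path passes through $v$, so $u\in\Vor(v)$, a contradiction. Hence $u\in h$ as well. Both endpoints of $e$ then lie on the simple cycle $h$. Since $H$ is triangulated inside $h$, $e$ is either an edge of $h$ (so $e^*$ is incident to $h^*$) or a chord of $h$.

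In the first subcase, $e^*$ is incident to a copy of $h^*$, so it is the last edge of its Voronoi edge. Suppose two edges dual to $T_x$-edges lay on the same $P^*$. Then the $T_x$-path between their heads would have to cross $P^*$, contradicting the fact that $P^*$ together with $T_x$ behaves like a tree/cotree pair.

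The main obstacle is making this uniqueness argument rigorous, especially for chords. My approach would be to take two such edges $e_1^*,e_2^*$ on one $P^*$. The subpath of $P^*$ between them consists of $T^*_x$-edges and contains no trichromatic face. Such a subpath separates its two sides into exactly two cells, say $\Vor(s)$ and $\Vor(t)$. The corresponding primal edges $e_1,e_2$ then both go from one of these cells into a site of the other. Both are tree edges entering the sites $t$ (or $s$) from the neighboring cell. I would try to show that only one edge of $T_x$ enters each site $v$, namely its parent edge. Therefore $e_1$ and $e_2$ enter distinct sites $s$ and $t$, with $e_1$ entering $t$ from $\Vor(s)$ and $e_2$ entering $s$ from $\Vor(t)$. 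This would force the $x$-to-$t$ path through $\Vor(s)$ and hence through $s$, and the $x$-to-$s$ path through $t$, which is a cycle in $T_x$ and a contradiction. This cyclic contradiction is the step I expect to be the main one to check carefully.
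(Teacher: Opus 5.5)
Your closing argument is correct and is essentially the paper's proof. Every edge represented by the Voronoi edge separates $\Vor(s)$ from $\Vor(t)$, and by \cref{lem:VD0-Tx} each represented $T_x$-edge is the parent edge of a site in $\{s,t\}$, entered from the other cell. Two distinct such edges would therefore make each of $s,t$ a proper ancestor of the other; the paper phrases this as uniqueness of the $s$--$t$ path in $T_x$.

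You should delete the intermediate ``better route'', because it reverses the orientation of $e$. In \cref{lem:VD0-Tx} the endpoint $v\in h$ is the child, so the $x$-to-$u$ path does not pass through $v$. The parent $u$ can be an interior vertex of $H$, e.g.\ when the shortest path to a site $t$ enters $H$ at $s$ and reaches $t$ through the interior. So the claim that both endpoints of $e$ lie on $h$ is false. Neither that claim nor the tree/cotree crossing heuristic is needed for your final argument.
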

\begin{proof}
    Let $\tilde e^*$ be an edge of $\VD^*$. Let $\beta^*(s,t)$ be the bisector whose subpath is represented by $\tilde e^*$.
    For every edge $d^*$ that is represented by $\tilde e^*$, one endpoint of $d$ is in $\Vor(s)$ and the other is in $\Vor(t)$. 
    Hence, if $d \in T_x$ then $d \notin T^*_x$, so, by \cref{lem:VD0-Tx}, one endpoint of $d$ must be either $t$ or $s$.
    It follows that there is a path in $T_x$ between $s$ and $t$ that includes $d$.
    Since there is at most a single path between $s$ and $t$ in $T_x$, there can be at most one such edge $d^*$.
    Since, by \cref{lem:VD0-Tx} all the other edges represented by $\tilde e^*$ are edges of $T^*_x$, the lemma follows.
\end{proof}

\paragraph{Point Location in Voronoi Diagrams} \label{par:point_location_discussion} 
A point location query for a vertex $v$ in a Voronoi diagram stored for some vertex $x$ returns the site $s$ s.t. $v \in \Vor(s)$, and the additive distance from $s$ to $v$. 
In~\cite{ourJACM}, point location queries are implemented in the context of a recursive decomposition of $G$ by a recursive procedure called {\sc CentroidSearch}~\cite[Algorithm~4]{ourJACM}.
Let $R_i$ denote the region of $\mathcal{R}_i$ that contains $x$. 
The point location procedure of~\cite{ourJACM} actually identifies (and can return), for each level $i$ of the recursive decomposition such that the shortest $x$-to-$v$ path crosses $\partial R_i$, the last site $s_i \in \partial R_i$ along the shortest path to $v$, together with the additive distance from $s_i$ to $v$.

\paragraph{Components of the oracle of \cite{ourJACM}}

To establish \cref{thm:main} our algorithm efficiently computes all components of the distance oracle of \cite{ourJACM}.
We now describe components (A)–(C), which are the relevant parts for the computation of the Voronoi diagrams used by the oracle.
The remaining components (D)–(E) are not related to the construction of the Voronoi diagrams. 
We defer to \cref{sec:jacm_complete} the description of these components and their efficient construction, which does not pose significant technical challenges.

The oracle works with a recursive $r$-division with $m$ levels, as described above, with $r_i = n^{i/m}$. For the regions in this recursive decomposition the oracle includes the following parts:

 \begin{enumerate}
    \item[(A)] \textbf{MSSP Structures.} 
    For each $i\in[0,m-1]$ and each region $R_i\in\mathcal{R}_i$ with parent $R_{i+1}\in\mathcal{R}_{i+1}$, 
    the oracle includes an MSSP data structure for 
    the graph $R_i^{out}$, and source set $\partial R_i$.
    However, the structure only answers queries 
    for $s\in\partial R_i$ and $u,v \in R_i^{out} \cap R_{i+1}$.
    Rather than represent the \emph{full} SSSP tree from each root on $s\in \partial R_i$, 
    this MSSP data structure only stores the tree induced by $R_i^{out} \cap R_{i+1}$, i.e.,
    the parent of any vertex $v\in R_i^{out} \cap R_{i+1}$ is its nearest ancestor $v'$
    in the SSSP tree such that $v' \in R_i^{out} \cap R_{i+1}$.
    If $v'v$ is a "shortcut" edge corresponding to a path in $R_{i+1}^{out}$, it has weight equal to the length of that path.
    Cf. \cref{fig:PartA_SSSP-Tree} for the SSSP tree and its shortcut edges.

\begin{figure}[H]
    \centering
    \begin{minipage}{0.32\textwidth}
        \centering
        \includegraphics[width=\linewidth]{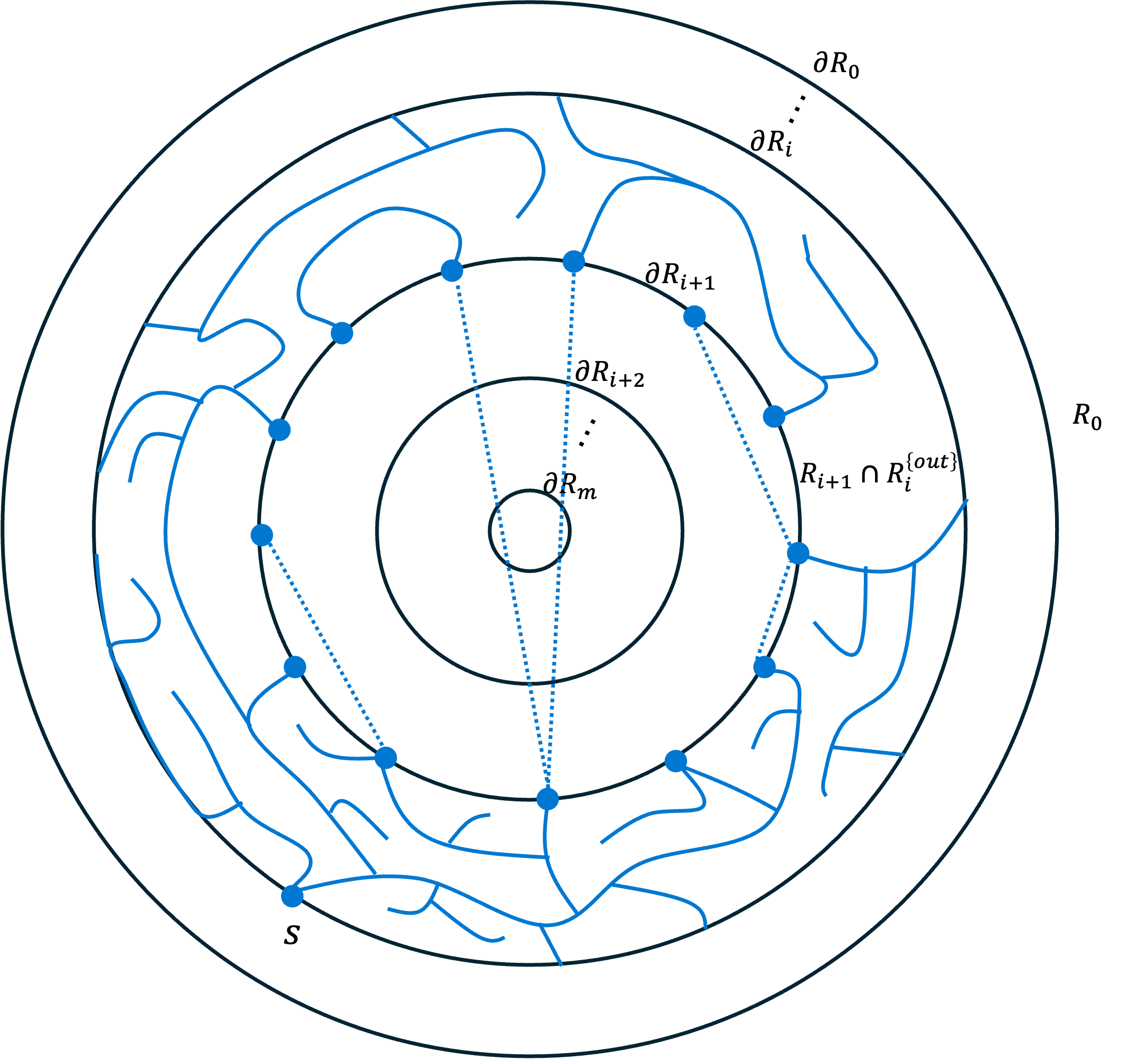}
    \end{minipage}%
    \hfill
    \begin{minipage}{0.66\textwidth}
        \small
            The diagram shows a sequence of regions $\{R_j\}_{j=0}^m$ at different levels of the recursive decomposition, where $R_0$ is some region at the finest level $\mathcal R_0$ of the recursive decomposition, and $R_{j+1}$ is the region of $\mathcal R_{j+1}$ that contains $R_j$.
            The regions' boundaries $\partial R_j$ are depicted as concentric cycles.
            Somewhat non-intuitively, it is easier to depict an embedding in which each region $R_j$ is embedded outside $\partial R_j$;  for instance, $R_0$ corresponds to the infinite face of the diagram, and $R_0^{out}$ corresponds to the disk bounded by $\partial R_0$.
            A vertex $s \in \partial R_i$ is indicated explicitly, and the vertices of $\partial R_{i+1}$ are shown as blue nodes.
            The SSSP tree rooted at $s$ in $R_i^{\text{out}}$, is shown in blue lines. Regular edges in $R_{i}^{out} \cap R_{i+1} $ are indicated by solid blue edges, while shortcut edges in $R_{i+1}^{out}$ are indicated by dotted blue lines.
            Note that the embedding of the shortcut edges does not capture the corresponding shortest path in  $R_{i+1}^{out}$ (neither in the diagram, nor in the MSSP data structure of part (A)).
        
    \end{minipage}
    \captionof{figure}{%
    \label{fig:PartA_SSSP-Tree}
            A schematic diagram of an SSSP tree in part (A) of the oracle from~\cite{ourJACM}.}
\end{figure}

  \item[(B)] \textbf{Voronoi Diagrams.} For each level $i\in[0,m-2]$, for every region $R_i\in\mathcal{R}_i$ with parent $R_{i+1}\in\mathcal{R}_{i+1}$ and for every vertex $q\in\partial R_i$, the oracle includes the dual representation of the Voronoi diagram
  \[
  \VD^*(q,R^{out}_{i+1}) = \VD^*\Bigl[R_{i+1}^{\mathrm{out}},\,\partial R_{i+1},\,\weight_q\Bigr],
  \]
  where the additive weight function is defined by
  \[
  \weight_q(s)=\dist_G(q,s)\quad \text{for } s\in\partial R_{i+1}.
  \]
  
  \item[(C)] \textbf{Additional Voronoi Diagrams.} For each level $i\in[1,m-1]$, for every region $R_i\in\mathcal{R}_i$ and every $q\in\partial R_i$, the oracle also includes
  \[
  \VD^*(q,R_i^{out}) = \VD^*\Bigl[R_i^{\mathrm{out}},\,\partial R_i,\,\weight_q\Bigr],
  \]
  with $\weight_q(s)=\dist_G(q,s)$.
  
\end{enumerate}

\section{Computing a Voronoi Diagram of Three Sites } \label{sec:simple}

We present an efficient algorithm for identifying the trichromatic face $\tilde f$ of a Voronoi diagram with 3 sites $r,g,b$ incident to a face $h$ in a graph $H$. 
As we mentioned in the introduction, this is the basis of the results in this paper since computing a Voronoi diagram with $S$ sites reduces to finding the trichromatic face of $\Otild(S)$ Voronoi diagrams with just 3 sites each.

It was shown~\cite{Cabello19} that such a Voronoi diagram with just 3 sites has at most a single trichromatic face (other than $h$ itself).
To avoid clutter we assume that the trichromatic face indeed exists, but our algorithm does identify the case that there is no trichromatic face. 
We also assume in this section that we are given an MSSP for the sites $r,g,b$ with respect to $H$, which represents the shortest path trees $T_r, T_g, T_b$, rooted at the sites, and answers distance 
queries on the primal trees, and level ancestor queries on their cotrees $T^*_r, T^*_g, T^*_b$ in $D$ time per query (when the trees are represented using persistent link-cut trees, $D=O(\log n)$). This assumption will be removed in \cref{sec:complete}.

\begin{theorem}\label{thm:SimpleVoronoiSite}
Let $H$ be a directed planar graph with real arc lengths, $n$ vertices, and no negative length cycles. 
Let $r, g, b$ be three sites that lie on a single face $h$ of $H$. Assume that all faces of $H$ except for $h$ have size 3. Given an MSSP for the sites in $H$,
one can find the trichromatic face $\tilde f$ of $\VD^*[H, \{r, g, b\},\omega]$ in $O(\log^2(|H|)\cdot D)$ time.
\end{theorem}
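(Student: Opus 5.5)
We follow the approach sketched in the introduction: a centroid search on the shortest path tree $T_g$, guided by a local decision rule evaluated on the cotree $T^*_g$. Let $\tilde f$ be the trichromatic face and let $\tilde v$ be its vertex in $\Vor(g)$. By \cref{lem:VD_shortest_path_color}, $\Vor(g)$ is a subtree of $T_g$ rooted at $g$. The search will locate $\tilde v$, or more precisely the edge of $T_g$ whose dual is incident to $\tilde f^*$. Once the remaining candidate part of $T_g$ is a single edge, we check its $O(1)$ incident faces by classifying their vertices.

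A vertex $v$ is classified in $O(D)$ time using three MSSP distance queries, comparing $\omega(s)+\dist_H(s,v)$ for $s\in\{r,g,b\}$ with the tie-breaking rule.

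\textbf{Monotonicity lemma.} For an edge $e\in T_g$, let $P^*_1,P^*_2$ be the rootward paths in $T^*_g$ from the endpoints of $e^*$ toward $h^*$. We claim that the faces of $P^*_j$ that lie strictly inside $\Vor(g)$ form a prefix of $P^*_j$, closest to $e^*$. To prove this, suppose a face $f^*$ on $P^*_j$ has a non-green vertex, while its child on the path, $f'^*$, is reached through a cotree edge $d^*$ with $d\notin T_g$. Consider the region enclosed by the fundamental cycle through $d^*$ together with the $T_g$-paths from $g$. Because $\Vor(g)$ is a connected subtree of $T_g$ containing $g$, a non-green vertex adjacent to this region forces every face further from the root, which lies inside the fundamental cycle region, to be separated from $g$'s cell in a way that contradicts connectivity. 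So leaving the green cell is irreversible along the rootward path. I expect to formalize this with a Jordan-curve argument: the primal $T_g$ path to each vertex of a face deeper in the cotree must cross the boundary.

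Given the lemma, the first face on $P^*_j$ (measured from $e^*$) that is not strictly green can be found by binary search. Each step uses a level-ancestor query on $T^*_g$ and $O(1)$ vertex classifications, for $O(\log n\cdot D)$ time in total. The face found, together with its non-green vertex, determines whether that path exits into red or into blue.

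\textbf{Decision rule.} We claim that $\tilde v$ lies in the subtree of $T_g$ below $e$ if and only if both prefixes are nonempty and one path exits into red while the other exits into blue. The reason is that the cycle $C^*_e$ together with $e$ separates the subtree of $e$ from the rest of the graph. The green/red and green/blue bisectors both meet at $\tilde f$. The subtree side contains the green vertex of $\tilde f$ exactly when the two bisector arcs leave the enclosed disk through different sides of $C^*_e$. Equivalently, the boundary of the green cell restricted to the inside of $C^*_e$ changes color, and by planarity a color change in the green cell's boundary happens only at $\tilde f$. The cases with no trichromatic face, or with an empty prefix, mean the subtree is entirely outside $\Vor(g)$ or is bichromatic, and are handled explicitly.

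\textbf{Running time and main obstacle.} Using the edge centroid decomposition of $T_g$ maintained by MSSP, each round picks a centroid edge $e$, runs the two binary searches, and discards one side. There are $O(\log n)$ rounds, each costing $O(\log n\cdot D)$, for $O(\log^2|H|\cdot D)$ overall. The main obstacle is proving the "if and only if" rule rigorously, since the bisectors may wind. We plan to argue via the tree structure of $\VD^*$ (\cref{cor:VD-Tx}): inside the disk bounded by $C^*_e$, the Voronoi edges form a subtree whose leaves lie on $C^*_e$, and this subtree has a degree-3 node exactly when red and blue exits both occur.
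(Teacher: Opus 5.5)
Your overall plan (centroid search on $T_g$, binary search for the first non-green element of each rootward cotree path, a local decision rule) is the paper's. However, the rule you state is incomplete, and it answers wrongly in configurations the paper must treat separately.

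The face $h$ is not a triangle and contains all three sites. So a rootward path $P^*_j$ can stay inside $\Vor(g)$ until it enters $h^*$ through some edge $e'_j$ of $h$. It then leaves the green cell ``through $h$'' and has no red or blue exit vertex. In your face-based formulation, $h^*$ is the first face that is not strictly green, but it has vertices of every color, so ``its non-green vertex'' determines nothing.

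Concretely, take $q^*=h^*$, both paths green all the way to $h$, and suppose the arc of $h$ on the $T_g(v)$ side of $C^*_e$ contains $r$ and $b$. Then all three points where cell boundaries meet $h$ lie on that arc. Every face of $C^*_e$ other than $h^*$ is strictly green, so the star $\VD^*$, and hence $\tilde f$, lies on the $T_g(v)$ side. Yet neither path exits into red or blue, so your rule says no. Your heuristic that the green cell's boundary changes color only at $\tilde f$ overlooks that this boundary also runs along $h$.

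The paper handles this in its cases 2 and 4. For each path that reaches $h^*$ without a critical edge, it takes the two sites nearest to $e'_j$ along $h$ as that path's exit colors. When $q^*\neq h^*$ it uses the vertices of $q^*$ instead; your choice of running both paths all the way to $h^*$ would recover that case automatically.

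Second, requiring both prefixes to be nonempty is wrong, because the critical edge may be the first edge of $P^*_j$. Suppose $v$ is green of degree $3$, with neighbours $u$ (green, its parent), $w_1$ (red) and $w_2$ (blue). Neither face adjacent to $e=uv$ is strictly green, yet $v\in T^e$ is the green vertex of the trichromatic face $vw_1w_2$. The paper's case 1 imposes no such condition.

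Finally, your sketch of the monotonicity lemma argues in the wrong direction. It tries to force faces \emph{further} from the root to be non-green, whereas the green part is the prefix nearest $e^*$. The clean argument is as follows. If a dual edge $a^*$ of $P^*_j$ is green, the primal fundamental cycle of $a$ w.r.t.\ $T_g$ consists of two green $T_g$-paths from $g$. That cycle encloses the whole prefix of $P^*_j$ up to $a^*$, so by \cref{lem:VD_shortest_path_color} that prefix is green.
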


We prove~\cref{thm:SimpleVoronoiSite} by providing an algorithm, described in \cref{alg:s_centsearch}. 
We say that a vertex $v$ is {\em green} if $v$ is closer to the site $g$ (w.r.t. additive weights $\omega$) than to the sites $b,r$.
Note that checking whether a vertex is green can be done in $O(D)$ time using 3 MSSP distance queries, one from each site.
We say that an edge $e$ of $H$ is green if both endpoints of $e$ are green.
We say that a dual edge $e^*$ of $H^*$ is green if the primal edge $e$ is green.

To find the trichromatic face $\tilde f$, the algorithm finds the green vertex of $\tilde f$ by successively eliminating portions of the shortest path tree $T_g$ rooted at the green site $g$ using a centroid decomposition of $T_g$.
For each centroid edge $e=(uv)$, the algorithm determines whether $T^e \stackrel{\text{def}}{=}  T_g(v) \cup e$, the union of the centroid edge and the subtree of $T_g$ rooted at the leafward endpoint $v$ of $e$, contains an edge incident to the green vertex of $\tilde f$ or not. If it does, the algorithm recurses on $T^e$. Otherwise it recurses on the subtree $T_g \setminus T_g(v)$. 

We next describe how to decide which subtree of $T_g$ to recurse on. 
The subtree $T_g(v)$ induces a partition of $H$ into two connected components; The subgraph of $H$ induced by the vertices of $T_g(v)$, and the subgraph of $H$ induced by the remaining vertices of $H$.  
The two components are separated by the fundamental cycle of $e^*$ with respect to $T^*_g$, which is comprised of two rootward dual paths $P^*_1, P^*_2$.

\paragraph{Critical edges}\label{par:critical_edge_definition}
We prove in \cref{lem:TwoColorMonotonicity} that each of the two rootward dual paths $P^*_j$ has a (possibly empty) prefix of green edges. Furthermore, no edge of the remaining suffix of $P^*_j$ is green.
The procedure {\sc SimpleFindCritical}, described in \cref{alg:s_CriticalEdges}, utilizes this monotonicity property to identify, using binary search on $P^*_j$, the first non-green edge of $P^*_j$, if it exists. We call this edge the {\em critical edge} of $P^*_j$. Together we call the (at most) two critical edges, one of $P^*_1$ and the other of $P^*_2$, the critical edges of $e$ w.r.t. $T_g$. Refer to \cref{fig:VoronoiVertex_positive_a} for and illustration.

We can decide which subtree of $T_g$ to recurse on by inspecting the critical edges. However, since in degenerate cases the critical edges may not exist, we must consider several cases.\footnote{To assist the reader, we note that case (1) is the natural case to consider, and the others may be considered edge cases.  
Understanding just case (1) gives the main intuition for the algorithm and its correctness.} 
Let $q^*$ denote the common dual vertex of $P^*_1$ and $P^*_2$. That is, $q^*$ is the LCA of the endpoints of $e^*$ in $T^*_g$.

\begin{enumerate}
    \item Both critical edges exist.
    \item Exactly one critical edge exists (w.l.o.g. only $e_1$ exists) and $q^*= h^*$.
    \item Exactly one critical edge exists (w.l.o.g. only $e_1$ exists) and $q^* \neq h^*$.
    \item None of the critical edges exists and $q^* = h^*$.
    \item None of the critical edges exists and $q^* \neq h^*$.
\end{enumerate}

Choosing which subtree to recurse on is done by first examining if $u$ (the rootward endpoint of $e$) is a green vertex. If not, then the subtree $T^e$ contains no green vertices nor green edges, and we continue on $T \setminus T_g(v)$.
Else, $v$ is a green vertex, and we evaluate the following simple decision rule {\sc TrichromaticDecisionRule} (\cref{alg:triDec}). The recursion should continue on $T^e$ if exactly one of the following conditions is satisfied:
\begin{enumerate}[itemsep=0pt, topsep=4pt]

    \item Both critical edges exist. Among the primal endpoints of the critical edges $e_1,e_2$ of $P^*_1$ and $P^*_2$, there is a red vertex, a green vertex and a blue vertex.
    Cf. \cref{fig:VoronoiVertex_positive_a}.
    
    \item Only $e_1$ exists, and $q^* = h^*$.
    Let $e'_2\in h$ be the primal edge of the last dual edge of $P^*_2$ leading to $h^*$. Let $s_1$, $s_2$ be the sites nearest to $e'_2$ along $h$.
    In the set of vertices containing $s_1$, $s_2$ and the primal endpoints of $e_1$, there is a red vertex, a green vertex and a blue vertex.
    Cf. \cref{fig:VoronoiVertex_positive_b}.

    \item Just $e_1$ exists, and $q^* \neq h^*$.
    Let $S_1$ be the set of two primal endpoints of $e_1$.
    Let $S_2$ be the set of three primal vertices incident to $q^*$.
    In the set $S_1 \cup S_2$ there is a red vertex, a green vertex and a blue vertex. The diagram shown in \cref{fig:VoronoiVertex_positive_a} illustrates this case as well. (Note that the diagram does not indicate where $P^*_1$ and $P^*_2$ meet, so for this case consider that $q^*$ is either on the red-green bisector or on the blue-green bisector, so one of the $P^*_j$'s is entirely green.)

    \item Neither critical edge exists, and $q^* = h^*$. 
    Let $e'_1, e'_2\in h$ be the primal edges of the last dual edges of $P^*_1$, $P^*_2$ leading to $h^*$, respectively. In the set of sites containing the two nearest to $e'_1$ along $h$, and the two nearest to $e'_2$ along $h$, there is a red vertex, a green vertex and a blue vertex.
    Cf. \cref{fig:VoronoiVertex_positive_c}.
\end{enumerate}
Otherwise, the recursion should continue on $T \setminus T_g(v)$.
 
When \cref{alg:s_centsearch} reaches a subtree that consists of a single edge $e=uv$, it concludes that the trichromatic face $\tilde f$, if exists, is incident to either the vertex $u$ or $v$.\footnote{We terminate the search at a single edge rather than a single vertex to make the description in this section more similar to the description in \cref{sec:complete}.} The procedure then returns the edge and finally we test the $O(1)$ neighboring faces of both vertex candidates, $u$ and $v$, using the MSSP, and either find the trichromatic face $\tilde f$, or determine that $\tilde f$ does not exist.

The tree elimination procedure on $T_g$ consists of $O(\log|H|)$ steps, each invoking a binary search that performs a total of $O(\log |H|)$ queries to the MSSP.\footnote{
At each inspected subtree, computing the decision rule {\sc TrichromaticDecisionRule} (\cref{alg:triDec}) may require identifying the nearest site among $\{r, g, b\}$ to an edge $e'\in h$ at the end of $P^*_1$ or $P^*_2$. 
To support this we assign in the preprocessing stage to each vertex and edge on $h$ its index in the cyclic order along $h$.
At query time, the nearest site is found in $O(1)$ time by checking which of the indices of the $r,g,b$ sites minimizes the difference (modulo the size of $h$) with the index of $e'$.}
This proves the running time stated in \cref{thm:SimpleVoronoiSite}. To conclude the proof of the theorem it only remains to establish the prefix property (in \cref{lem:TwoColorMonotonicity}) and the correctness of the decision rule of \cref{alg:triDec} (in \cref{lem:SimpleTrichromaticDecisionCorrectness}).

\begin{lemma}\label{lem:TwoColorMonotonicity} 
Let $e \in T_g$ be a green edge, and let $P^*_j$ be the two paths comprising the fundamental cycle of $e$ w.r.t. $T^*_g$. 
For each $j\in {1,2}$, the set of edges 
$\{ e' \in P^*_j \mid e' \text{ is green} \}
$
forms a prefix of $P^*_j$.
\end{lemma}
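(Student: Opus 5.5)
The plan is to show that, once a rootward path $P^*_j$ contains a non-green edge, every edge above it on $P^*_j$ is also non-green. Equivalently: if a dual edge $d^*\in P^*_j$ is green, then so is its parent edge on $P^*_j$. The argument uses the ``green region'' $\Vor(g)$ together with \cref{lem:VD_shortest_path_color}: the green vertices form a subtree of $T_g$ containing $g$. Green edges of $T^*_g$ then sit inside this region in a way compatible with the rootward direction.

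First I will fix the geometry of the fundamental cycle. Removing $e=uv$ splits $T_g$ into $T_g(v)$ and $T_g\setminus T_g(v)$. Every dual edge $d^*$ of $P^*_1\cup P^*_2$ is dual to a non-tree edge $d$ that has exactly one endpoint in $T_g(v)$ and one outside. So each $d^*$ on $P^*_j$ corresponds to a primal edge $d=xy$ with $x\in T_g(v)$ and $y\notin T_g(v)$.

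Next I will define, for each such $d$, the closed curve $\gamma_d$ formed by the $T_g$-paths from $g$ to $x$ and to $y$ together with $d$. This is the fundamental cycle of $d$ with respect to $T_g$. In the dual, it separates the subtree of $T^*_g$ below $d^*$ from the root $h^*$; the part not containing $h^*$ is the subtree $T^*_g(d^*)$. If $d'^*$ is an edge of $P^*_j$ closer to the root than $d^*$, then $d^*$ lies in the dual subtree under $d'^*$. Hence the region enclosed by $\gamma_{d'}$ contains the region enclosed by $\gamma_d$, and the cycle $\gamma_d$ lies in the closed disk bounded by $\gamma_{d'}$. This nesting is standard tree–cotree duality.

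The key step is the claim: if $d'$ is green, then $d$ is green. The hypothesis gives that both endpoints of $d'$ are green. By \cref{lem:VD_shortest_path_color}, the whole cycle $\gamma_{d'}$ is then green, since its two tree paths are paths from $g$ to green vertices in $T_g$. Now consider a vertex $z$ strictly inside $\gamma_{d'}$ (on the side away from $h$). Its shortest path from any other site $s\in\{r,b\}$ starts on $h$, outside the disk, so it must cross $\gamma_{d'}$ at some green vertex $w$. Since $w$ is green, $\dist^\omega(g,w)\le \dist^\omega(s,w)$. Then $\dist^\omega(g,z)\le \dist^\omega(g,w)+\dist(w,z)\le \dist^\omega(s,z)$, so $z$ is green, with the tie-breaking handled consistently. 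Therefore every vertex in the closed disk bounded by $\gamma_{d'}$ is green. In particular the endpoints of $d$ are green, which gives the contrapositive form of the prefix property.

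The main obstacle is the planarity and orientation step: verifying that the disk bounded by $\gamma_{d'}$ is exactly the side not containing $h$, that the sites $r,b$ lie on or outside it (they are on $h$, possibly on $\gamma_{d'}$ itself, which is harmless), and that $d$ lies in this disk for the rootward order on $P^*_j$. The directedness of the graph does not matter here, because only the shortest paths from the sites into $z$ are used. I will also have to handle the case where the path reaches $q^*$ or $h^*$, which is routine. Finally, I should double-check the role of the hypothesis that $e$ is green: it covers the first edge $e^*$ consistently, and the argument above does not otherwise need it.
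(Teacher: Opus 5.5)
Your proposal is correct and follows essentially the same route as the paper: the fundamental cycle of a green edge $d'$ (the paper's $a$) with respect to $T_g$ consists of green tree paths by \cref{lem:VD_shortest_path_color}, and it encloses the part of $P^*_j$ before $d'^*$, so everything it encloses is green. Your explicit crossing argument just spells out what the paper compresses into ``following from \cref{lem:VD_shortest_path_color}''. Equivalently, the $T_r$- or $T_b$-path to a non-green enclosed vertex would have to pass through a green vertex of the cycle, which is impossible.
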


\begin{proof}
Note that for any edge $a^*$ of $P^*_j$, the fundamental cycle $C_a$ of $a$ w.r.t. $T_g$ encloses the prefix of $P^*_j$ ending at $a^*$. If $a$ is a green edge then both (primal) endpoints of $a$ are green, hence $C_a$ is comprised of two shortest paths, each from $g$ to a vertex in $\Vor(g)$. Thus, following from \cref{lem:VD_shortest_path_color}, $C_a$ encloses only green vertices. Cf. \cref{fig:TwoColorMonotonicity_main-figure}
\end{proof}

\begin{figure}[H]
    \centering
    \begin{minipage}{0.28\textwidth}
        \centering
        \includegraphics[width=\linewidth]{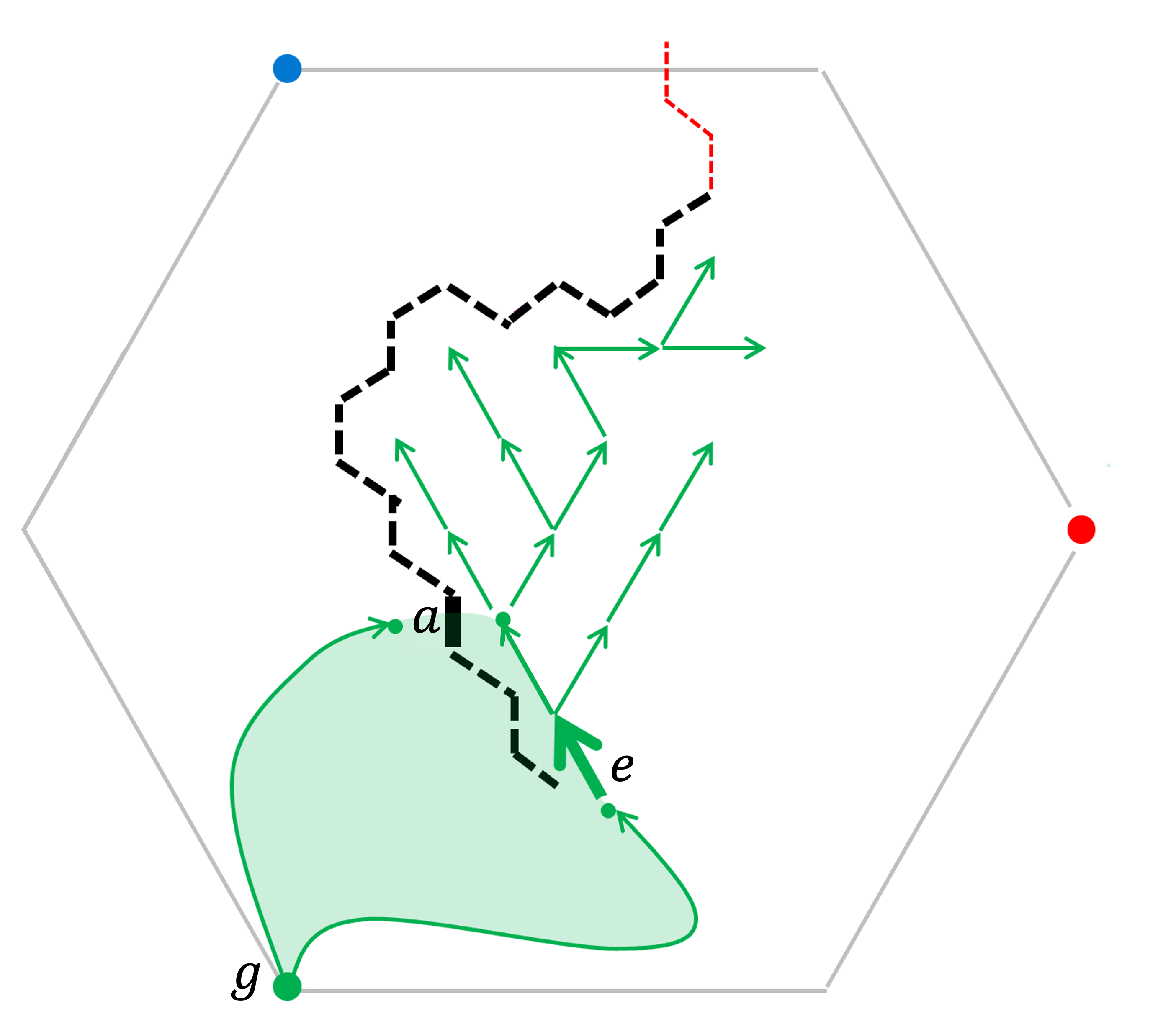}
    \end{minipage}%
    \hfill
    \begin{minipage}{0.7\textwidth}
        \captionof{figure}{%
            A schematic diagram of the Proof of Lemma~\ref{lem:TwoColorMonotonicity}.
            A green edge $e \in T_g$ is indicated by the bold green arrow. One of the paths comprising the fundamental cycle of $e^*$ w.r.t. $T^*_g$, say, $P_1^*$, is indicated by a dashed black line (the remainder of its rootward path in $T^*_g$ is indicated by a dashed red line).
            A green edge $a^*$ on $P^*_1$ is marked in a thick black line.
            The fundamental cycle $C_a$ of $a$ w.r.t. $T_g$ is shaded green. It contains only green vertices and encloses the prefix of $P^*_1$ ending at $a^*$.
        }
        \label{fig:TwoColorMonotonicity_main-figure}
    \end{minipage}
\end{figure}

\alglanguage{pseudocode}
\begin{algorithm}[h]
\caption{\textsc{SimpleTreeElimination}$(T)$
\label{alg:s_centsearch}}
\textbf{Input:} A subtree $T$ of $T_g$, s.t. if there exists a trichromatic face $\tilde f$ in $\VD(\{r,g,b\})$, then $T$ contains the green vertex of $\tilde f$.
\\
\textbf{Output:} An edge $e$ incident to green vertex $v$ if $\tilde f$ exists. 
\begin{algorithmic}[1]
    \If{$T$ consists of a single edge $e$}
        \Return $e$
    \EndIf
    \State Let $e=uv$ be a centroid edge of $T$ \Comment{$u$ is rootward}
    \If{$u$ is not a green vertex} \label{ln:SCS_analogy}
        \State \Return \textsc{SimpleTreeElimination}($T \setminus T_g(v)$)
    \EndIf
    \State $e_1,e_2$ = {\sc SimpleFindCritical($e$)}
    \If{\sc TrichromaticDecisionRule($e_1, e_2$)}
        \State \Return \textsc{SimpleTreeElimination}($T_g(v) \cup e$)
    \Else
        \State \Return \textsc{SimpleTreeElimination}($T \setminus T_g(v)$)
    \EndIf
    \Statex
\end{algorithmic}
\vspace{-0.4cm}%
\end{algorithm}

\alglanguage{pseudocode}
\begin{algorithm}[h]
\caption{\textsc{SimpleFindCritical}$(e)$  
\label{alg:s_CriticalEdges}}  
\textbf{Input:} An edge $e \in T_g$  
\\  
\textbf{Output:} The critical edges of $e$ w.r.t. $T_g$.  
\begin{algorithmic}[1]
    \For{$j \in \{1,2\}$}
        \While{multiple edges remain in $P^*_j$} \Comment{binary search}
            \State Select middle edge $e'_j$ along $P^*_j$
            \If{$e'_j$ is a green edge}
                \State Eliminate the prefix of $P^*_j$ up to and including $e'_j$
            \Else
                \State Eliminate the suffix of $P^*_j$ starting immediately after $e'_j$
            \EndIf
        \EndWhile
        \State Let $e'_j$ be the remaining edge in $P^*_j$ \Comment{null if no edges left}
    \EndFor
    \State \Return $\{e'_1, e'_2\}$
\end{algorithmic}
\end{algorithm}

Now, we turn to address the correctness of the decision rule.

\alglanguage{pseudocode}
\begin{algorithm}[h]
\caption{\textsc{TrichromaticDecisionRule}$(e_1,e_2)$}
\label{alg:triDec}
\textbf{Input:} The critical edge $e_1$ with respect to $P^*_1$ and the critical edge $e_2$ with respect to $P^*_2$.
\\
\textbf{Output:} If the trichromatic face $\tilde f$ of $\VD(\{r,g,b\})$ exists then the output is true if and only if $T^e$ contains the green vertex of $\tilde f$.
Otherwise, the output is either true or false.
\begin{algorithmic}[1]
    \State Let $S$ be the set of primal endpoints of $e_1, e_2$
    \For{$j \in \{1,2\}$}
        \State let $q^*$ be the common dual vertex of $P^*_1$ and $P^*_2$.
        \State let $h^*$ denote the dual vertex corresponding to face $h$.
        \If{($e_j$ is null \textbf{and} $q^*$ is $h^*$)}
            \State Let $e'_j$ be the primal edge of the last dual edge on $P^*_j$ leading to $h^*$. \Comment{$e'_j$ is on $h$}
            \State Let $s^j_1, s^j_2$ be the nearest sites (of $\{ r,g,b \}$) to $e'_j$ along face $h$
            \State Add $s^j_1$ and $s^j_2$ to set $S$
        \ElsIf{($e_j$ is null \textbf{and} $q^*$ is not $h^*$)}
            \State Let $S_j$ be set of vertices incident to $q^*$
            \State Add the vertices of $S_j$ to set $S$
        \EndIf
    \EndFor
    \State \Return true iff in $S$ there is a red vertex, green vertex and a blue vertex.
\end{algorithmic}
\end{algorithm}

\begin{lemma}\label{lem:SimpleTrichromaticDecisionCorrectness}
Let $e_1,e_2$ be the critical edges of $e=uv$ w.r.t. $T_g$, such that $u$ is a green vertex and rootward of $v$. Assume the trichromatic face $\tilde f$ exists. \cref{alg:triDec} returns true if and only if there is an edge $\tilde e \in T^e$ such that one of the endpoints of $\tilde e$ is the green vertex of $\tilde f$.
\end{lemma}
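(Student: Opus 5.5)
The proof works in the plane with the Jordan curve $C$ formed by the fundamental cycle of $e^*$ with respect to $T^*_g$, i.e. $e^*$ together with $P^*_1$ and $P^*_2$, closed up through $q^*$. The curve $C$ separates the vertices of $T_g(v)$ (the inside) from the rest of $H$. The green vertex $\tilde v$ of $\tilde f$ is incident to an edge of $T^e$ iff $\tilde v\in T_g(v)$, or $\tilde v=u$. Since $u$ is green and $u$'s side is the outside, the claim reduces to: $\tilde f$ lies on the inner side of $C$ (or on $C$ with its green vertex in $T^e$) iff the inspected set $S$ contains all three colors.

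\textbf{Step 1: describe the colors along $C$.} By \cref{lem:TwoColorMonotonicity}, each $P^*_j$ starts from $e^*$ with a green prefix. Every primal edge crossed there has both endpoints in $\Vor(g)$. So the portion of $C$ from $e^*$ up to the critical edges $e_1,e_2$ lies strictly inside the green cell. After $e_j$, the path leaves $\Vor(g)$ for the first time. By \cref{lem:VD_shortest_path_color}, the green cell is a connected subtree of $T_g$. Hence the region enclosed by the green prefixes together with the critical crossings meets the boundary of $\Vor(g)$ only at $e_1$ and $e_2$. There the non-green endpoint of $e_j$ has color red or blue. If $e_j$ does not exist, the whole of $P^*_j$ is green, and the relevant exit point is either $q^*$, an interior face whose three vertices are inspected, or $h^*$, where the boundary of the green cell reaches $h$ between the sites nearest to $e'_j$.

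\textbf{Step 2: the topological core.} The dual $\VD^*_0$ restricted to the inner side of $C$ consists of pieces of the bisectors $\beta^*(g,r)$, $\beta^*(g,b)$ and $\beta^*(r,b)$. The boundary of $\Vor(g)$ is a curve formed by the $g$--$r$ and $g$--$b$ bisector pieces, and it switches from one to the other exactly at $\tilde f$. Along $C$, the green-cell boundary is crossed at precisely the two exit points found in Step 1. Those exit points lie on the $g$--$r$ bisector if the non-green color seen there is red, and on the $g$--$b$ bisector if it is blue. The portion of $\partial\Vor(g)$ lying on the side containing $T_g(v)$ therefore runs between the two exits. It contains the switch point $\tilde f$ iff the two exits have different non-green colors, i.e. iff $S$ contains red, green and blue. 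Green is always present, since $e_j$ has a green endpoint, $u$ is green, or $g$ is a nearest site. The ``if'' and ``only if'' directions both follow from the fact that $\partial\Vor(g)$ is a single path in $H^*$ from $h^*$ to $h^*$ (using \cref{lem:VD_shortest_path_color} and the uniqueness of $\tilde f$), so it passes through $\tilde f$ once. I must also check that the switch happening inside $C$ means that $\tilde v$, a green vertex adjacent to $\tilde f$, lies in $T_g(v)$ or equals $u$. Green vertices adjacent to the inside portion of the boundary are inside $C$, except possibly $u$.

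\textbf{Step 3: degenerate cases.} Cases (2)–(4) of the rule come from replacing a missing critical edge by its surrogate. At $q^*=h^*$, the green boundary meets $h$ at the edge $e'_j$. The bisector reaching $h$ there separates $g$ from whichever of $r,b$ is the adjacent site along $h$, which is read off from the two nearest sites. At $q^*\neq h^*$ with both paths green up to $q^*$, the exit happens at $q^*$ itself, and $q^*$ could even be $\tilde f$, which is why all three incident vertices are added. Case (5) (no critical edge and $q^*\neq h^*$) means $C$ is entirely inside the green cell, so $\tilde f$ is not inside, and $S$ correctly lacks a third color.

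\textbf{Main obstacle.} The main difficulty is Step 2's claim that the two exits determine the color of the whole enclosed boundary arc. This needs a careful Jordan-curve argument that $\partial\Vor(g)$ crosses $C$ only at the critical edges or their surrogates. I would prove it by contradiction using the prefix property: any additional crossing would create a green edge after a non-green one on some $P^*_j$.
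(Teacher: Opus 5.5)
Your strategy is sound and genuinely different from the paper's. The paper never treats $\partial\Vor(g)$ as a whole.
\begin{itemize}
\item For the ``true'' direction it builds a $GR$-curve and a $GB$-curve: $T_g(g,u)$, then the green prefix of $P^*_j$, then a bisector piece or the rest of $P^*_j$ down to $h$. These cut $H$ into a $GR$-, a $GB$- and an $RB$-part. By \cref{lem:VD_shortest_path_color} no vertex of the $GB$-part has a red neighbour and no vertex of the $GR$-part a blue one. Hence $\tilde v$ lies in the $RB$-part, which $T_g$ enters only through $e$.
\item For the ``false'' direction it builds a second cycle: the green prefixes closed up by a green--blue bisector piece, a piece of $h$, or through $q^*$. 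This cycle encloses exactly the green vertices of $T_g(v)$, none of which has a red neighbour.
\end{itemize}
You instead derive both directions from one global fact. $\Vor(g)$ and $\Vor(r)\cup\Vor(b)$ are both connected; the latter holds because $\tilde f$ has a red--blue edge, and you should say so, since it is what makes $\partial\Vor(g)$ a bond and hence a simple $h^*$-to-$h^*$ dual path. The non-green colour along this path switches exactly once, at $\tilde f$, and the rule just compares the colours at the two ends of the portion adjacent to $T_g(v)$. This is more conceptual and treats the five cases uniformly. The cost is that all the work moves into global topology, which the paper's explicit local curves avoid.

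That work is not yet done, and the crossing count you sketch is not the right tool. The prefix property does show that after its critical edge $P^*_j$ never re-enters the green side. It does not cover three things:
\begin{itemize}
\item It does not control what happens inside face $h$ when $q^*=h^*$. In case (4) with answer true, $C$ does not cross $\partial\Vor(g)$ at all and the whole path lies on the $T_g(v)$ side, so ``exactly two crossings'' is false.
\item It does not tell you whether the inner portion wraps through $h^*$. The colour of the closed curve $\partial\Vor(g)$ also changes at $h^*$, so comparing end colours is meaningless until this is settled.
\item ``Crossing point'' is ambiguous when a suffix of $P^*_j$ runs along $\partial\Vor(g)$, possibly through $\tilde f$ itself.
\end{itemize}
A cleaner formulation is as follows. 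Let $B$ be the set of edges of $\partial\Vor(g)$ whose green endpoint lies in $T_g(v)$.
\begin{itemize}
\item $\tilde v\in T_g(v)$ iff both boundary edges of $\tilde f$ lie in $B$.
\item $B$ is a contiguous subpath of the $h^*$-to-$h^*$ path. The green prefixes together with $e^*$ are exactly the green edges between $\Vor(g)\cap T_g(v)$ and the rest of $\Vor(g)$, so they form a chord of the green region; alternatively, use the contour-order fact invoked in \cref{subsec:prefix_on_bisector}.
\item One checks case by case that the non-green colours at the two ends of $B$ are exactly what the rule reads. At a critical edge whose green endpoint is not in $T_g(v)$, the end of $B$ is the adjacent boundary edge sharing its non-green endpoint, so the colour agrees.
\end{itemize}

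Two smaller points. First, $q^*$ is never $\tilde f$ when the rule inspects it, because the last edge of an all-green $P^*_j$ gives $q^*$ two green vertices; the paper notes this. Second, your opening reduction does not really absorb the case $\tilde v=u$. Your argument, like the paper's ``false'' direction (which only excludes vertices of $T_g(v)$), proves that the rule returns true iff $\tilde v\in T_g(v)$. When $\tilde v=u$ the rule can return false even though $e\in T^e$ is incident to $\tilde v$. For example, take $v$ a leaf of $T_g$ all of whose neighbours are green, with $\tilde f$ incident to $u$: this is case (5), and all vertices of $q^*$ are green. This is harmless for the elimination, since $u$ survives in $T\setminus T_g(v)$, but the statement you can actually prove is the one with $T_g(v)$.
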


\begin{figure}[h]
    \centering
    \begin{subfigure}{0.3\textwidth}
        \includegraphics[width=\textwidth]{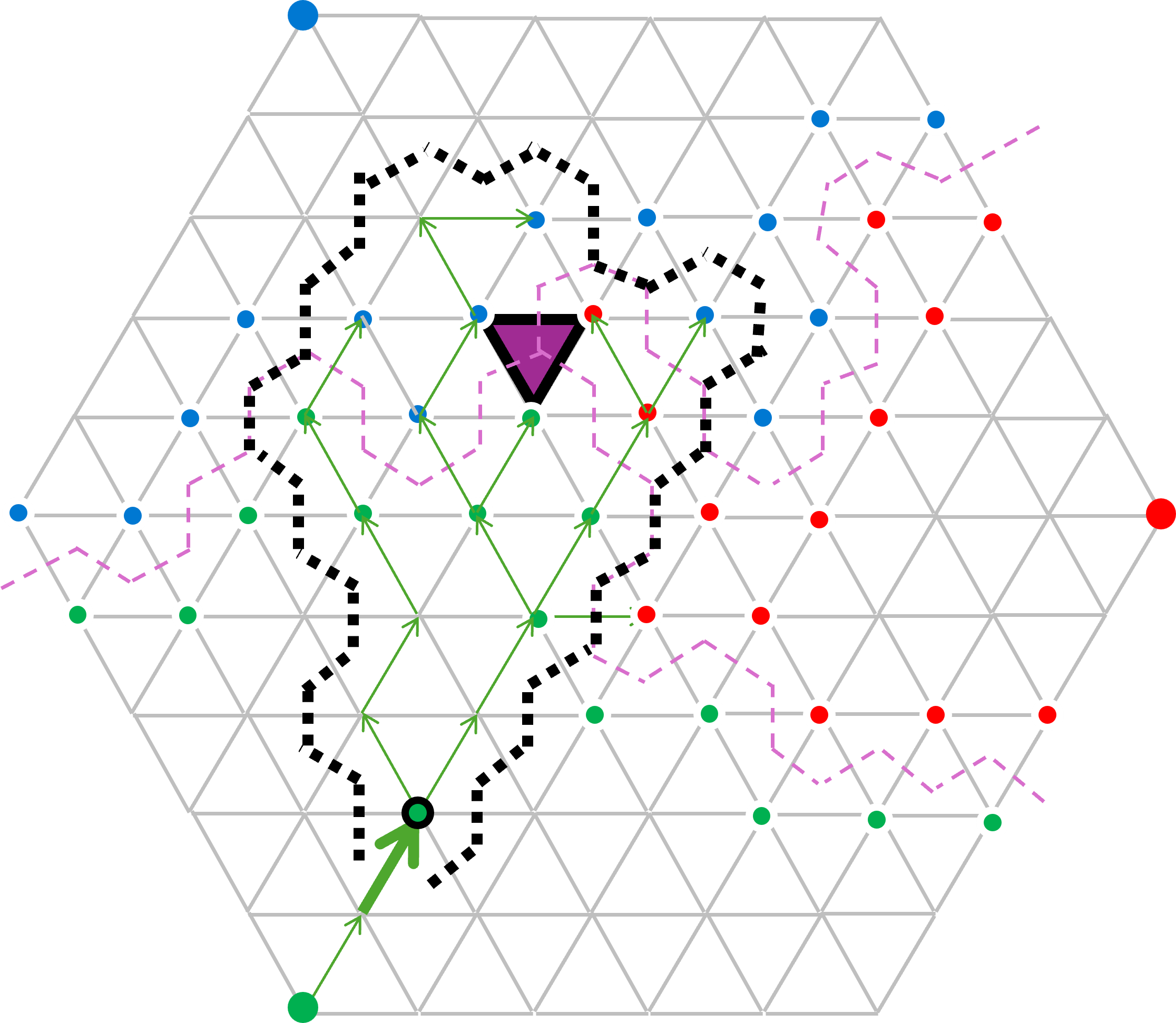}
        \caption{positive example}
        \label{fig:VoronoiVertex_positive_a}
    \end{subfigure}
    \hfill
    \begin{subfigure}{0.3\textwidth}
        \includegraphics[width=\textwidth]{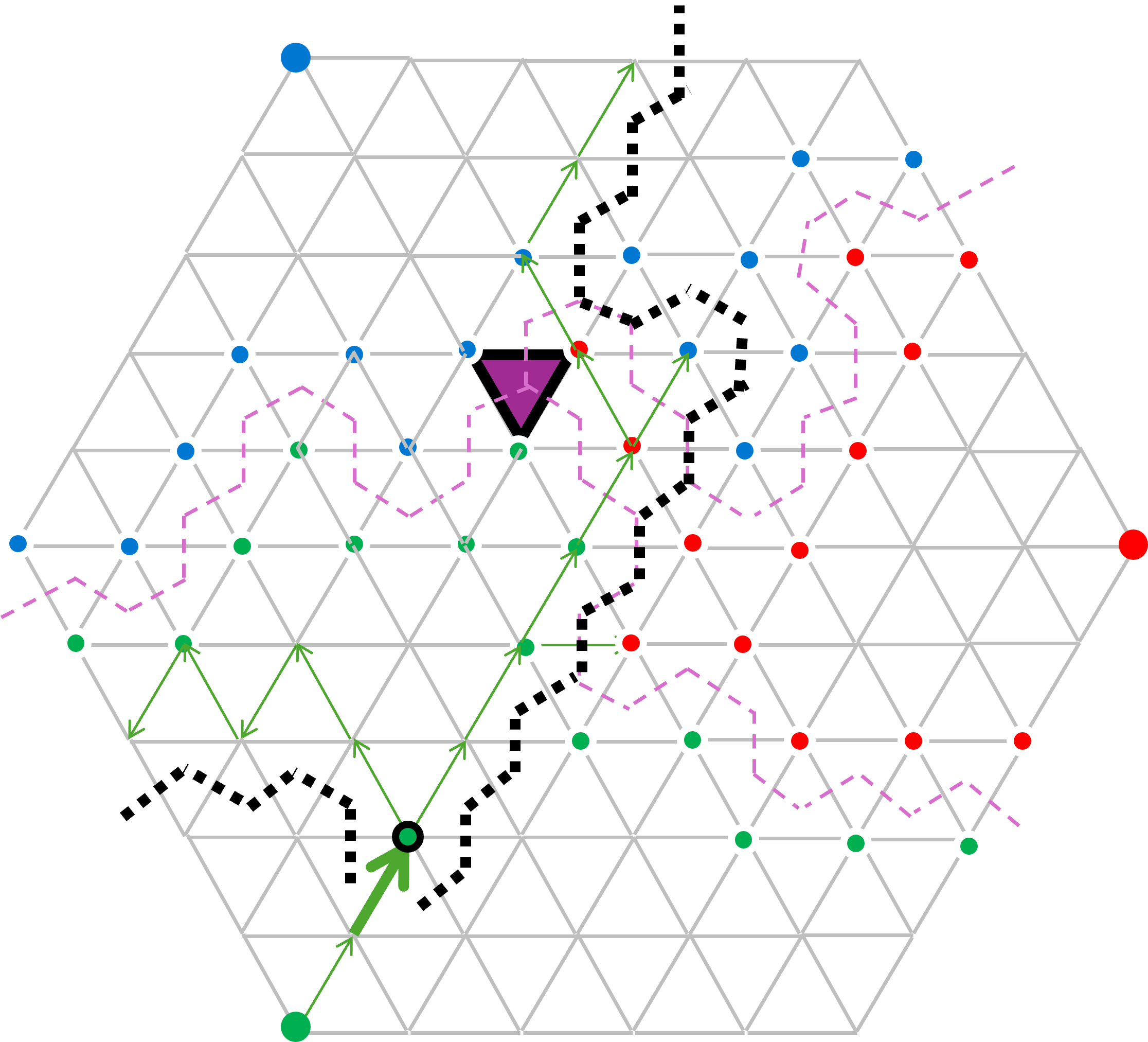}
        \caption{positive example}
        \label{fig:VoronoiVertex_positive_b}
    \end{subfigure}
    \hfill
    \begin{subfigure}{0.3\textwidth}
        \includegraphics[width=\textwidth]{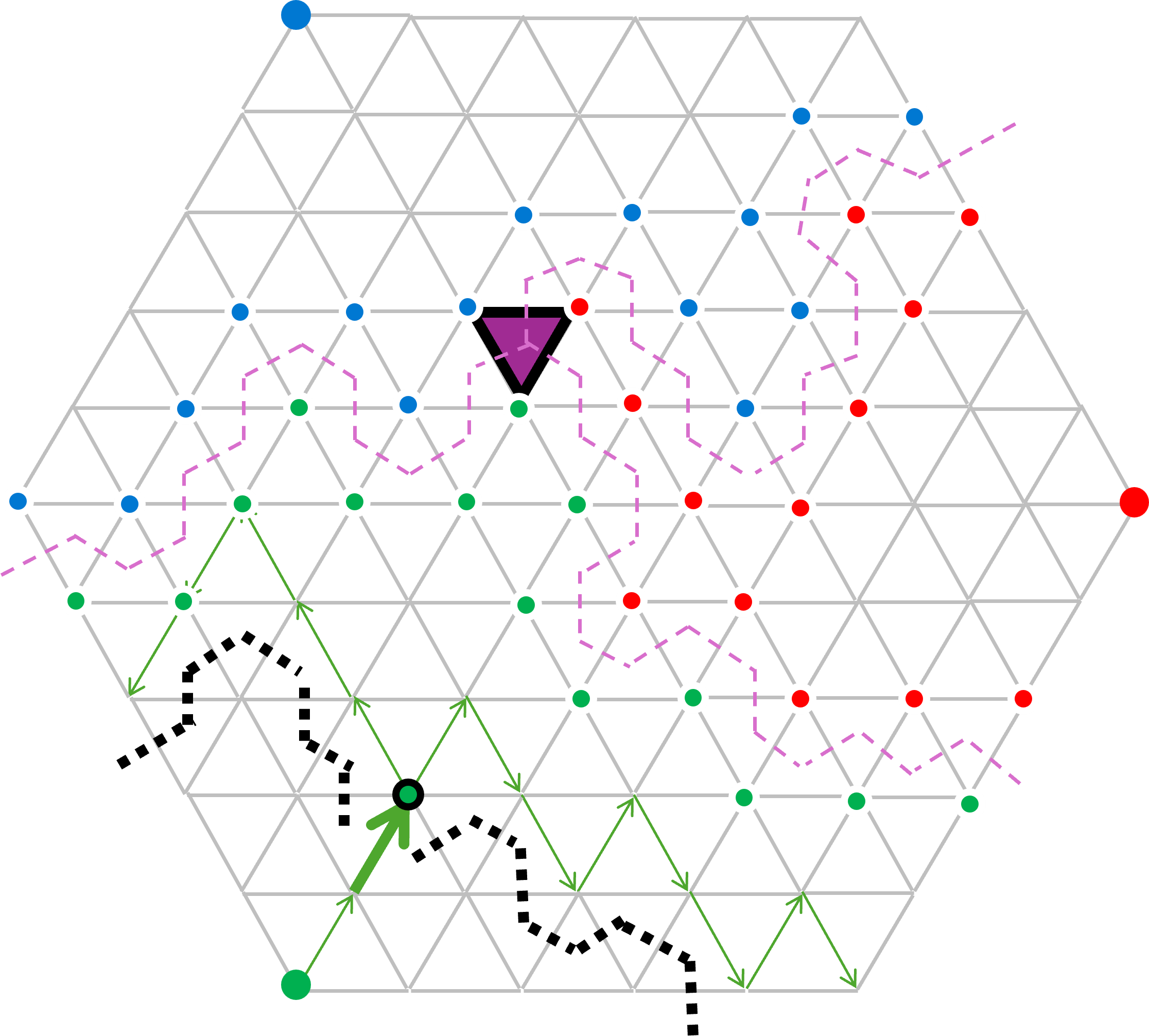}
        \caption{positive example}
        \label{fig:VoronoiVertex_positive_c}
    \end{subfigure}

    \vspace{1em} 

    \begin{subfigure}{0.3\textwidth}
        \includegraphics[width=\textwidth]{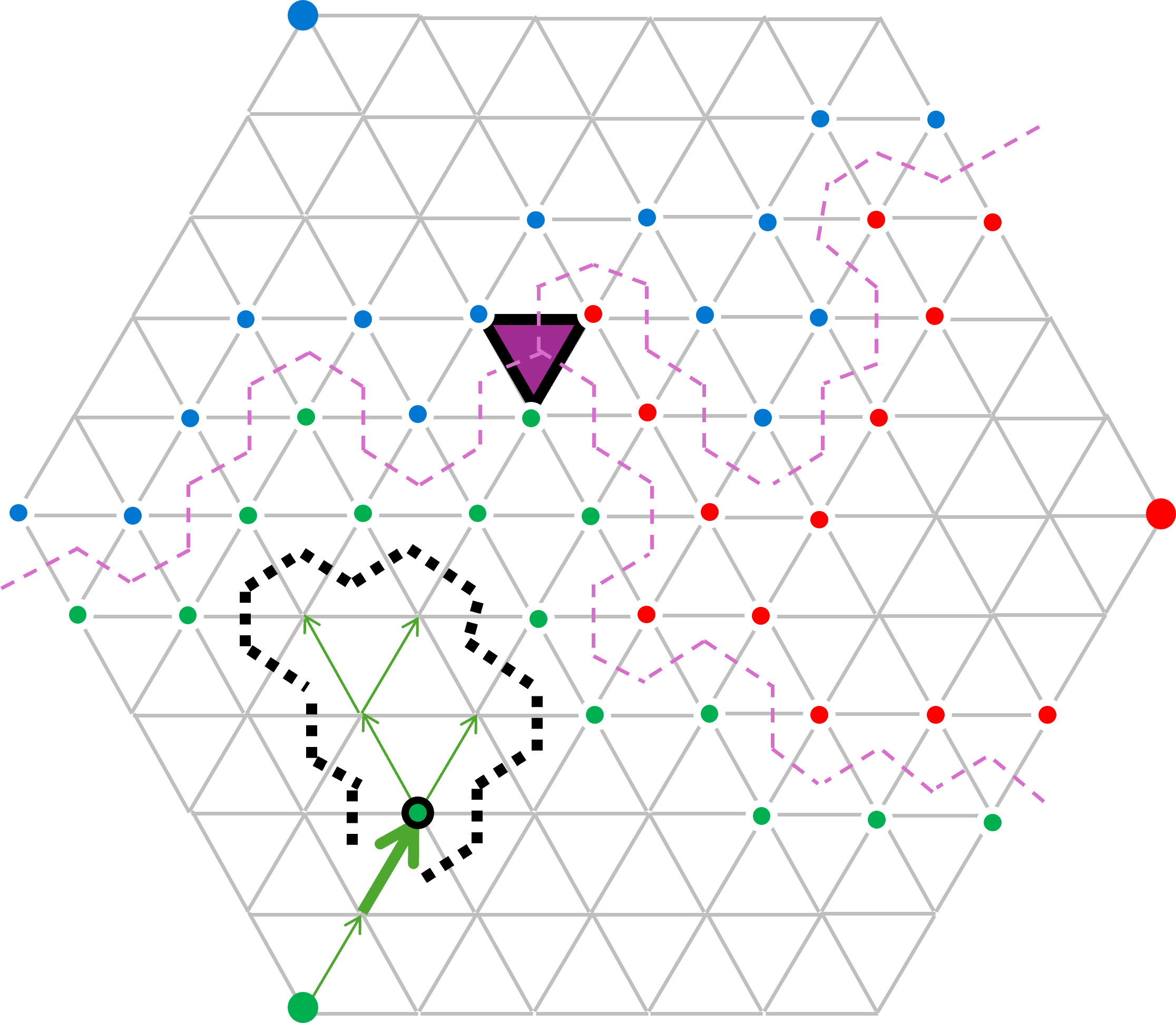}
        \caption{negative example}
        \label{fig:VoronoiVertex_negative_a}
    \end{subfigure}
    \hfill
    \begin{subfigure}{0.3\textwidth}
        \includegraphics[width=\textwidth]{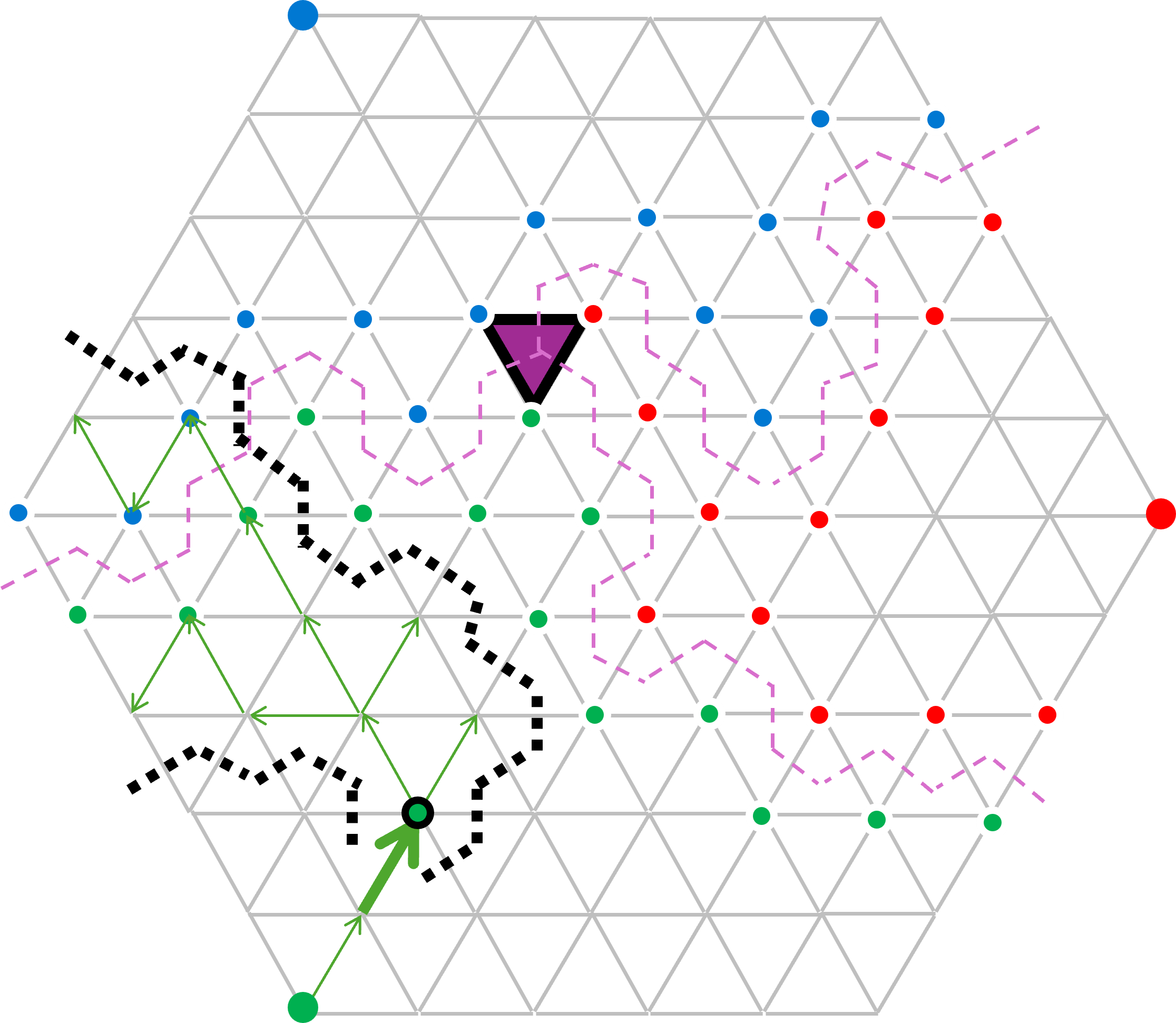}
        \caption{negative example}
        \label{fig:VoronoiVertex_negative_b}
    \end{subfigure}
    \hfill
    \begin{subfigure}{0.3\textwidth}
        \includegraphics[width=\textwidth]{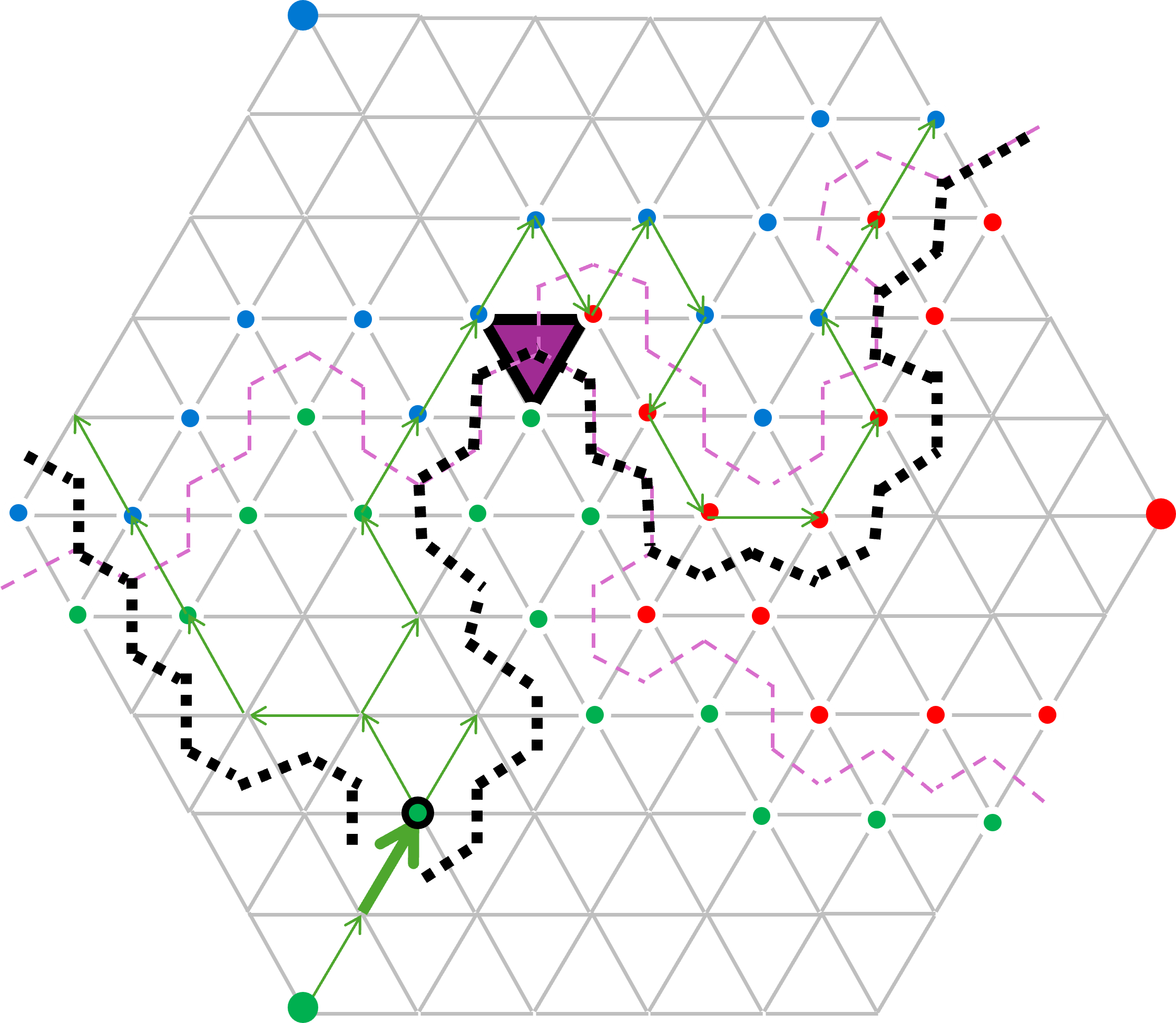}
        \caption{negative example}
        \label{fig:VoronoiVertex_negative_c}
    \end{subfigure}

    \caption{\textbf{Possible cases in the Proof of Lemma~\ref{lem:SimpleTrichromaticDecisionCorrectness}.}
            The bold green arrow represents edge $e$ and the dashed black line represents the fundamental cycle of $e^*$ w.r.t. $T^*_g$. The bisectors are indicated by a thin dashed purple line and the trichromatic face $\tilde f$ is the bold purple triangle.
            In cases (a)-(c) the green vertex of $\tilde f$ belongs to $T^e$, whereas in cases (d)-(f) the green vertex of $\tilde f$ does not belong to $T^e$.}
    \label{fig:VoronoiVertex_main-figure}
\end{figure}

\begin{proof}
Let $S$ be the set represented by the variable $S$ of \cref{alg:triDec} after lines 1--9 have been executed.
That is, $S$ is the set defined as follows. For each $j \in \{1,2\}$,
\begin{itemize}
    \item If the critical edge $e_j$ exists then $S$ contains its primal endpoints. 
    \item Else, if $P^*_j$ ends with $h^*$, let $e'_j$ be the primal edge corresponding to the last dual edge on $P^*_j$. 
    Let $s^j_1, s^j_2$ be the nearest sites (of $\{ r,g,b \}$) to $e'_1$ along face $h$.
    The set $S$ contains the sites $s^j_1, s^j_2$.
    \item Else (i.e., $e_j$ does not exist and  $P^*_j$ does not end with $h^*$), the set $S$ contains the primal vertices incident to the lowest common dual vertex of $P^*_1$ and $P^*_2$.
\end{itemize}

For the first direction, we assume that \cref{alg:triDec} returns true. 
This happens only if all three colors are present among vertices of $S$.
We claim that exactly one of the primal endpoints of $e_j$ (if $e_j$ exists) must be green. 
If the green prefix is not empty then the edge preceding $e^*_j$ on $P^*_j$ is green.
If the green prefix is empty, then $e^*_j$ is the first edge of $P^*_j$, which is incident to $e^*$.
In both cases, since the graph is triangulated, $e_j$ has one primal endpoint (and exactly one since by definition $e_j$ is not green).

We need to consider the various cases. In each case we define appropriate curves that will be used in a unified way to prove that the green vertex of $\tilde f$ is in $T^e$.

\begin{enumerate}
    \item Both critical edges $e_1$ and $e_2$ exist. 
    In this case $S$ consists of just the primal endpoints of $e_1$ and $e_2$. 
    Since each $e_j$ has exactly one green endpoint we may assume w.l.o.g that one endpoint of $e_1$ is red and one endpoint of $e_2$ is blue.
    
    We define a $GR$-curve that starts with $T_g(g,u)$.
    Then continues along $P^*_1$ until reaching $e_1$, then continues along the red-green bisector until reaching the face $h$.
    Similarly, we define a $GB$-curve using $P^*_2$.
    Refer to \cref{fig:Voronoi_Simple_Proof_Regions-figure}.

    \item Exactly one of $e_1$ and $e_2$ exists and $q^*$ is $h^*$. W.l.o.g $e_1$ exists and $P^*_2$ is entirely green. 
    Then the site $g$ is one of the nearest sites to $e'_2$. 
    In this case $S$ consists of the two primal vertices of $e_1$ and the two primal vertices of $e'_2$.
    The edge $e_1$ has exactly one green endpoint, so we assume w.l.o.g. that the other endpoint of $e_1$ is red. 
    This implies that the nearest sites to $e'_2$ are green and blue.

    We define the $GR$-curve as in case (1).
    The $GB$-curve is defined to start with $T_g(g,u)$ and then continue along $P^*_2$ until reaching $h^*$.
    
    \item Exactly one of $e_1$ and $e_2$ exists and $q^*$ is not $h^*$. 
    W.l.o.g $e_1$ exists and $P^*_2$ is entirely green. 
    In this case $S$ consists of the primal endpoints of $e_1$ and the primal endpoints of $q^*$. 
    The edge $e_1$ has exactly one green endpoint, so we assume w.l.o.g. that the other endpoint of $e_1$ is red. 
    Therefore, $q^*$ must have a blue primal vertex (since $S$ has all three colors).
    Note that $q^*$ cannot be the trichromatic face because in this case, $q^* \neq h^*$ is a triangle, so if $q^*$ is trichromatic both $e_1$ and $e_2$ exist. 
    Hence, $q^*$ has exactly two green vertices and one blue vertex.  
    In particular, two dual edges incident to $q^*$ belong to the green-blue bisector.

    We define the $GR$-curve as in case (1).
    The $GB$-curve is defined to start with $T_g(g,u)$ and then continue along $P^*_2$ until reaching $q^*$. Then, the $GB$-curve continues along the blue-green bisector until reaching the $h^*$.

    \item Both $e_1$ and $e_2$ do not exist and $q^*=h^*$. In this case $S$ consists of the nearest sites to $e'_1$ and $e'_2$. 
    The site $g$ is one of the nearest sites to both, and we assume w.l.o.g. that the other nearest site to $e'_1$ is red and the other nearest site to $e'_2$ is blue.  

    The $GR$-curve is defined to start with $T_g(g,u)$ and then continue along $P^*_1$ until reaching $h^*$.   
    The $GB$-curve is defined similarly using $P^*_2$.   

    \item Both $e_1$ and $e_2$ do not exist and $q^* \neq h^*$. 
    We claim this case leads to a contradiction. 
    Since $e$ itself is green, $P^*_1$ and $P^*_2$ are entirely green. 
    Hence $q^*$ has three green vertices.
    Since $q^* \neq h^*$, $S$ would consist of just the primal vertices of $q^*$, which, by triangulation, are all green, contradicting the fact that all 3 colors exist in $S$.

\end{enumerate}

Having defined the curves in each case we complete the proof of this direction in a unified argument that captures all the cases.
Observe that the intersection of the $GB$-curve and the $GR$-curve is exactly the $T_g(g,u)$ prefix of both curves. 
Since both curves start at $g$ and end at $h^*$, the two curves partition $H$ into three parts. 
One part, which we call the $GB$-part is bounded by the the $GB$-curve and the subpath of $h$ going back to $g$ that does not contain the red and blue sites.
The other part, which we call the $GR$-part is bounded by the the $GR$-curve and the subpath of $h$ going back to $g$ that does not contain the red and blue sites.
The third part, which we call the $RB$-part, is the remainder of $H$, and is bounded by the non-common suffixes of the $GR$-curve and the $GB$-curve, connected by the subpath of $h$ that does not contain the green site.

Observe that, by \cref{lem:VD_shortest_path_color}, all vertices in the $GB$-part are green, and no neighbor of a vertex in the $GB$-part is red. 
This is because the red and blue sites do not belong to the $GB$-part, and because all the edges of the $T_g(g,u)$ part of the $GB$-curve are green, the green prefix of $P^*_2$ consists only of green edges, and all the remaining edges along the $GB$-curve either belong to the green-blue bisector (and hence are not incident to a red vertex), or belong to $h$.

The fact that no neighbor of a vertex in the $GB$-part is red implies that the green vertex of the trichromatic face $\tilde f$ is not in the $GB$-part.
A symmetric argument shows that no neighbor of a vertex in the $GR$-part is blue, which implies that the green vertex of the trichromatic face $\tilde f$ is not in the $GR$-part.

It follows that the green vertex of $\tilde f$ can only be in the $RB$-part. 
Since the green site does not belong to the $RB$-part and since the only edge of $T_g$ that crosses the boundary of the $RB$-part is $e$ itself, it follows that the green vertex of $\tilde f$ belongs to $T^e$.
This concludes the first direction of the proof.

\begin{figure}[h]
    \centering
    \begin{subfigure}{0.3\textwidth}
        \includegraphics[width=\textwidth]{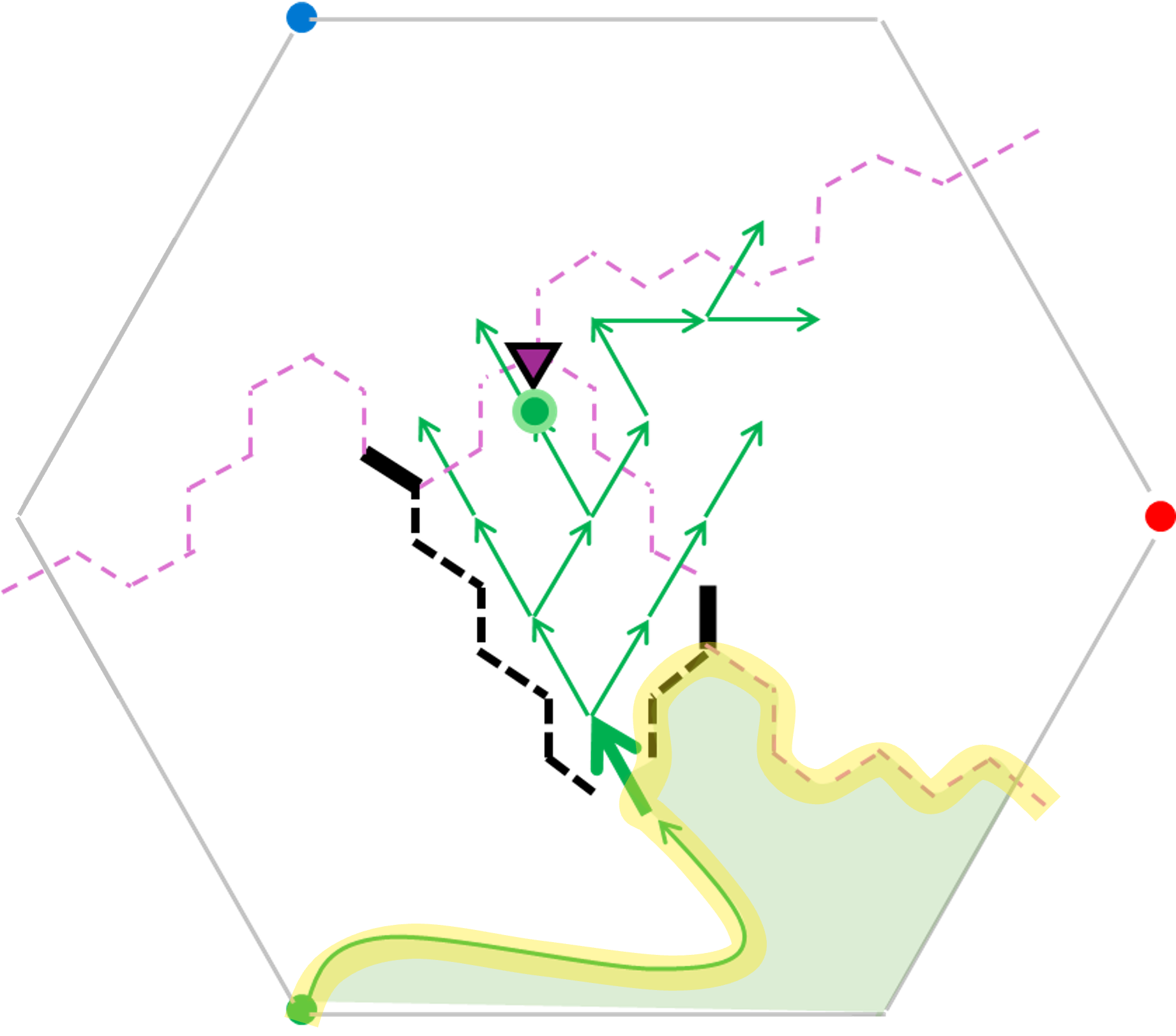}
        \caption{The $GR$-part}
        \label{fig:Voronoi_Simple_Proof_Regions_b}
    \end{subfigure}
    \hfill
    \begin{subfigure}{0.3\textwidth}
        \includegraphics[width=\textwidth]{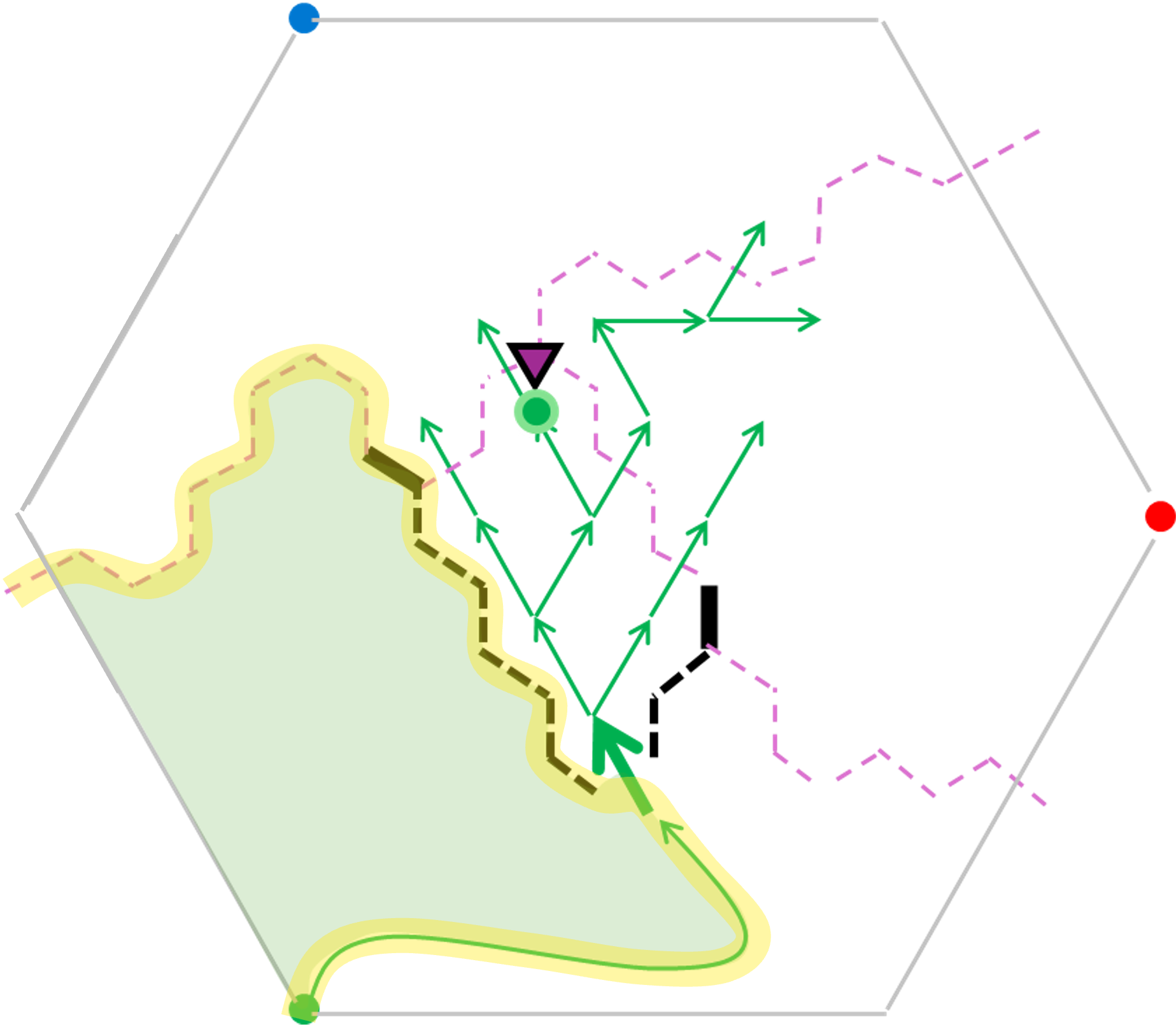}
        \caption{The $GB$-part}
        \label{fig:Voronoi_Simple_Proof_Regions_a}
    \end{subfigure}
    \hfill
    \begin{subfigure}{0.3\textwidth}
        \includegraphics[width=\textwidth]{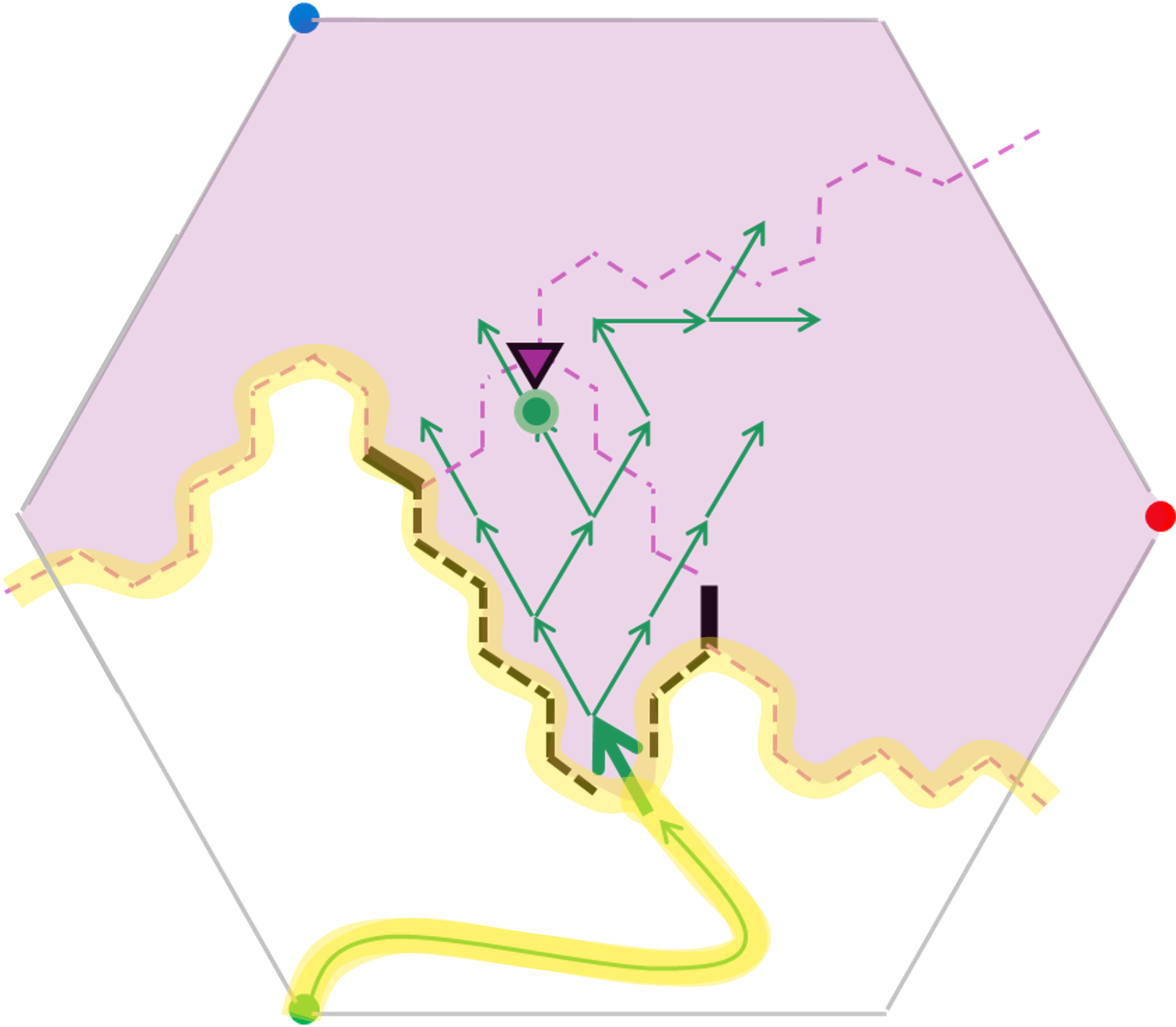}
        \caption{The $RB$-part}
        \label{fig:Voronoi_Simple_Proof_Regions_c}
    \end{subfigure}

    \caption{\textbf{The partition described in the Proof of Lemma~\ref{lem:SimpleTrichromaticDecisionCorrectness}.}
            The shaded areas indicate the parts. 
            The $GR$-curve and $GB$-curve are highlighted in yellow.
            The bold green arrow represents edge $e$ and the dashed black lines represent the the green prefixes of paths $P^*_1, P^*_2$ of the fundamental cycle of $e^*$ w.r.t. $T^*_g$.
            The bisectors are indicated by thin dashed purple lines, and the trichromatic face $\tilde f$ is the bold purple triangle.
            The highlighted green node is the green vertex of $\tilde f$.
            The figures correspond to case (1) where both critical edges exist.
}
    \label{fig:Voronoi_Simple_Proof_Regions-figure}
\end{figure}

For the second proof direction, we assume that \cref{alg:triDec} returns false. 
This happens only if not all three colors are present among vertices of $S$.

We consider the various cases. 
For each case we define an appropriate cycle $C$ in the plane that separates the green vertices of $T_g(v)$ from the rest of $H$. 
The cycle $C$ is then used in a unified way to prove that the green vertex of $\tilde f$ is not in $T^e$.

\begin{enumerate}
    \item Both critical edges $e_1$ and $e_2$ exist.
    Since not all 3 colors appear in $S$, the endpoints of both $e^*_1$ and $e^*_2$ have the same two colors, one being green, and the other, w.l.o.g., blue. 
 
    We define the cycle $C$ to start with $e^*$, then continue along $P^*_1$ until reaching $e^*_1$, then continue along the green-blue bisector until reaching $e^*_2$, finally along the reverse of $P^*_2$ back to $e^*$.
    Cf. \cref{fig:VoronoiVertex_negative_c}.

    \item Only $e_1$ exists and $q^*=h^*$.
    Since not all 3 colors appear in $S$, the endpoints of $e^*_1$ and the two nearest sites to $e'_2$ have the same two colors, one being green, and the other, w.l.o.g., blue. 

    We define the cycle $C$ to start with $e^*$, then continue along $P^*_1$ until $e^*_1$. 
    Then, continue along the green-blue bisector until reaching $h^*$. Then continue along $h$ towards the green site until reaching $e'_2$. Finally, continue along the reverse of $P^*_2$ until getting back to $e^*$.
    Cf. \cref{fig:VoronoiVertex_negative_b}.

    \item Only $e_1$ exists and $q^* \neq h^*$.
    Assume w.l.o.g. that $e^*_1$ has one green vertex and one blue vertex. Moreover, the last edge of $P^*_1$ is not green.
    Since not all 3 colors appear in $S$, none of the primal vertices of $q^*$ is colored red. Since the last edge of $P^*_1$ is not green, $q^*$ has a green vertex and a blue vertex. 

    We define the cycle $C$ to start with $e^*$, then continue along $P^*_1$ until $e^*_1$. 
    Then, continue along the green-blue bisector until reaching $q^*$. Finally, continue along the reverse of $P^*_2$ until getting back to $e^*$.

    \item Both $e_1$ and $e_2$ do not exist and $q^* = h^*$.
    Since not all 3 colors appear in $S$, the two nearest sites to $e'_1$ and to $e'_2$ are the same, one being the green site (since $e'_1$ and $e'_2$ are green edges), and the other, w.l.o.g., the blue site. 

    We define the cycle $C$ to start with $e^*$, then continue along $P^*_1$ until reaching $h$ at $e'_1$.
    Then, continue along $h$ until $e'_2$ using the subpath of $h$ that does not pass through any site. 
    Finally, continue along the reverse of $P^*_2$ back to $e^*$.

    \item Both $e_1$ and $e_2$ do not exist and $q^* \neq h^*$. 
    We define the cycle $C$ to be the fundamental cycle of $e^*$ w.r.t. $T^*_g$. Note that in this case all the edges of $C$ are green. Cf. \cref{fig:VoronoiVertex_negative_a}.

\end{enumerate}

In all cases, by \cref{lem:VD_shortest_path_color}, the cycle $C$ encloses exactly the green vertices of $T_g(v)$, and no neighbor of a vertex enclosed by $C$ is red. 
This is because all edges of $C$ are either green, or edges of the green-blue bisector.  
Since all vertices of $T_g(v)$ have no red neighbor, no vertex of $T_g(v)$ can belong to the trichromatic face $\tilde f$.
\end{proof}

The following corollary is an immediate consequence of using \cref{lem:SimpleTrichromaticDecisionCorrectness} to direct the tree elimination algorithm in {\sc SimpleTreeElimination}. This corollary concludes the correctness proof of \cref{thm:SimpleVoronoiSite}.

\begin{corollary}\label{cor:SimpleCriteriaCorrectness}
{\sc SimpleTreeElimination} correctly finds an edge incident to the green vertex of the trichromatic face $\hat f$, if it exists.
\end{corollary}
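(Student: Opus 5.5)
The plan is to prove the corollary by induction on the recursion of \textsc{SimpleTreeElimination}, maintaining the invariant in its input specification. The invariant is that, if the trichromatic face $\tilde f$ exists, the current subtree $T$ of $T_g$ contains an edge incident to the green vertex $\tilde g$ of $\tilde f$. Initially $T=T_g$ is a spanning tree of $H$. Since $H$ is connected with at least one edge, $\tilde g$ is incident to some edge of $T_g$, so the invariant holds at the root call. Termination and the bound on the recursion depth follow from the centroid choice in \cref{def:centroidDecomposition}: each call recurses on one of the two components obtained by removing the centroid edge $e$, with $e$ added back in the $T^e$ branch. This strictly shrinks the tree while it has at least two edges, so the procedure terminates, in $O(\log|H|)$ levels, at a single edge.

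For the inductive step, fix a centroid edge $e=uv$ of $T$ with $u$ rootward. There are two branches.

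\emph{Case $u$ is not green.} I would argue that $T_g(v)$ contains no green vertex. By \cref{lem:VD_shortest_path_color}, $\Vor(g)$ is a subtree of $T_g$ rooted at $g$. So if a descendant of $v$ were green, its whole $T_g$-path to $g$ would be green, including $u$, a contradiction. Hence $\tilde g\notin T_g(v)$. If $\tilde g=u$, its incident edge $e$ remains available to the algorithm, since $u$ stays in $T\setminus T_g(v)$. Otherwise $\tilde g$ lies in $T\setminus T_g(v)$ with an incident edge there. In either case recursing on $T\setminus T_g(v)$ preserves the invariant. This needs mild care, treating "contains the green vertex" as a statement about vertices of the subtree, which is how the input specification is phrased.

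\emph{Case $u$ is green.} By \cref{lem:TwoColorMonotonicity}, \textsc{SimpleFindCritical} correctly returns the critical edges, since binary search on a monotone green prefix is valid. Then \cref{lem:SimpleTrichromaticDecisionCorrectness} states that the rule returns true iff some edge of $T^e=T_g(v)\cup e$ is incident to $\tilde g$. If it returns true, the recursion on $T^e$ keeps such an edge. If it returns false, no edge of $T^e$ touches $\tilde g$, so $\tilde g\notin T_g(v)\cup\{u\}$. But $\tilde g$ is in $T$ by the invariant, so it lies in $T\setminus T_g(v)$ together with an incident edge there, since $T$ is connected and contains another vertex of that component. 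One small gap to handle: the component $T\setminus T_g(v)$ must be understood relative to the current $T$, not the whole of $T_g$. Because $T$ is a connected subtree, removing $e$ leaves exactly the two pieces $T\cap T_g(v)$ and the rest.

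At the base case, a single edge $e=uv$ is returned and, by the invariant, $\tilde g\in\{u,v\}$. Checking the $O(1)$ faces incident to $u$ and $v$ then either finds $\tilde f$ or certifies that it does not exist. When $\tilde f$ does not exist, the procedure still returns some edge, and the final check correctly reports non-existence.

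The main obstacle is bookkeeping rather than mathematics. \cref{lem:SimpleTrichromaticDecisionCorrectness} is phrased for $T_g$-subtrees, while the recursion works on a shrinking $T$, and the invariant could slip through the boundary vertex $u$. I would handle this by phrasing the invariant as "$\tilde g\in V(T)$". This is preserved in all branches because $u$ survives in both children.
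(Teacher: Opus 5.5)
Your proposal is correct and follows the paper's route; the paper simply calls the corollary an immediate consequence of \cref{lem:SimpleTrichromaticDecisionCorrectness}, and your induction on the invariant $\tilde g\in V(T)$ unpacks exactly that claim, using \cref{lem:VD_shortest_path_color} for the branch where $u$ is not green and reading $T^e$ as $(T\cap T_g(v))\cup\{e\}$. One small slip is your termination remark: if $u$ is the root of $T$ and $e$ is its only incident edge in $T$ (which \cref{def:centroidDecomposition} permits for very small trees), the $T^e$ branch returns all of $T$, so you should pick a centroid whose $u$ has another edge in $T$ (a trivial fix).
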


\section{Construction of the Voronoi Diagrams of the oracle of \cite{ourJACM}}
\label{sec:complete}

We now turn to using the approach of \cref{sec:simple} to efficiently construct components (A)-(C) of the recursive distance oracle of \cite{ourJACM}.
These components were described in the preliminaries.
Component (A) is the MSSP data structure for $R_i^{out} \cap R_{i+1}$ for all regions at levels $0\leq i < m$. 
Constructing it is quite simple. 
The structure for a particular $R_i^{out} \cap R_{i+1}$ can be constructed in $\Otild(|R_{i+1}|)$ time by a bottom up computation, starting from regions at level $i=m-1$ for decreasing $i$. 
When handling $R_i^{out}$ we already have the distances in $R_{i+1}^{out}$ among the vertices of $\partial R_{i+1}$ from the MSSP structure for the previous level.
We use those distances to construct a planar emulator~\cite{emulator, GoranciHP20} in $\Otild(|\partial R_{i+1}|^2) = \Otild(|R_{i+1}|)$ time, and then run the MSSP algorithm \cite{CabelloCE13} on the union of the emulator and of $R_i^{out} \cap R_{i+1}$.
The running time is $\Otild(|R_{i+1}|)$. 
The emulator concisely represents the parts of the SSSP trees in $R_{i+1}^{out}$, so the distances represented by the MSSP data structure are in $R_i^{out}$, but the total size is just $\Otild(|\partial R_{i+1}|^2)$ as specified in the definition of part (A).
Thus the total time for constructing all of part (A) is proportional, up to polylogarithmic factors to its space requirement.

The main challenge we need to overcome is the computation of parts (B)-(C).
Namely, the Voronoi diagrams for $R_i^{out}$ for all regions $R_i$ at all levels of the recursive decomposition.
Computing the additive distances for all the Voronoi diagrams in $\Otild(mn^{1+1/(2m)})$ (i.e., bounded by the size of the oracle) is done exactly as described in~\cite[Section 8]{ourJACM}.
As we show in \cref{sec:jacm_complete}, computing parts (D)-(E) does not pose additional significant challenges.
The Voronoi diagrams of parts (B)-(C) are constructed starting with those for the outside of regions at level $m$ of the decomposition, and ending with those for the outside of regions at level 1. 
This way, when working on level $i$ (i.e., when computing $\VD^*(q,R_i^{\mathrm{out}})$ for some vertex $q$), we have already computed all the components (i.e., parts (A)-(E)) for $R_{i'}^{\mathrm{out}}$ for every $i'>i$.
In particular, we already have all the components necessary for executing the point location algorithm of~\cite{CharalampopoulosGMW19} for any vertex $v \in R_{j+1}^{out}$ in $\VD^*(q,R_{j+1}^{out})$ for any $j \geq i$ and $q \in \partial R_j$, and (together with the MSSP for $R_j^{out} \cap R_{j+1}$) to execute the distance query algorithm of~\cite{CharalampopoulosGMW19} from any such $q$ to any $u \in R_j^{out}$.

The computation of the Voronoi diagram $\VD^*(q,R_i^{\mathrm{out}})$ is performed using the divide and conquer mechanism in \cite[Section 5]{icalp2025planar}. 
Note that while the main result of \cite{icalp2025planar} is for undirected planar graphs, this mechanism applies without any changes to directed planar. 
We summarize this mechanism in the following lemma.

\begin{lemma} \label{lem:d-and-c-VD}
There exists an algorithm that constructs an additively weighted Voronoi diagram of a graph $H$ with sites $S$ lying on a single face $h$ of $H$ using $\Otild(|S|)$ many calls to each of the following operations: 
\begin{enumerate}
    \item Reporting the distance in $H$ from any site in $S$ to any vertex of $H$.
    \item Finding the single trichromatic face with respect to any set of 3 sites in $S$.
\end{enumerate}
\end{lemma}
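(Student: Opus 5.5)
The plan is to follow the divide-and-conquer construction of \cite[Section 5]{icalp2025planar} and audit it, checking that it only touches $H$ through the two listed operations and makes $\Otild(|S|)$ calls to each. Let the sites of $S$ be $s_1,\dots,s_k$ in cyclic order along $h$. The algorithm splits $S$ into two contiguous halves $S_L$ and $S_R$. It recursively computes $\VD^*$ for each half (only the sites of a half are active, but the graph $H$ and the face $h$ are unchanged). Then it merges the two diagrams into $\VD^*$ for all of $S$. The recursion has depth $O(\log |S|)$. It therefore suffices to show that merging two diagrams with $k_L+k_R$ sites costs $\Otild(k_L+k_R)$ operations; summing over the recursion then gives $\Otild(|S|)$ calls in total.

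The merge traces the $L$/$R$ separating curve through the two diagrams, as in the classical merge step for planar Voronoi diagrams. Because the sites lie on one face and the cells of each side are contiguous along $h$, this curve is a single dual path from $h^*$ to $h^*$. It passes alternately through cells $\Vor_L(s)$ and $\Vor_R(t)$, following pieces of the bisectors $\beta^*(s,t)$ with $s\in S_L$ and $t\in S_R$. Its breakpoints are of two kinds. Some occur where the current $\beta^*(s,t)$ exits $\Vor_L(s)$ through a Voronoi edge $(s,s')$ of the left diagram. Others occur where it exits $\Vor_R(t)$ through a Voronoi edge $(t,t')$ of the right diagram.

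Each such breakpoint is a trichromatic face of a 3-site diagram on $\{s,s',t\}$ or $\{s,t,t'\}$. The merge finds it by one call to operation 2. It then decides which Voronoi edge is actually reached first by comparing candidates with operation 1. Concretely, we evaluate the distances from the relevant sites to the vertices of the candidate face, which tells us whether the face lies inside the current cell; the cell is represented as a tree structure in $\VD^*$. Each Voronoi edge of either input diagram is discarded at most once, and each step produces one new Voronoi vertex of the output. Hence the number of operation calls is linear in $k_L+k_R$, up to the logarithmic overhead of locating positions in the input trees.

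The main obstacle is the bookkeeping for candidate faces that turn out to lie outside the current cell, where the 3-site trichromatic face does not coincide with the true exit point. The key fact that resolves this is \cref{lem:VD_shortest_path_color}: cells are connected subtrees of the shortest path trees, and bisectors are simple cycles in $H^*$. Consequently, along the merge curve each left or right Voronoi edge is crossed at most once. The amortized charging argument, in which every discarded candidate is charged to a Voronoi edge, therefore goes through. Since this is exactly the argument of \cite[Section 5]{icalp2025planar}, the remaining check is that nowhere does it use undirectedness. The argument relies only on the uniqueness of shortest paths, on the tie-breaking rule, and on planarity of the cells. All three hold verbatim for directed $H$, so the lemma follows.
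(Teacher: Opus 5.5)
Your proposal is correct and takes the same approach as the paper, which does not reprove this lemma: it invokes the divide-and-conquer mechanism of \cite[Section 5]{icalp2025planar} and notes that it applies unchanged to directed planar graphs, which is exactly the reduction and directedness check you make. Your Shamos--Hoey-style reconstruction of the merge is extra detail the paper does not give; if you keep it, note that distances from $s,s',t$ to the vertices of the candidate face only confirm that it is trichromatic for $\{s,s',t\}$, which holds by construction, and do not show that it lies on the scanned Voronoi edge of the current cell, so that test should instead compare the known endpoints of that Voronoi edge against $t$ with $O(1)$ distance queries, using the uniqueness of the 3-site trichromatic face.
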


As we argued above, item (1) is provided by the already computed components of the oracle in time $O(Q)$, where $Q$ is the query time of the distance oracle.

In the rest of this section we describe how to implement item (2) in the context of the recursive decomposition used by the oracle. 
Namely, finding the single trichromatic face w.r.t. a set of 3 sites on $\partial R_i$ when $H = R_i^{out}$, for some region $R_i \in \mathcal R_i$.
Let $T$ be the time required for computing such a trichromatic face. 
We show that $T = (\log n)^{O(m)} \cdot Q$ (\cref{thm:tc-time}).
With this, we can analyze the total running time required for constructing all the Voronoi diagrams comprising item (B) of \cite{CharalampopoulosGMW19}, and parts (B) and (C) of \cite{ourJACM}.
According to \cref{lem:d-and-c-VD}, the construction of a Voronoi diagram with $S$ sites requires $\Otild(|S|\cdot (Q+T))$ time.
Hence, the total construction time is $\Otild(|S_{tot}|\cdot (Q+T))$ where $S_{tot}$ is the total number of sites in all of the constructed Voronoi diagrams, which is bounded by the space required by the oracles of \cite{CharalampopoulosGMW19} and \cite{ourJACM}.
Substituting $T = (\log n)^{O(m)} \cdot Q$, the total running time is $\Otild(|S_{tot}|\cdot (\log n)^{O(m)} \cdot Q)$.
For all $m = o(\log n / \log \log n)$, 
this is $n^{1+o(1)}$.
We address the rest of the tradeoff, when $m \in \Omega(\log n / \log \log n) \cap o(\log n)$ in \cref{sec:jacm_complete}.

The $n^{1+o(1)}$ bound on the construction time of parts (A), (B) and (C), together with the description of the construction of parts (D) and (E) in \cref{sec:jacm_complete} will prove \cref{thm:main}.

For the remainder of this section, for $i \leq j \leq m$, let $R_j$ be the region of $\mathcal R_j$ that contains $R_i$.
The mechanism for computing the single trichromatic face works along the lines of {\sc SimpleTreeElimination} described in \cref{sec:simple}. 
In fact, if $i=m-1$, then we use the exact same procedure since for every region $R_{m-1}$ in $\mathcal R_{m-1}$ we have MSSPs for 
$R_{m-1}^{out} \cap R_m = R_{m-1}^{out}$ (as $R_m= G$).
However, when $i < m-1$, we do not have an MSSP data structure for $R_i^{out}$, and hence cannot directly access shortest path trees and their duals, so we cannot implement {\sc SimpleTreeElimination} directly.
To overcome this difficulty, we use a coarse representation of the primal and dual trees. 
The main procedure, {\sc TreeElimination}, performs the search on the coarse representation at various levels of granularity.
When the search eliminates all but a single coarse edge at some coarseness level $i$, we recurse into level $i+1$ to refine that coarse edge and continue the search on this refinement.
This approach, which will be described in the rest of this section will allow us to find the trichromatic face in a 3-site Voronoi diagram for a region at level $i$ of the recursive decomposition in time  
$(\log n)^{O(m)} \cdot Q$.

\subsection{Coarse Trees and Dual Coarse Trees} \label{sec:dualtreedef}

\paragraph{Coarse Trees.}
For a site $x$ on face $\partial R_{i}$ the \emph{Coarse Shortest Path Tree} $\hat{T}_x$ is defined as the tree obtained from $T_x$, the shortest path tree in $R_i^{out}$ rooted at $x$ by repeatedly contracting an edge incident to a leaf that does not belong to $\partial R_{i+1} \cup \partial R_i$ or an edge with both endpoints having degree 2 and at least one endpoint not in $\partial R_{i+1} \cup \partial R_i$.\footnote{To be clear, this contraction process is described just for the sake of defining the coarse tree $\hat T_x$. It is not actually executed in this way at any point.}
Cf.~\cref{fig:CoarseTree_Primal}.
We say that the level of $\hat T_x$ is $i$.
The coarse tree can be computed in $\Otild(|\hat T_x|)=\Otild(|\partial R_{i+1}|)$ time in a bottom up computation along the recursion (starting from level $i=m-1$ and ending with $i=1$) as follows.

\begin{figure}[htb]
    \centering
    \includegraphics[width=0.35\textwidth]{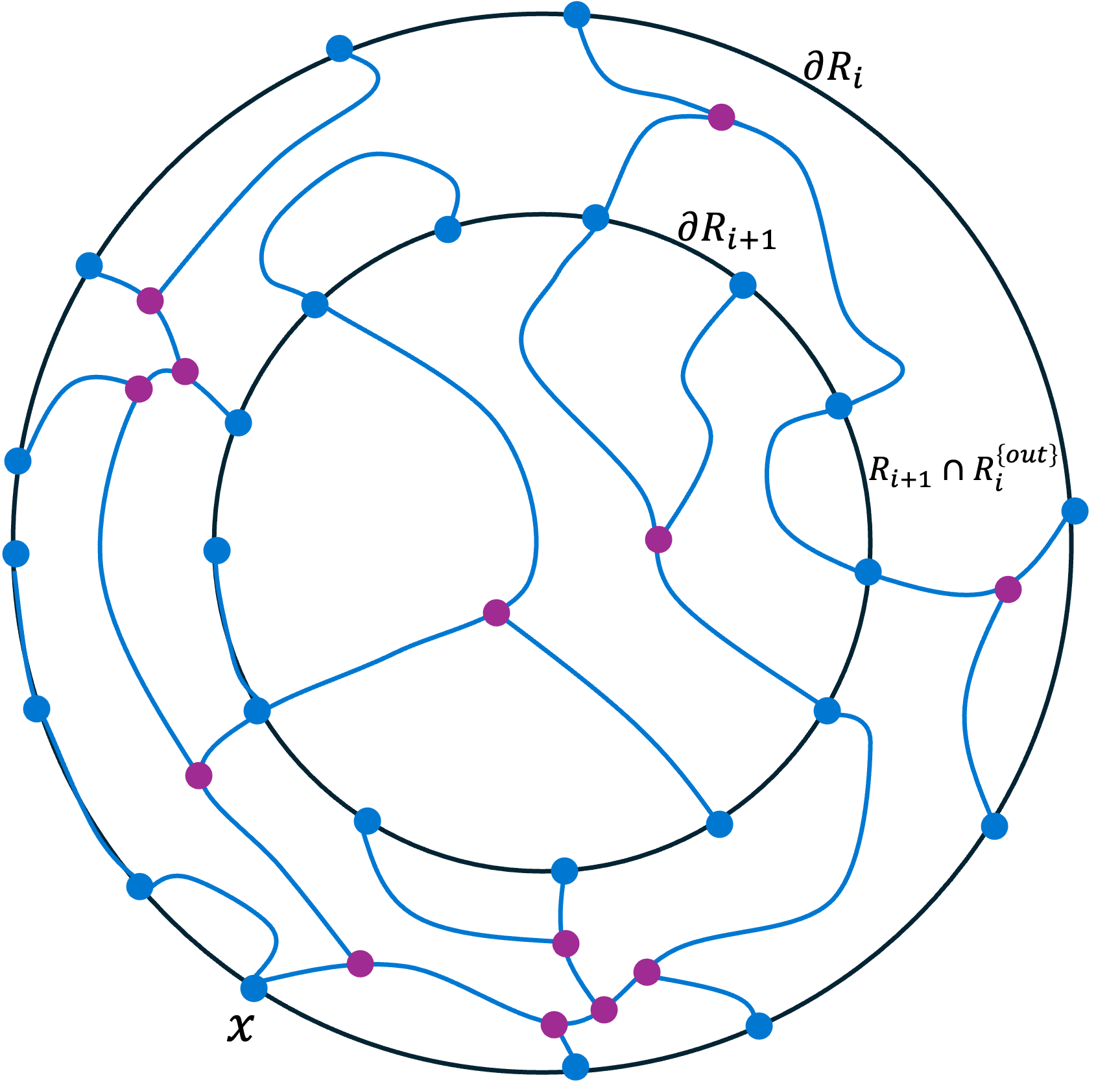}
    \caption{
    A schematic illustration of the primal coarse tree $\hat T_x$. 
    The source $x \in \partial R_i$ of the underlying fine shortest path tree $T_x$ is labeled explicitly.
    The boundary vertices of $R_i$ and $R_{i+1}$ are shown as blue nodes.
    The edges of the coarse tree $\hat T_x$ are indicated by blue lines.
    Vertices in $R_i^{out}$ that belong to $\hat T_x$ but do not lie on $\partial R_i$ or $\partial R_{i+1}$ are shown as purple nodes..
    }
    \label{fig:CoarseTree_Primal}
 \end{figure}
 
Suppose we had already constructed all coarse trees at levels greater than $i$ and we wish to construct a coarse tree $\hat T_x$ at level $i$. 
We begin by constructing a non-fully contracted version of $T_x$ by linking together portions of $T_x$ from elements that are already available at this stage. The parts of $T_x$ in $R_i^{out} \cap R_{i+1}$ are available from the MSSP tree of $x$ for $R_i^{out} \cap R_{i+1}$ (from part (A)). 
The parts of $T_x$ in $R_{i+1}^{out}$ are 
available as subtrees of the coarse trees at level $i+1$ already computed for the vertices  of $\partial R_{i+1}$. We use the artificial edges in the MSSP tree of $x$ in $R_i^{out} \cap R_{i+1}$ to identify which subtrees of the coarse trees at level $i+1$ should be used.
We use $O(|\partial R_{i+1}|)$ link and cut operations to splice the appropriate subtrees and link them all together into a single tree rooted at $x$. 
Then we use $O(|\partial R_{i+1}|)$ LCA queries and marked ancestor queries on this tree to identify the endpoints of each contracted path of $T_x$ that forms an edge of the coarse tree $\hat T_x$.
Marked ancestor queries are used to obtain the vertices of $\partial R_{i+1}$ in the tree (there are $O(m)$ different types of marks, each vertex that is a boundary vertex of a region in the recursive decomposition is marked according to the level of the region).
LCA queries are used to identify the remaining vertices of $\hat T_x$.
This way we identify, and explicitly create and store all edges of $\hat T_x$ in a data structure that supports LCA and marked ancestor queries in $\Otild(1)$ time. 
Constructing $\hat T_x$ in this way thus takes $\Otild(|\hat T_x|)$ time.

We call the edges of the coarse tree coarse edges. We call edges of the original graph fine edges. By definition, every coarse edge $\hat e$ of $\hat T_x$ corresponds to a path $Q$ of fine edges in $T_x$. We associate and store with $\hat e$ the first (i.e., root-most) fine edge $e=uv$ of $Q$. This association can be computed with no asymptotic overhead during the computation of tree $\hat T_x$.
We say that $\hat e$ is a {\em coarsening} of $Q$ and that  every fine edge $e\in Q$ is {\em represented} by the coarse edge $\hat e$.

The following proposition will be useful in viewing the same path at  different levels of coarseness.
\begin{proposition}\label{pro:gamma_subtree_level_association}
     Let $e=uv$ be a fine edge associated with a coarse edge $\hat e_i$ in $\hat T_x$ such that $e$ resides in $R_{i+1}^{out}$.
     Let $\gamma$ be the lowest ancestor of $u$ in $\hat{T}_x$ that belongs to $\partial R_{i+1}$ (possibly $\gamma=u$).
     Then $e$ is also associated with a unique coarse edge $\hat e_{i+1}$ of the level-$(i+1)$ coarse tree $\hat T_{\gamma}$.
     Moreover, during the construction of the coarse trees we can store with $\hat e_i$ a pointer to $\hat e_{i+1}$.
\end{proposition}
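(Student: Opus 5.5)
The plan is to compare the fine tree $T_x$ (the shortest path tree in $R_i^{out}$) with the fine tree $T_\gamma$ (the shortest path tree in $R_{i+1}^{out}$ rooted at $\gamma$), and show that the portion of $T_x$ hanging below $\gamma$ inside $R_{i+1}^{out}$ coincides with a portion of $T_\gamma$. First, since $e=uv$ lies in $R_{i+1}^{out}$ and $x\in\partial R_i\subseteq R_{i+1}$, the $x$-to-$v$ path $T_x(x,v)$ must enter $R_{i+1}^{out}$ through $\partial R_{i+1}$. The lowest such vertex on $T_x(x,u)$ is $\gamma$, which is a vertex of $\hat T_x$ because vertices of $\partial R_{i+1}$ are never contracted away. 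The subpath $T_x(\gamma,v)$ then contains no vertex of $\partial R_{i+1}$ other than $\gamma$, and it is therefore contained in $R_{i+1}^{out}$, since it cannot cross the hole boundary without touching a boundary vertex.

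Next, by uniqueness of shortest paths, $T_x(\gamma,v)$ is the unique shortest $\gamma$-to-$v$ path in $G$, and since it lies in $R_{i+1}^{out}$ it equals $T_\gamma(\gamma,v)$. The same holds for every descendant $w$ of $\gamma$ in $T_x$ whose path from $\gamma$ avoids $\partial R_{i+1}\setminus\{\gamma\}$. So the part of $T_x$ "between" $\gamma$ and the next boundary vertices of $R_{i+1}$ below it is a subtree of $T_\gamma$.

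Given this, the main step is to show that $e$ is also the rootmost fine edge of a coarse edge of $\hat T_\gamma$. The coarse edge $\hat e_i$ starts at a vertex $u$ of $\hat T_x$, meaning $u\in\partial R_i\cup\partial R_{i+1}$ or $u$ is a branching vertex. The case $u\in\partial R_i$ cannot arise with $u\neq\gamma$ in a way that escapes $R_{i+1}^{out}$, and if $u\in\partial R_{i+1}$ then $u=\gamma$, so $u$ is the root of $\hat T_\gamma$. In the remaining case $u$ is a branching vertex whose branches in $T_x$ each contain a kept vertex, meaning a vertex of $\partial R_{i+1}\cup\partial R_i$, and the branch through $e$ also contains one. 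I must argue that these branches survive in $T_\gamma$ and still contain kept vertices of the level-$(i+1)$ contraction, namely vertices of $\partial R_{i+2}\cup\partial R_{i+1}$. \textbf{This is the step I expect to be the main obstacle.} A branch of $T_x$ below $u$ may reach a vertex of $\partial R_{i+1}$ only after leaving the subtree shared with $T_\gamma$. However, the first vertex of $\partial R_{i+1}$ on any branch is reached by a path inside $R_{i+1}^{out}$, so by the previous paragraph it lies in $T_\gamma$ in the same branch. Hence $u$ branches in $T_\gamma$ as well, and the edge of $\hat T_\gamma$ leaving $u$ toward $v$ has $e$ as its first fine edge. Uniqueness of $\hat e_{i+1}$ follows because a fine edge is the first edge of at most one coarse edge.

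Finally, for the pointer, during the bottom-up construction of $\hat T_x$ the parts in $R_{i+1}^{out}$ are spliced in as subtrees of the already built level-$(i+1)$ coarse trees $\hat T_\gamma$, identified through the shortcut edges of the part (A) MSSP. When a coarse edge $\hat e_i$ is created whose first fine edge comes from such a spliced subtree, that fine edge is read off the stored association of the corresponding $\hat T_\gamma$ edge, so we record a pointer to that edge at no extra asymptotic cost.
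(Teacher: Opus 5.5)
Your proposal is correct and follows the reasoning the paper relies on. The paper gives no separate proof of this proposition; it rests on the construction of $\hat T_x$, whose $R_{i+1}^{out}$-portions are spliced from pieces of the level-$(i+1)$ coarse trees $\hat T_\gamma$ (identified via the shortcut edges of the part (A) MSSP tree), so each such coarse edge is a concatenation of coarse edges of $\hat T_\gamma$ and inherits their first fine edge. Your path-agreement and branching-preservation argument is exactly the structural fact this splicing presupposes, and your pointer argument matches the construction.

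Two small corrections. First, $T_x$ here is the shortest path tree of $R_i^{out}$, so $T_x(\gamma,v)$ is the unique shortest path in $R_i^{out}$, not necessarily in $G$; this still gives $T_x(\gamma,v)=T_\gamma(\gamma,v)$ because $R_{i+1}^{out}\subseteq R_i^{out}$. Second, the agreement with $T_\gamma$ holds only for descendants reached from $\gamma$ inside $R_{i+1}^{out}$, not for all descendants avoiding $\partial R_{i+1}\setminus\{\gamma\}$, since those may lie in $R_i^{out}\cap R_{i+1}$. That restricted form is the one you actually use, so the argument is unaffected.
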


\paragraph{Dual Coarse Trees}
We would like to define a coarse representation $\hat T_x^*$ of the dual tree $T_x^*$ so that coarse edges of $\hat T_x^*$ are contractions of paths of fine edges in $T^*_x$. It would be useful that the coarse edges of $\hat T_x^*$ will be objects that are available to us at the current stage of the construction algorithm. 
One such object are the Voronoi edges of $\VD^*(x,R_{i+1}^{out})$ (which were already computed). The other object is the cotree of $\hat T_x$, which we denote by $\cotree{x}$ and define next.

Let $\hat H_x$ be the graph whose vertex set is the vertex set of $\hat T_x$, and whose edges are the edges of $\hat{T}_x$ and the edges of $\partial R_i \cup \partial R_{i+1}$. 
Let $\hat H_x^*$ be the planar dual of $\hat H_x$.
The coarse cotree $\cotree{x}$ is  the spanning tree of $\hat H_x^*$ that consists of all the edges whose primal counterparts are not in $\hat T_x$. 
Note that, by definition of $\hat H_x$, for any dual coarse edge $\hat e^* \in \cotree{x}$, the corresponding primal edge $\hat e$ of $\hat H_x$ is in fact a fine edge of $\partial R_i$ or  of $\partial R_{i+1}$. 
In this sense, any coarse edge $\hat e^* \in \cotree{x}$ is a trivial coarsening of a fine dual edge $e^* \in T^*_x$ s.t. $e\in \partial R_i \cup \partial R_{i+1}$.
Hence, $\cotree{x}$ is a coarsening of $T^*_x$. However $\cotree{x}$ is too coarse for our purposes, so we shall describe how to refine it in the next paragraph. 
The tree $\cotree{x}$ is represented explicitly by a data structure that supports LCA, level/marked ancestor, and left/right queries in $\Otild(1)$ time. The computation time of $\cotree{x}$  is $\Otild(|\hat H_x|) = \Otild(|\cotree{x}|)$. 

We define the dual coarse tree $\hat{T}_x^*$ as the  refinement of $\cotree{x}$, as follows. 
Cf.~\cref{fig:CoarseTree_Dual}. 
We replace the parts of $\cotree{x}$ embedded in $R_{i+1}^{out}$ with parts of $VD^*(x,R_{i+1}^{out})$. However, we need to do this carefully because there is a difference between the structure of $\cotree{x}$ and $\VD^*(x,R_{i+1}^{out})$  that arises from 
the fact that in $\VD(x,R_{i+1}^{out})$, each site was artificially made to belong to its own Voronoi cell (cf. \cref{lem:VD0-Tx,cor:VD-Tx}) and from the fact that in $\VD^*(x,R_{i+1}^{out})$ there are many copies of the face corresponding to $R_i^{out} \cap R_{i+1}$.

\begin{figure}[H]
    \centering
    \includegraphics[width=0.3\textwidth]{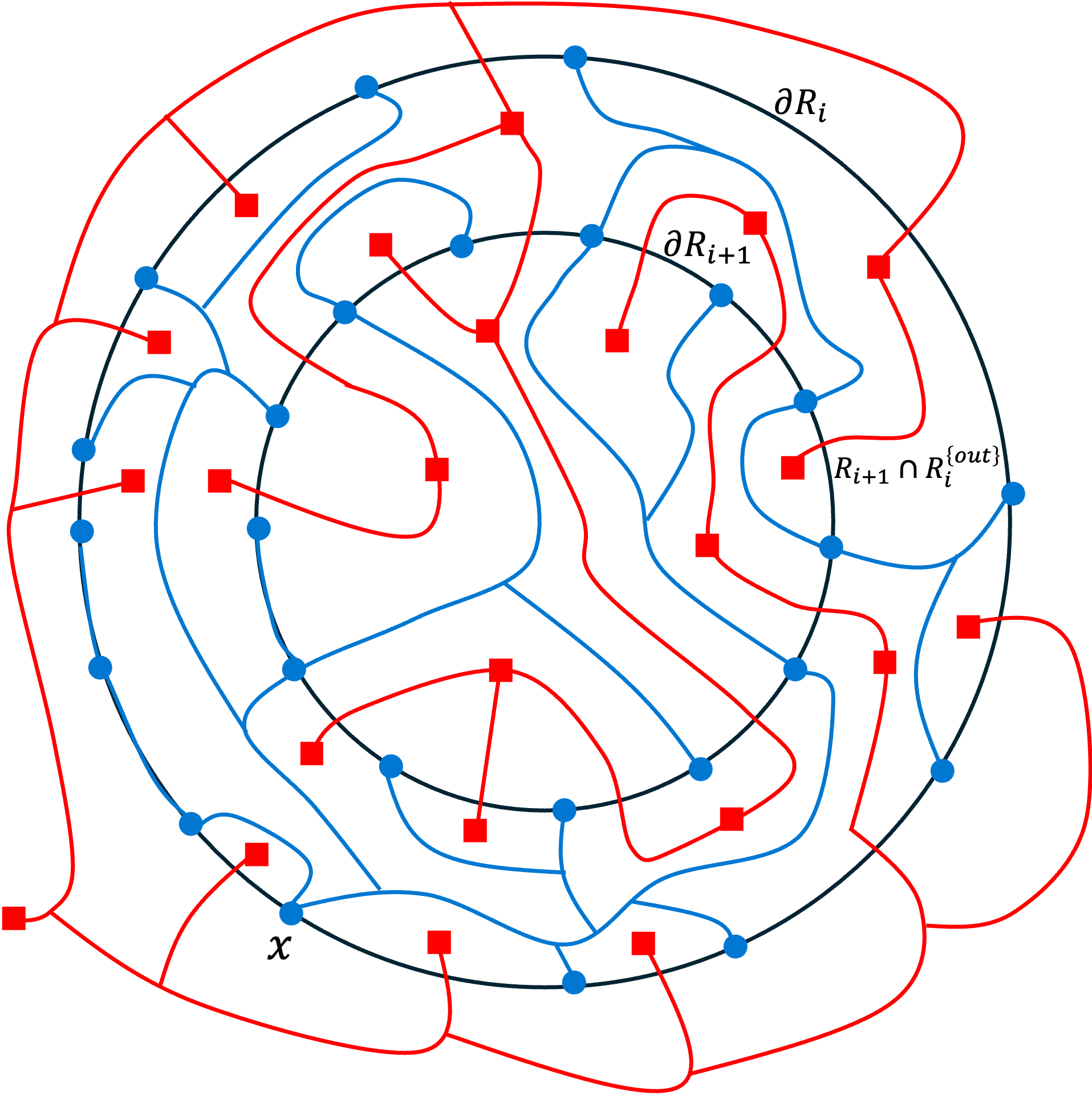}
    \hspace{0.5in}
    \includegraphics[width=0.3\textwidth]{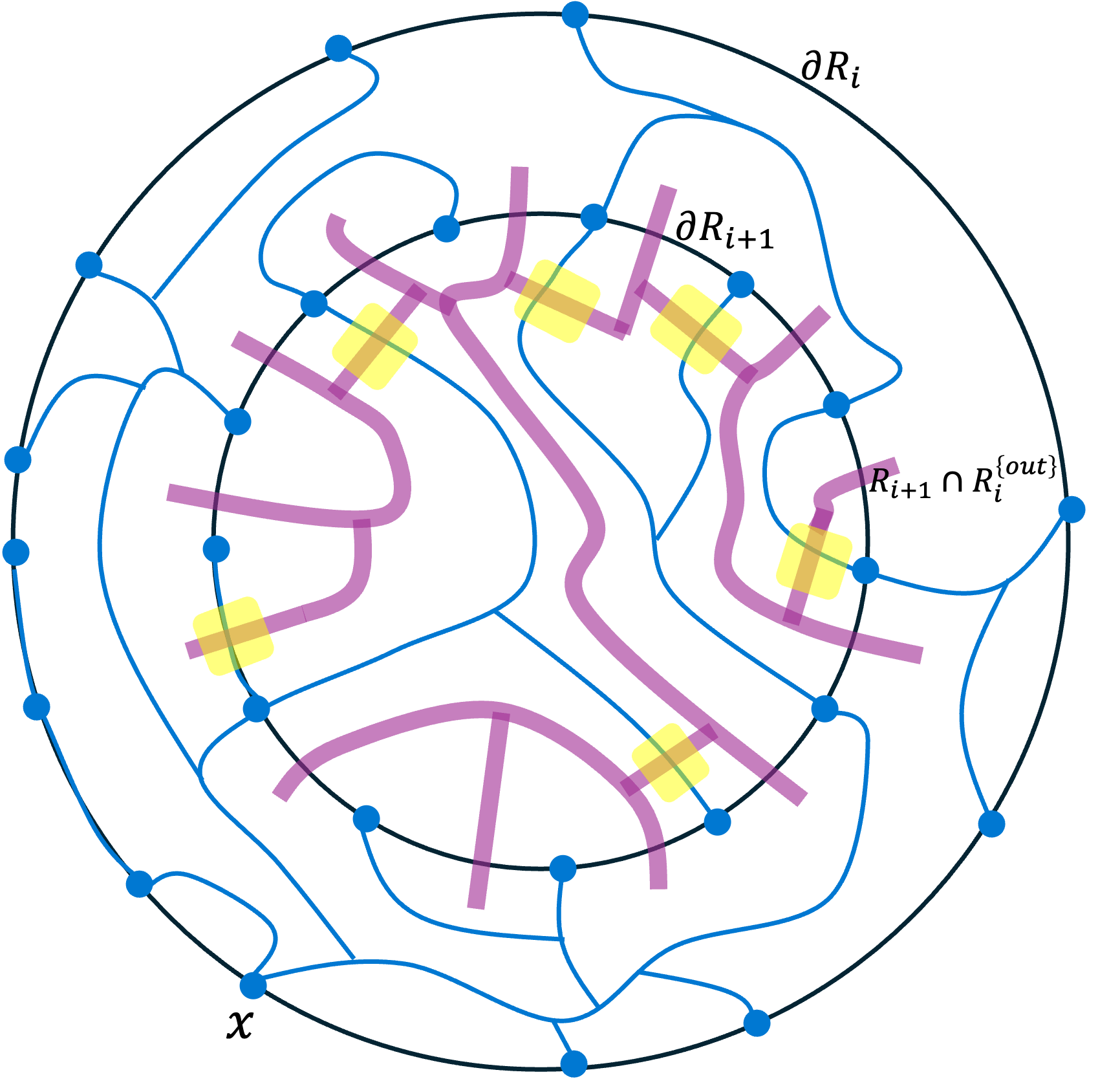}
    \hspace{0.5in}
    \includegraphics[width=0.4\textwidth]{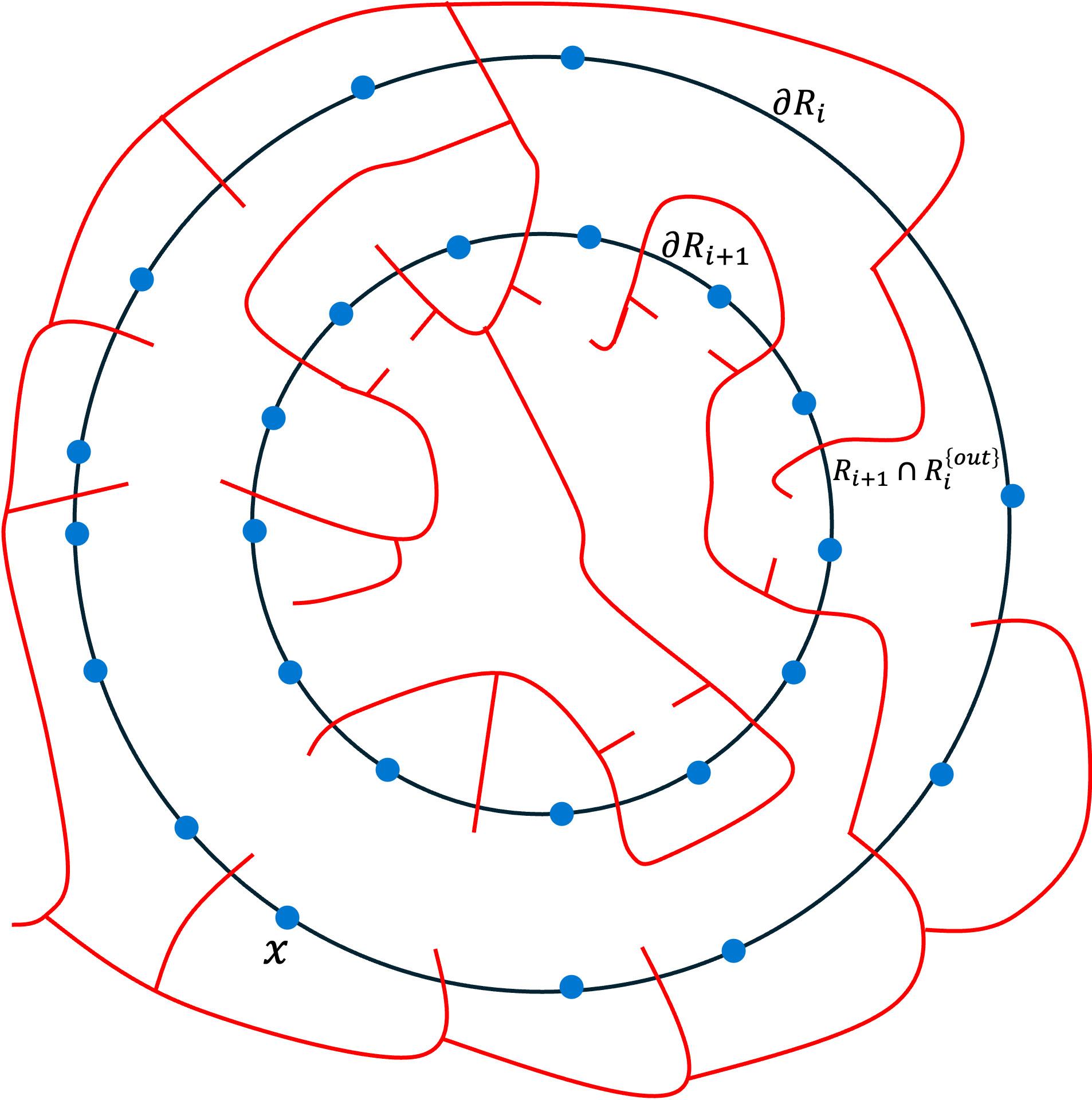}
    \caption{
    A schematic illustration of the dual coarse tree $\hat T^*_x$. 
    The source $x \in \partial R_i$ of the underlying fine shortest path tree $T_x$ is labeled explicitly.
    The boundary vertices of $R_i$ and $R_{i+1}$ are shown as blue nodes.
    Top-Left: The coarse tree $\hat T_x$ is shown in blue, and its dual co-tree $\cotree{x}$ (in $\hat H^*_x$) is shown in red.
    Top-Right: The coarse tree $\hat T_x$ is again shown in blue. The Voronoi diagram $\VD^*(x, R_{i+1}^{out})$ is drawn in purple, with its edges that belong to $T_x$ highlighted in yellow.
    Bottom: The dual coarse tree $\hat T^*_x$ is shown in red. 
    The tree $\hat T^*_x$ is obtained by "gluing" the red coarse tree $\cotree{x}$ and the purple VD, after breaking each highlighted edge.
    }
    \label{fig:CoarseTree_Dual}
 \end{figure}

Specifically, we go over the Voronoi edges of $\VD^*(x,R_{i+1}^{out})$. 
Recall that each Voronoi edge $\tilde e^*$ corresponds to a fragment of a bisector between two sites $s,t$ of $\partial R_{i+1}$. We check if $s$ and $t$ are connected in $\hat T_x$ by a path that does not leave $R_{i+1}^{out}$. 
This can be done in $\Otild(1)$ time by marking, during construction, all the edges of $\hat T_x$ that are not in $R_{i+1}^{out}$, and using marked ancestor queries.
If the condition is satisfied then the Voronoi edge $\tilde e^*$ must represent an edge whose primal counterpart is in $T_x$. Hence, we cut the Voronoi edge $\tilde e^*$
into two edges $\tilde e^*_1$ and $\tilde e^*_2$. Each of the $\tilde e^*_i$'s represents one of the two (possibly empty) subpaths of $T^*_x$ described in \cref{cor:VD-Tx}. 
Each of the $\tilde e^*_i$'s remains connected to one of the original endpoints of the edge $\tilde e^*$.
This process turns $\VD^*(x,R_{i+1}^{out})$ from a tree to a forest that represents no edge whose primal counterpart is in $T_x$.
Each vertex $\hat v^*$ of $\hat H^*$ that is embedded in $R_{i+1}^{out}$ corresponds to exactly one connected component of this forest.
We replace in $\cotree{x}$ each such vertex $\hat v^*$ with its corresponding component of the forest $\VD^*(x,R_{i+1}^{out})$.
We connect each Voronoi edge of this component of $\VD^*(x,R_{i+1}^{out})$ that represents a contracted path ending with an edge $e'^*$ of $\partial R_{i+1}$ with the coarse edge of $\cotree{x}$ which was previously incident to $\hat v^*$ and  represents the fine edge $e'^*$.
We note that formally, the edge $e'^*$ is now represented twice in $\hat T^*_x$. Once by the Voronoi edge and once by the edge of $\cotree{x}$. To guarantee that it will only be represented once, we ignore edges of $\partial R_{i+1}$ represented by Voronoi edges.

Since $\cotree{x}$ is a coarsening of $T^*_x$ and since the above process guarantees that the remaining edges of $\VD^*(x,R_{i+1}^{out})$ represent no edges whose dual is in $T_x$, we obtain
\begin{lemma}\label{lem:dual-coarse-tree}
    The dual coarse tree $\hat T^*_x$ is a coarsening of $T^*_x$.
\end{lemma}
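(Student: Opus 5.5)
To prove \cref{lem:dual-coarse-tree}, I would show two things. First, every edge of $\hat T^*_x$ represents a path of fine dual edges, all of which lie in $T^*_x$. Second, these represented paths partition the edges of $T^*_x$, each edge being represented exactly once, and they are glued at their endpoints in the same way $T^*_x$ is. Then contracting each such path in $T^*_x$ gives exactly $\hat T^*_x$.

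The first step splits the edges of $\hat T^*_x$ into two kinds. \emph{Kind one:} edges inherited from $\cotree{x}$. By construction these are trivial coarsenings of single fine dual edges $e^*$ with $e\in\partial R_i\cup\partial R_{i+1}$ and $e\notin\hat T_x$. Each fine edge of $T_x$ lies on the path represented by some coarse edge of $\hat T_x$. Since the boundary edges are kept as separate edges of $\hat H_x$, such an $e$ is not in $T_x$, so $e^*\in T^*_x$. \emph{Kind two:} pieces of Voronoi edges of $\VD^*(x,R_{i+1}^{out})$. Here I would apply \cref{cor:VD-Tx} to the fine shortest path tree restricted to $R_{i+1}^{out}$; this restriction agrees with $T_x$ there, because the sites on $\partial R_{i+1}$ are reached via $T_x$. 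By that corollary, a Voronoi edge is either a path in $T^*_x$, or two $T^*_x$-subpaths joined by a single dual edge $e^*$ with $e\in T_x$. The second case occurs exactly when the two sites $s,t$ are joined in $T_x$ by a path inside $R_{i+1}^{out}$, which is the condition tested. In that case the edge is cut and $e^*$ is discarded, leaving two $T^*_x$-paths.

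The second step is coverage and disjointness. Every fine dual edge of $T^*_x$ embedded in $R_i^{out}\cap R_{i+1}$ is dual to an edge of $\partial R_i\cup\partial R_{i+1}$ or lies inside a face of $\hat H_x$. The interior dual vertices of each face $\hat v^*$ in $R_{i+1}^{out}$ are covered by the corresponding component of the cut Voronoi forest. This is where \cref{lem:VD0-Tx} is used: the edges of $\VD^*_0$ are precisely the cotree edges separating different cells, plus the discarded edges dual to $T_x$ edges entering sites. Cotree edges strictly inside a cell are contracted away by the coarsening, together with the vertex they enter. Boundary edges $e'^*$ of $\partial R_{i+1}$ could be represented twice, once by a Voronoi edge and once by $\cotree{x}$; the convention of ignoring the Voronoi copy makes the representation unique.

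The third step is the gluing. When $\hat v^*$ is replaced by its forest component, the attachment point of each incident $\cotree{x}$ edge is matched to the Voronoi edge ending at the same fine dual edge $e'^*$. So adjacency is preserved, and the result is connected and acyclic: it is a tree because it is obtained by substituting a subtree for a vertex.

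The main obstacle is the correspondence claimed in the construction: that each vertex $\hat v^*$ of $\hat H^*_x$ embedded in $R_{i+1}^{out}$ corresponds to \emph{exactly one} component of the cut forest, and that this component's dual paths are precisely the $T^*_x$ edges inside that face. I would prove it with a planarity argument. The faces of $\hat H_x$ inside $R_{i+1}^{out}$ are bounded by $\hat T_x$ paths and $\partial R_{i+1}$ edges. Cutting exactly the Voronoi edges that cross $T_x$ edges separates the Voronoi tree along those primal paths, so the components of the forest are in bijection with these faces.
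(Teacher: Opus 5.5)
Your proposal follows the paper's argument. Edges inherited from $\cotree{x}$ are trivial coarsenings of single $T^*_x$ edges, and by \cref{cor:VD-Tx}, cutting each affected Voronoi edge at its unique dual-of-$T_x$ edge leaves only $T^*_x$ subpaths; your gluing and face-to-component discussion fills in what the paper only asserts in the construction.

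One correction to your framing: the represented paths do not partition $T^*_x$. Fine dual edges inside faces of $\hat H_x$ in $R_i^{out}\cap R_{i+1}$, or strictly inside a Voronoi cell, are contracted into coarse vertices rather than represented, as your second step in effect acknowledges. Likewise, not every fine edge of $T_x$ lies on a path represented in $\hat T_x$, because subtrees with no boundary vertices are pruned. Your kind-one argument still holds, since a $T_x$ edge with both endpoints on $\partial R_i\cup\partial R_{i+1}$ is never pruned or merged.
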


\subsection{Decision and Objective Predicates}\label{subsec:preduality}

We would like to generalize our view of {\sc SimpleTreeElimination} to support operations that will be required when working with the recursive decomposition of the graph.

First we generalize the notion of a vertex being green, red or blue.
We call $\Comp$ the competitor set.
For $c \in \{x\}\cup\Comp$ we say that a vertex $v$ is $c$-colored (w.r.t. $\{x\}\cup\Comp$) if $v$ is closer (w.r.t. additive weights) to $c$ than to any other site in $\{x\}\cup\Comp$.
This definition implies a generalization of the critical edges of an edge $e$ w.r.t. $T_x$; 
Let $P^*_j$, $j \in \{1,2\}$ be the two dual paths forming the fundamental cycle of $e^*$ w.r.t. $T_x^*$. 
The ($x$-$\Comp$)-critical edge of $P^*_j$ is the first edge of $P^*_j$ that is not $x$-colored.
Note that the prefix property of \cref{lem:TwoColorMonotonicity} applies in the generalized setting by the same proof.

The second, more significant generalization we make, is to use the same tree elimination procedure to be able to identify not just an edge $\tilde e$ incident to the green vertex of the trichromatic face, but to also identify edges satisfying other desired properties.
This is done by considering the decision rule as a decision predicate $P_{x,\Comp}(e_1,e_2)$ on the critical edges $e_1,e_2$ in the context of the set of competitors $\{x,Y\}$, and considering the desired property as an objective predicate $Q_{x,\Comp}(\tilde e)$ on a single edge.
Then, we can generalize the correctness argument of {\sc SimpleTreeElimination} established by 
\cref{lem:SimpleTrichromaticDecisionCorrectness} and \cref{cor:SimpleCriteriaCorrectness} in the following lemma.

\begin{lemma}\label{lem:P_Q_formulation}
Consider a generalization of {\sc SimpleTreeElimination} in which we replace $g$ by $x$, $\{r,b\}$ by $Y$, and ${\sc TrichromatichDecisionRule}$ by a decision predicate $P_{x,\Comp}(\cdot,\cdot)$. 
Let $Q_{x,\Comp}(\cdot)$ be an objective predicate. 
Suppose that any edge $e \in T_x$ where the rootward vertex of $e$ is $x$-colored, satisfies that $P_{x,\Comp}(e_1,e_2) \leftrightarrow \exists\  \tilde e \in \textrm{ subtree of } e \textrm { in } T_{x} \ : Q_{x,\Comp}(\tilde e)$, where $e_1$ and $e_2$ are the critical edges of $e$ w.r.t. $T_x$.
Then, this version of {\sc SimpleTreeElimination} correctly returns an edge $e$ for which $Q(e)$ holds.
Furthermore, the running time of this version of {\sc SimpleTreeElimination} is dominated by the time for $\log(|H|)$ calls to {\sc SimpleFindCritical} and to the decision predicate $P$.
\end{lemma}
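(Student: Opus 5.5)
We prove the lemma by induction on the recursion of the generalized {\sc SimpleTreeElimination}. The invariant is: every subtree $T$ passed to a call contains an edge $\tilde e$ with $Q_{x,\Comp}(\tilde e)$. This mirrors how \cref{cor:SimpleCriteriaCorrectness} follows from \cref{lem:SimpleTrichromaticDecisionCorrectness}. As there, we assume that some edge of $T_x$ satisfies $Q_{x,\Comp}$; otherwise the statement is vacuous or the final check detects this. The invariant then holds at the top-level call on $T_x$.

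\textbf{Base case.} When $T$ is a single edge $e$, the invariant says $Q_{x,\Comp}(e)$ holds, and that edge is returned.

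\textbf{Inductive step.} Let $e=uv$ be the centroid edge, with $u$ rootward. We reduce to a claim about the original tree: some edge of $T^e=T_x(v)\cup e$ satisfies $Q$ if and only if some edge of $T\cap T^e$ does. The ``if'' direction is trivial. For ``only if'', we use that subtrees produced by the elimination are connected, are closed under the structure of $T_x$, and contain $e$. Hence the portion of $T_x(v)$ inside $T$ is exactly what remains after earlier eliminations. Earlier eliminations removed only parts of $T_x$ that were shown to contain no $Q$-edge, or else removed everything outside a $T^{e'}$ known to contain one. By induction, any $Q$-edge in $T^e$ outside $T$ would contradict one of these earlier decisions.

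We then split into two cases.
\begin{itemize}
\item \emph{$u$ is $x$-colored.} The hypothesis gives $P_{x,\Comp}(e_1,e_2)\leftrightarrow \exists\,\tilde e\in T^e: Q(\tilde e)$. If $P$ holds, we recurse on $T\cap T^e$, which therefore contains a $Q$-edge. If $P$ fails, $T^e$ has no $Q$-edge, so the $Q$-edge guaranteed in $T$ lies in $T\setminus T_x(v)$.
\item \emph{$u$ is not $x$-colored.} By \cref{lem:VD_shortest_path_color}, no vertex of $T_x(v)$ is $x$-colored. Here we need $Q$ to be supported on edges with an $x$-colored endpoint, as in the trichromatic instance where $Q$ asks for an edge incident to the $x$-colored vertex of the face. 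This is the one place where an implicit requirement on $Q$ enters; we would state it explicitly, or note that every predicate we instantiate satisfies it. Under this requirement, the branch to $T\setminus T_x(v)$ is correct. One subtlety is the edge $e$ itself when $v$ is not $x$-colored but $u$ is; that situation belongs to the first case.
\end{itemize}

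\textbf{Running time.} Since $e$ is a centroid (\cref{def:centroidDecomposition}), each recursive call shrinks the tree by a constant factor, so there are $O(\log|H|)$ levels. Each level performs one centroid lookup, one color test on $u$, at most one call to {\sc SimpleFindCritical}, and one evaluation of $P$. Centroid maintenance is supported by the MSSP with dynamic trees, as noted in the preliminaries, and is dominated by the other operations. The main obstacle is the ``only if'' direction of the restriction claim together with the requirement on $Q$ in the second case; the rest is bookkeeping.
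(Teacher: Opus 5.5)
Your proposal is correct and makes precise the argument the paper only gestures at. The paper gives no separate proof; it says the lemma generalizes the reasoning by which \cref{cor:SimpleCriteriaCorrectness} follows from \cref{lem:SimpleTrichromaticDecisionCorrectness}, and your invariant argument (with the restriction claim holding because any earlier step that kept $T_{\mathrm{cur}}\cap T^{e'}$ has $T^e\subseteq T^{e'}$, since $e$ survived it) is exactly that reasoning. Your flagged requirement that every $Q$-edge have an $x$-colored endpoint is a real implicit assumption of the statement, since the branch for a non-$x$-colored $u$ discards $T_x(v)$ without consulting $P$, and it holds for both objective predicates the paper instantiates.
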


In terms of this generalization, \cref{thm:SimpleVoronoiSite} of the previous section used the formulation of \cref{lem:P_Q_formulation} with the decision predicate $P = P_{trichromatic}$ specified in \cref{alg:triDec}, and the objective predicate $Q=Q_{trichromatic}$ being an edge incident to the green vertex of the trichromatic face $\tilde f$.

We will later introduce another pair of predicates which will be used to locate the last $x$-colored edge along a prefix of a bisector.

\subsection{The Procedure {\sc TreeElimination}}
In its general form, the goal of the procedure  {\sc TreeElimination} is as follows. Let $x$ be some site, and let $T_x$ be the shortest path tree rooted at $x$. The goal of the procedure {\sc TreeElimination} is to find some edge $\tilde e$ that satisfies an objective predicate $Q(\tilde e)$ in a subtree of $T_x$ that is represented by a coarse tree $\hat T$ at some level $\ell$ of the recursion. 
Thus, for example, to achieve our main goal and find the trichromatic face $\tilde f$ of three sites of a Voronoi diagram at recursion level $i$, we will use {\sc TreeElimination} on the entire tree $T_x$ which is represented by the level $i$ coarse tree $\hat T_x$ where $x$ is one of the sites, with the decision predicate $P_{trichromatic}$ (where the other two sites are the competitor set).

At each step of the search we work with some subtree $\hat T$ of $\hat T_x$ that has not yet been eliminated. 
The level $\ell$ of coarseness of $\hat T$ may be $i$ or greater.
Let $\hat e=uv$ be a centroid edge of $\hat T$.
Let $e$ be the fine edge associated with $\hat e$ in $\hat T$.
The algorithm calls the procedure {\sc FindCritical} to obtain the two fine critical edges on the fundamental cycle of $e$ w.r.t. the fine tree $T_x$.
The main difficulty in finding the critical edges is that, unlike {\sc SimpleFindCritical} of~\cref{sec:simple}, we only have access to the coarse dual tree, but we can not afford to inspect the entire fine tree. 
Nonetheless the procedure finds and returns a pair of {\em fine} critical edges.
We describe this procedure in~\cref{subsec:findCritical}.

Having found the critical edges, the procedure {\sc TreeElimination} then decides whether to recurse on $ \{\hat e\} \cup \hat T(v)$ or on $\hat T \setminus \hat T(v)$ using the appropriate decision rule. 
See lines \ref{ln:CS_endblock_start}-\ref{ln:CS_endblock_end}  in \cref{alg:centsearch}
(\cref{alg:CS-stack} is analogous to \cref{ln:SCS_analogy} of {\sc SimpleTreeElimination}).

Eventually, {\sc TreeElimination} is called on a coarse subtree consisting of just a single coarse edge $\hat e=uv$.
The following lemma states how the desired fine edge $\tilde e$ relates to the single coarse edge
that has not been eliminated. This dictates how the search should be continued on a refined coarse tree.
\begin{lemma} \label{lem:coarse-4-subtrees}
    Suppose {\sc TreeElimination} is called on a coarse subtree consisting of just a single coarse edge $\hat e=uv$.
    Let $\tilde T$ be the maximal subtree of $T_x(u)$ that does not contain any vertex of $\hat H_x$ other than $u$ and $v$ (see~\cref{fig:T_tilde}).
    The fine edge $\tilde e \in T_x$ that satisfies the objective predicate $Q_P$, if it exists, is in $\tilde T$. 
\end{lemma}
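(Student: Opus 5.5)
Since each recursive step keeps exactly the side containing a good edge, I would argue by induction on the recursion of {\sc TreeElimination}, maintaining the invariant that the fine edge $\tilde e$ satisfying $Q$ (if it exists) lies in the part of $T_x$ represented by the current coarse subtree $\hat T$. That part is the set of fine edges represented by coarse edges of $\hat T$, together with the fine subtrees hanging off fine paths that were contracted away when forming $\hat T_x$. The invariant holds initially, because $\hat T_x$ represents all of $T_x$. For the inductive step I would use the hypothesis of \cref{lem:P_Q_formulation} for the fine edge $e$ associated with the centroid coarse edge $\hat e$. The decision predicate evaluated on the fine critical edges of $e$ tells us whether a good edge lies in the fine subtree $T_x(v_e)\cup\{e\}$, where $v_e$ is the leafward endpoint of $e$. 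Since $e$ is the root-most fine edge of the path contracted into $\hat e$, the fine subtree below $e$ is exactly the fine material represented by $\{\hat e\}\cup\hat T(v)$. This needs a small claim: every fine vertex in $T_x(v_e)$ is either on the contracted path of $\hat e$, in a contracted leaf subtree hanging off that path, or represented by $\hat T(v)$. I would prove it from the contraction rules. The complementary fine material is represented by $\hat T\setminus\hat T(v)$, so the invariant is preserved on both branches. The same reasoning covers the branch where $u$ is not $x$-colored: then no edge of $T_x(v_e)$ is $x$-colored, by \cref{lem:VD_shortest_path_color}.

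When only the single coarse edge $\hat e=uv$ remains, I would identify the fine material it represents. This is the fine path $T_x(u,v)$ together with all fine subtrees hanging off interior vertices of this path, plus possibly subtrees hanging off $u$ that were contracted into $\hat e$. By the definition of $\hat T_x$, the contraction rules only remove leaf edges whose leaf is not in $\partial R_i\cup\partial R_{i+1}$, and degree-2 vertices not in that set. Consequently no vertex of such a hanging subtree or of the interior of the path belongs to $\hat H_x$ (whose vertex set is $V(\hat T_x)$), so all of it lies in $T_x(u)$ and avoids $\hat H_x\setminus\{u,v\}$. Conversely, any fine vertex of $T_x(u)$ reached from $u$ without passing through another vertex of $\hat H_x$ either lies on the $u$-to-$v$ path or was contracted into it. It therefore lies in the maximal subtree $\tilde T$ of the statement. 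Hence the represented material equals (or is contained in) $\tilde T$, and the invariant gives $\tilde e\in\tilde T$.

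I expect the main obstacle to be the bookkeeping at $u$ itself. The coarse edge's material must be distinguished from subtrees of $T_x(u)$ that start at $u$ but belong to sibling coarse edges, which were eliminated. Those sibling subtrees contain vertices of $\hat H_x$ other than $u$ and $v$, namely the far endpoints of the sibling coarse edges, unless they were contracted. I would handle this by noting that any fine branch at $u$ containing no $\hat H_x$ vertex was contracted into a single coarse edge (and could only have been attached to $\hat e$ or been eliminated together with it). If necessary, I would accept that $\tilde T$ may be a superset of the represented material, which suffices because the lemma only asserts containment.
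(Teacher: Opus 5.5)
Your argument is correct, but it is organized differently from the paper's. The paper first extracts two facts from the decision rule about the final configuration. First, $\tilde e$ lies in the fine subtree $T^d$ of the associated edge of every coarse ancestor $\hat d$ of $\hat e$. Second, it lies outside $T^d$ for every coarse non-ancestor. It then argues by contradiction in two cases. If $\tilde e$ is below $u$ but outside $\tilde T$, the lowest $\hat H_x$-vertex $w$ above it has a non-ancestor coarse parent edge whose fine subtree contains $\tilde e$. If $\tilde e$ is not below $u$, the step that eliminated the coarse edge leading into $u$ gives the contradiction.

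You instead carry a forward invariant and read the containment off the end state. This avoids the case analysis and gives an explicit description of the surviving fine material, which is in general strictly smaller than $\tilde T$. The price is that everything rests on how fine material is attributed to coarse subtrees, and there your write-up is loose. Subtrees without $\hat H_x$ vertices hanging off a coarse vertex are contracted into that \emph{vertex}, not into a coarse edge. The clean convention is to charge to each coarse edge $\hat d=(p,w)$ the piece $T^d\setminus\bigcup_{\hat c}T^c$, where the union is over the coarse child edges $\hat c$ of $w$, plus a separate root piece. With this convention the pieces partition $T_x$, and the split at a centroid $\hat e$ separates the pieces of the current $\hat T$ exactly by membership in $T^e$. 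Your ``exactly'' has to be read as ``within the current material'', since $\hat T(\hat v)$ may already have been trimmed.

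The final material is then the path $T_x(u,v)$, what hangs off its interior, and the contracted subtrees at $v$ (not at $u$, as you wrote), plus the root piece if $u=x$. All of this lies in $\tilde T$. Your ``conversely'' sentence is false: $\tilde T$ also contains the initial portions of the other coarse edges at $u$ and of the child coarse edges at $v$. That is why the algorithm afterwards probes every fine edge at $u$ and $v$. Since only containment is needed, neither slip affects your conclusion.
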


\begin{figure}[H]
    \centering
    \begin{minipage}{0.28\textwidth}
        \centering
        \includegraphics[width=\linewidth]{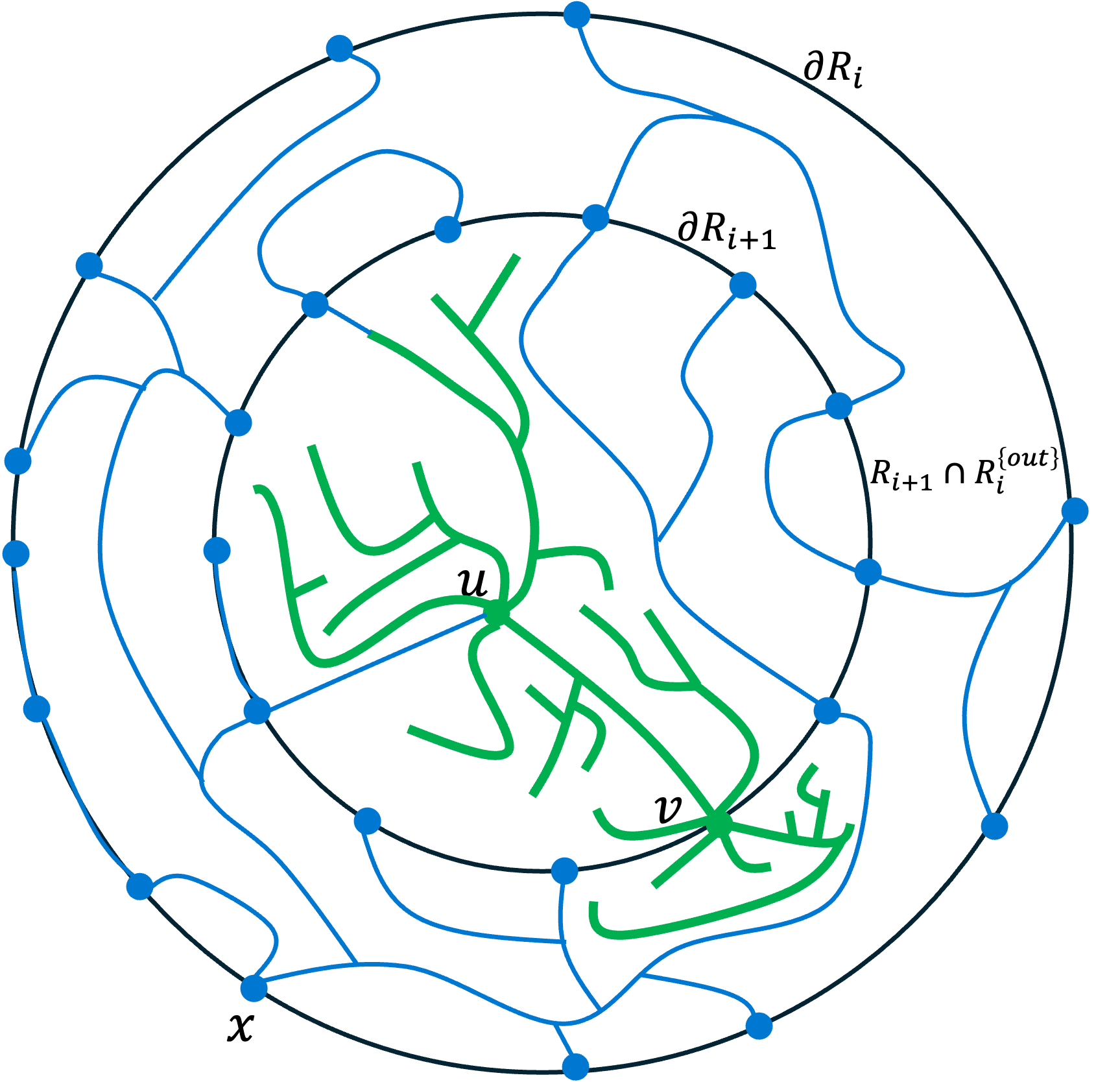}
    \end{minipage}%
    \hfill
    \begin{minipage}{0.7\textwidth}
        \captionof{figure}{%
            A schematic diagram of $\tilde T$ from Lemma~\ref{lem:coarse-4-subtrees}. 
            The boundary vertices of $R_i$ and $R_{i+1}$ are shown as blue nodes.
            The vertices $u, v$ are indicated by green nodes.
            The edges of the coarse tree $\hat T_x$ which are not in $\tilde T$ are indicated by blue lines.
            The edges of $\tilde T$ are indicated by green lines.
            }
        \label{fig:T_tilde}
    \end{minipage}
\end{figure}

\begin{proof}
    Recall that we assume that the predicate $P$ used in procedure {\sc TreeElimination} satisfies the following property: 
    For any edge $e \in T_x$ whose rootward vertex is $x$-colored,  $P_{x,\Comp}(e_1,e_2) \leftrightarrow \exists\  \tilde e \in \textrm{ subtree of } e \textrm { in } T_{x} \ : Q_{x,\Comp}(\tilde e)$ (for $P_{trichromatic}$ this was already established in \cref{lem:SimpleTrichromaticDecisionCorrectness}, for the other predicate that was not yet introduced, this will be shown in \cref{lem:BisectorPrefixDecisionCorrectness}).
    This guarantees that whenever {\sc TreeElimination} recurses on $\hat T(\hat v) \cup \{\hat e\}$ (in \cref{ln:CS_endblock_if_then}), then $\tilde e$ is in the subtree of $T_x$ rooted at $e$ (defined in \cref{ln:CS_endblock_start} as the fine edge associated with $\hat e$).
    Conversely, if {\sc TreeElimination} recurses on $\hat T \setminus \hat T(\hat v)$ then $\tilde e$ is not in the subtree of $T_x$ rooted at $e$.

    It follows that if only one coarse edge remains, then for every coarse edge $\hat d$ that is an ancestor of $\hat e$ in $\hat T$, $\tilde e$ is in the subtree of $T_x$ rooted at the fine edge $d$ associated with $\hat d$.
    Conversely, for every coarse edge $\hat d$ that is not an ancestor of $\hat e$ in $\hat T$, $\tilde e$ is not in the subtree of $T_x$ rooted at the fine edge $d$ associated with $\hat d$.

    Assume towards contradiction that $\tilde e \notin \tilde T$.
    
    If $\tilde e$ is a descendant of $u$, then let $w$ be the lowest ancestor of $\tilde e$ in $T_x$ that belongs to $\hat H_x$. 
    Since $\tilde e \notin \tilde T$, $w\neq v$ and $w$ is a descedent of $u$.
    Hence, the parent edge $\hat d$ of $w$ in $\hat T$ is not an ancestor edge of $u$, and therefore $\tilde e$ is not in the subtree of $T_x$ rooted at the fine edge $d$ associated with $\hat d$. 
    However, $w$ is in the subtree of $T_x$ rooted at the $d$, so $\tilde e$ must also be in that subtree, a contradiction. 
    
    Otherwise, $\tilde e$ is not a descendant of $u$.
    Let $w$ be a common ancestor of $u$ and $\tilde e$ that belongs to $\hat H_x$.
    Since $\tilde e \notin \tilde T$, $w$ is a strict ancestor of $u$ in $\hat T$.
    The coarse edge $\hat d$ leading from $w$ to $u$ in $\hat T$ exists but was eliminated.
    Since $\hat e$ was not eliminated, then $\hat d$ was eliminated when {\sc TreeElimination} was called on some strict descendant $\hat d'$ of $\hat d$ that is an ancestor of $\hat e$ and that $\tilde e$ is in the subtree of $T_x$ rooted at the fine edge $d'$ associated with $\hat d'$, a contradiction.
    
\end{proof}

To continue the search for the fine edge $\tilde e$ that satisfies the objective predicate, {\sc TreeElimination} should now recurse on a coarse tree at a finer level of coarseness that will lead to $\tilde e$.
By \cref{lem:coarse-4-subtrees}, we focus on the subtree $\tilde T$, the maximal subtree of $T_x(u)$ that does not contain any vertex of $\hat H_x$ other than $u$ and $v$.
Note that parts of $\tilde T$ might not even be represented by the coarse tree $\hat T_x$ at level $\ell$.
Instead, we consider each of the fine edges $e'$ incident to $u$ or to $v$.
Recall that the degree of each vertex is $O(1)$. Therefore, there are only $O(1)$ such fine edges $e'$. 

We would like to identify the fine edge $e'$ to whose subtree $\tilde e$ belongs.
Before describing how this is done, we need to discuss how to keep track of the levels of coarseness along the computation. 
To this end, the procedure {\sc TreeElimination} maintains a stack $X$ with a sequence of sites of increasingly finer  resolution (and hence that belong to regions of increasing levels) that starts with the original site $x$ with which the base recursive call was made. With each site in the stack we also specify the Voronoi diagram to which it belongs. See the text after \cref{prop:refinement-correctness} for the description of how this stack is populated.

Now, going back to identifying the fine edge $e'$ to whose subtree $\tilde e$ belongs, we first identify, for each of the $O(1)$ candidates $e'$, the minimal level $\ell'\geq \ell$ at which $e'$ is the associated edge with some coarse edge $\hat e'$ (recall the association of fine edges to coarse edges from \cref{sec:dualtreedef}).
This is done as follows.
We create a temporary copy $X'$ of the stack $X$.
We go over the levels, one after the other, starting at level $\ell$.
At each new level $k>\ell$ we push onto $X'$ the tuple $(x_k, \VD^*(x_{k-1}, R_k^{out}))$, where $x_k$ is the lowest ancestor of $u$ (or $v$) in $\hat T_{x_{k-1}}$ that belongs to $\partial R_k$.
We stop when we have reached the level $\ell'$ at which $e'$ is the associated fine edge of one of the $O(1)$ coarse edges $\hat e'$ incident to $u$ or $v$ in the coarse tree $\hat T_{x_{\ell'}}$ at the level $\ell'$.
Cf. \cref{alg:CS-ell} of {\sc TreeElimination}.

Having found $\ell'$, we call {\sc FindCritical} on $\hat e'$ with stack $X'$ (\cref{alg:CS-ell-call}) and apply the decision rule on the critical edges (\cref{alg:CS-ell-apply}).
In this way we identify the single fine edge $e'$ of $\tilde T$, incident to either $u$ or $v$, for which the decision rule is satisfied.
Note that in fact if $e'$ is incident to $v$ then the decision rule is satisfied also for the fine edge associated with the coarse edge $uv$.
If this is the case we take $e'$ to be the leafward edge, i.e. the one incident to $v$.
From the definition of $\tilde T$ it follows that if $e'$ is in $R_\ell^{out}\cap R_{\ell+1}$ then $\tilde e$ is also in $R_\ell^{out}\cap R_{\ell+1}$, and if $e'$ is in $R_{\ell+1}^{out}$ then $\tilde e$ is also in $R_{\ell+1}^{out}$.

If $e'$ is in $R_\ell^{out}\cap R_{\ell+1}$, then we continue the tree elimination procedure on the fine shortest path tree of $T_x$ in $R_\ell^{out}$, which is available in the MSSP data structure for $R_\ell^{out}\cap R_{\ell+1}$ from part (A) (lines \ref{ln:case_annulus_start}-\ref{ln:case_annulus_end}).

Otherwise,  $e'$ is in $R_{\ell+1}^{out}$. 
We assume that $e'$ is incident to $u$ (otherwise just replace $u$ with $v$ in the following description).  
Let $\gamma$ be lowest ancestor of $u$ in $\hat{T}_x$ that belongs to $\partial R_{\ell+1}$ (possibly $\gamma=u$).
The subtree of $\tilde T$ rooted at $e'$ is a contained in the subtree of $T_x$ rooted at $\gamma$ confined to $R_{\ell+1}^{out}$.
Therefore, it suffices to search for $\tilde e$ in the latter.
Let $T_\gamma$ be the shortest path tree in $R_{\ell+1}^{out}$ rooted at $\gamma$.
Note that $T_\gamma$ is not necessarily a subtree of $T_x$ (though they necessarily agree on the path from $\gamma$ to $u$).
However, by definition of Voronoi cells, the following holds:
The subtree of $T_x$ rooted at $\gamma$ confined to $R_{\ell+1}^{out}$ is exactly the subtree of $T_\gamma$ that does not leave the Voronoi cell $\Vor(\gamma)$ of $\gamma$ in $\VD(x,R_{\ell+1}^{out})$.
Hence we obtain:
\begin{proposition}\label{prop:refinement-correctness}
The edge $\tilde e$ is contained in the subtree of $T_\gamma$ that does not leave the Voronoi cell $\Vor(\gamma)$ in $\VD(x,R_{\ell+1}^{out})$.
\end{proposition}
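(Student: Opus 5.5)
We chain two containments: first $\tilde e$ lies in the subtree of $T_x$ rooted at $\gamma$ confined to $R_{\ell+1}^{out}$, and then that confined subtree equals the part of $T_\gamma$ that stays inside $\Vor(\gamma)$ of $\VD(x,R_{\ell+1}^{out})$.

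\textbf{Step 1: $\tilde e$ lies in the confined subtree of $T_x$ at $\gamma$.} By \cref{lem:coarse-4-subtrees}, $\tilde e\in\tilde T$. By the choice of $e'$ (the unique fine edge incident to $u$ for which the decision rule holds), $\tilde e$ lies in the subtree of $\tilde T$ rooted at $e'$. Since $e'\in R_{\ell+1}^{out}$, the definition of $\tilde T$ implies that this subtree does not cross $\partial R_{\ell+1}$ again. Crossing $\partial R_{\ell+1}$ would create a vertex of $\partial R_{\ell+1}\subseteq \hat H_x$ distinct from $u$ and $v$, which $\tilde T$ excludes. Hence the subtree lies in $R_{\ell+1}^{out}$. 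It hangs below $u$, and $u$ is a descendant of $\gamma$. The $T_x$-path from $\gamma$ to $u$ stays in $R_{\ell+1}^{out}$, because $\gamma$ is the lowest ancestor of $u$ on $\partial R_{\ell+1}$ and $e'$ is on the outside. Therefore $\tilde e$ belongs to the subtree of $T_x$ rooted at $\gamma$ confined to $R_{\ell+1}^{out}$.

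\textbf{Step 2: the confined subtree of $T_x$ equals the part of $T_\gamma$ inside $\Vor(\gamma)$.} Every path from $x$ to a vertex $w\in R_{\ell+1}^{out}$ passes through $\partial R_{\ell+1}$. Hence $\dist_G(x,w)=\min_{s\in\partial R_{\ell+1}}\bigl(\weight_x(s)+\dist_{R_{\ell+1}^{out}}(s,w)\bigr)$. The minimizing $s$ is the last boundary vertex on the $x$-to-$w$ shortest path, which is the site of $w$'s cell under the stated tie-breaking rule (as in the proof of \cref{lem:VD0-Tx}). Consequently, a vertex $w$ is in the confined subtree of $T_x$ at $\gamma$ exactly when $\gamma$ is the last boundary vertex on the $x$-to-$w$ shortest path, that is, exactly when $w\in\Vor(\gamma)$. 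For such $w$, the $x$-to-$w$ path after $\gamma$ is a shortest $\gamma$-to-$w$ path inside $R_{\ell+1}^{out}$. By uniqueness of shortest paths, it coincides with the $T_\gamma$ path. By \cref{lem:VD_shortest_path_color}, $\Vor(\gamma)$ is a connected subtree of $T_\gamma$ rooted at $\gamma$. So the edges of the confined subtree of $T_x$ are exactly the edges of $T_\gamma$ with both endpoints in $\Vor(\gamma)$.

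\textbf{Main obstacle.} The delicate point is distances and tie-breaking at boundary vertices. $T_\gamma$ is defined in $R_{\ell+1}^{out}$ while $T_x$ is defined in $G$, and a boundary vertex $s\ne\gamma$ reached via $\gamma$ is assigned to its own cell. This must be reconciled with "confined" meaning the subtree stops at boundary vertices, so that edges entering another site count as leaving $\Vor(\gamma)$. This matches \cref{lem:VD0-Tx}, and I would handle it by invoking that lemma directly.
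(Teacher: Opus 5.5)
Your proposal is correct and follows the paper's own two-step argument: $\tilde e$ lies in the subtree of $T_x$ rooted at $\gamma$ confined to $R_{\ell+1}^{out}$ (via \cref{lem:coarse-4-subtrees} and the choice of $e'$), and this confined subtree is exactly the part of $T_\gamma$ inside $\Vor(\gamma)$. Your decomposition of distances through $\partial R_{\ell+1}$ and your use of the tie-breaking rule just spell out what the paper compresses into ``by definition of Voronoi cells.''

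One detail is missing in Step 1. If $v\in\partial R_{\ell+1}$ and $e'$ is the first fine edge of the $u$--$v$ path, the subtree of $\tilde T$ below $e'$ can pass through $\partial R_{\ell+1}$ at $v$ itself, which your ``distinct from $u$ and $v$'' argument does not rule out. What handles this is the paper's convention, stated just before the proposition, that $e'$ is taken at $v$ whenever the decision rule holds there; you should invoke it explicitly.
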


\alglanguage{pseudocode}
\begin{algorithm}[t!]
\caption{\textsc{TreeElimination}$(\ell,x_b,Y,\hat T,X,P)$
\label{alg:centsearch}}
\textbf{Input:} 
\begin{itemize}[itemsep=0pt, topsep=4pt]
    \item $\ell$  - recursive level
    \item $X$ - stack of pairs $(x_{b+1}, \VD^*(x_b, R_{b+1}^{out})), \dots , (x_{\ell}, \VD^*(x_{\ell-1}, R_{\ell}^{out}))$ 
    \item $x_b$ - root site at level $b$
    \item $Y$ - set of competitor sites
    \item $\hat T$ - subtree of shortest path tree rooted at $x_\ell$
    \item $P$ - decision predicate
\end{itemize}
\textbf{Output:} A fine edge $e$ in the intersection of Voronoi cells in the stack $X$ that is represented in $\hat T$ and satisfies the objective predicate $Q_P$. 
\begin{algorithmic}[1]
    \If{$\hat T$ consists of a single edge $\hat e=(u, v)$}
            \If {$\hat e$ is a fine edge in $R_\ell^{out} \cap R_{\ell+1}$}\Comment{base of recursion}
                \State \Return $\hat e$
            \EndIf
            \For{$e'=(u',v') \leftarrow$ fine edges incident to $u$ or $v$}
                \State Let $\ell'\geq \ell$ be the smallest level at which $e'$ is the associated edge of some coarse edge $\hat e'$ \label{alg:CS-ell}
                \State Let $X'$ be the corresponding stack upon reaching $\ell'$ in the search for $e'$
                \State $e_1,e_2$ = {\sc FindCritical}$(\ell', X', x_b, e', \hat e', Y)$ \label{alg:CS-ell-call}
                \If{P($e_1, e_2$)} \label{alg:CS-ell-apply}
                    \If{$e' \in R_\ell^{out} \cap R_{\ell+1}$} \label{ln:case_annulus_start}
                        \State $\hat T \leftarrow $ shortest path tree of $R_\ell^{out}$ rooted at $x$ from MSSP structure (part (A))
                        \State $\hat T \leftarrow $ subtree of $\hat T$ rooted at $u'$
                        \State \Return {\sc  TreeElimination}$(\ell,x_b,Y,\hat T,X,P)$ \label{ln:case_annulus_end}
                    \Else \Comment{advance in $R_{\ell+1}^{out}$} \label{ln:case_out_start}
                        \State $x_{\ell+1} \leftarrow $ lowest ancestor of $u$ on $\partial R_{\ell+1}$
                        \State $\hat T \leftarrow $ shortest path tree of $R_{\ell+1}^{out}$ rooted at $x_{\ell+1}$
                        \State push $(x_{\ell+1}, \VD^*(x_\ell, R_{\ell+1}^{out}) )$ onto $X$
                        \State \Return {\sc  TreeElimination}$(\ell+1,x_b,Y,\hat T,X,P)$ \label{ln:case_out_end}
                    \EndIf
                \EndIf
            \EndFor
    \EndIf
    \State Let $e=uv$ be the fine edge associated with the centroid edge $\hat e = (u,\hat v)$ of $\hat T$ \Comment{$u$ is rootward} \label{ln:CS_endblock_start}
    \If {($v$ is not $x_b$-colored) or ($v$ is not in $\Vor(x_i)$ in $\VD^*(x_{i-1}, R_{i}^{out})$ for some $x_i \in X)$} \label{alg:CS-stack}
        \State \Return \textsc{TreeElimination}$(\ell,x_b,Y,\hat T \setminus \hat T(\hat v),X,P)$
    \EndIf
    \State $e_1,e_2$ = {\sc FindCritical}$(\ell, X, x_b, e, \hat e, Y)$
    \If{P($e_1, e_2$)}
        \State \Return \textsc{TreeElimination}$(\ell,x_b,Y,\hat T(\hat v) \cup \{\hat e\},X,P)$ \label{ln:CS_endblock_if_then}
    \Else
        \State \Return \textsc{TreeElimination}$(\ell,x_b,Y,\hat T \setminus \hat T(\hat v),X,P)$ \label{ln:CS_endblock_end}
    \EndIf
    \Statex
\end{algorithmic}
\vspace{-0.4cm}%
\end{algorithm}

Accordingly, we recurse on the coarse tree $\hat T_\gamma$ at level $\ell+1$, and indicate that for the rest of the recursion we are only interested in edges that belong to $\Vor(\gamma)$ by pushing the information to the stack $X$ of Voronoi cells that is passed to the recursive call (lines \ref{ln:case_out_start}-\ref{ln:case_out_end}). 
Each entry in the stack $X$ is a tuple $(\gamma,\VD^*(x,R_{\ell+1}^{out}))$ containing the identity of site $\gamma$ and the identity of the respective $\VD^*(x,R_{\ell+1}^{out})$ diagram.

We therefore need to make our procedures aware of the stack $X$ by the following adaptations. 
\begin{itemize}
    \item When we handle a new centroid edge $\hat e'$ of $\hat{T}$, we must verify that $\hat e'$ indeed belongs to all the Voronoi cells on the stack $X$. See \cref{alg:CS-stack} of {\sc TreeElimination}. 
    This is done by calling, for each entry $(x_{i+1},\VD^*(x_i,R_{i+1}^{out}))$ on the stack, the procedure {\sc PointLocate}$(\VD^*(x_i,R_{i+1}^{out}), \hat v')$, for each endpoint $\hat v'$ of $\hat e'$, which returns the site of $\partial R_{i+1}$ in whose cell $\hat v'$ resides. Note that all the Voronoi diagrams $\VD^*(x_i,R_{i+1}^{out})$ in stack $X$ were already computed earlier in the construction algorithm.
    
    \item In the procedure {\sc FindCritical}, we must make sure that the fundamental cycle of the fine edge $e^*$ we consider is with respect to $T^*_x$, not with respect to $T^*_\gamma$ (or to any finer tree $T^*_{\gamma'}$ that we may work with deeper in the recursion). 
    This will be done by "shortcutting" the fundamental cycle of $e^*$ (w.r.t. the tree of the current recursive level) along the boundary of the Voronoi cells of the sites at earlier recursive levels through which the recursion reached the current edge. The description of this structure, its correctness, and use in the procedure {\sc FindCritical} is described in \cref{subsec:recFunCyc}.
\end{itemize}

Analogously to \cref{lem:P_Q_formulation} and from the entire discussion in the current section we obtain that the following lemma holds.

\begin{lemma}\label{lem:TreeElimination_correctness}
Let $P_{x,\Comp}(\cdot,\cdot)$ and $Q_{x,\Comp}(\cdot)$ be a pair of decision and objective predicates. Suppose that any edge $e \in T_x$ where the rootward vertex of $e$ is $x$-colored, satisfies that $P_{x,\Comp}(e_1,e_2) \leftrightarrow \exists\  \tilde e \in \textrm{ subtree of } e \textrm { in } T_{x} \ : Q_{x,\Comp}(\tilde e)$, where $e_1$ and $e_2$ are the critical edges of $e$ w.r.t. $T_x$.
Further suppose that the procedure {\sc FindCritical} is implemented correctly.
Then, {\sc TreeElimination} correctly returns an edge $e$ for which $Q(e)$ holds.

\end{lemma}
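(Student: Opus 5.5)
The plan is to prove the statement by induction on the number of recursive calls made by {\sc TreeElimination}, using a single invariant. Let $\tilde e$ be the (unique, by the objective predicate) fine edge of $T_x$ with $Q_{x,\Comp}(\tilde e)$. The invariant is: in every call {\sc TreeElimination}$(\ell,x_b,Y,\hat T,X,P)$, the edge $\tilde e$ is a fine edge represented in $\hat T$, and $\tilde e$ lies in the intersection of the Voronoi cells $\Vor(x_{k})$ of all entries $(x_k,\VD^*(x_{k-1},R_k^{out}))$ on the stack $X$. Equivalently, $\tilde e$ lies in the subtree of $T_x$ that $\hat T$ represents. The initial call is on the whole coarse tree $\hat T_x$ at level $i$ with an empty stack, so the invariant holds trivially, since $\hat T_x$ is a coarsening of $T_x$.

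The first step is to show the invariant is preserved by centroid steps (lines \ref{ln:CS_endblock_start}--\ref{ln:CS_endblock_end}). This mirrors \cref{lem:P_Q_formulation}.
\begin{itemize}
\item If the leafward endpoint $v$ of the fine edge $e$ associated with $\hat e$ fails the color test, or fails a point location test against the stack, then no descendant of $v$ can carry $\tilde e$. This follows from \cref{lem:VD_shortest_path_color}: cells are subtrees of the respective shortest path trees, so discarding $\hat T(\hat v)$ is safe.
\item Otherwise the rootward vertex is $x$-colored, and by the hypothesis $P(e_1,e_2)$ holds iff $\tilde e$ is in the subtree of $e$ in $T_x$. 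Here $e_1,e_2$ are the true fine critical edges w.r.t.\ $T_x$, which is exactly what the assumed-correct {\sc FindCritical} returns. Since the fine subtree of $e$ is represented precisely by $\hat T(\hat v)\cup\{\hat e\}$, both branches preserve the invariant.
\end{itemize}

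The second step handles the single-coarse-edge case. By \cref{lem:coarse-4-subtrees}, $\tilde e$ lies in $\tilde T$, whose edges all hang below one of the $O(1)$ fine edges $e'$ incident to $u$ or $v$. For each candidate, the same hypothesis applied to $e'$ (with critical edges again supplied by {\sc FindCritical} at level $\ell'$ with stack $X'$) identifies the unique $e'$ whose subtree contains $\tilde e$. The leafward tie-break convention resolves the case where both $uv$'s edge and an edge at $v$ qualify.

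There are then two subcases.
\begin{itemize}
\item If $e'\in R_\ell^{out}\cap R_{\ell+1}$, then $\tilde e$ is also there, by the definition of $\tilde T$. The fine tree from the part (A) MSSP rooted at $u'$ represents that subtree, so the invariant holds for the new call.
\item If $e'\in R_{\ell+1}^{out}$, then \cref{prop:refinement-correctness} says $\tilde e$ lies in the part of $T_\gamma$ inside $\Vor(\gamma)$ of $\VD(x,R_{\ell+1}^{out})$. Pushing $(\gamma,\VD^*)$ onto $X$ and recursing on $\hat T_\gamma$ restores the invariant in its stack form. Here \cref{pro:gamma_subtree_level_association} guarantees that the fine edges continue to be represented at the finer level.
\end{itemize}

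The third step is termination and the output. Each centroid step shrinks $\hat T$ by a constant factor. Each refinement either increases $\ell$, which is bounded by $m$, or drops to a fine MSSP tree where $\ell$ no longer changes and subsequent steps only shrink. So the process ends at the base case, a single fine edge in $R_\ell^{out}\cap R_{\ell+1}$, and by the invariant that edge is $\tilde e$.

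I expect the main obstacle to be the stack form of the invariant. After refinement we search $T_\gamma$, not $T_x$, so I must argue that the subtree of $T_\gamma$ restricted to all stacked cells coincides with the corresponding subtree of $T_x$. Then centroid decisions made on coarse edges of $\hat T_\gamma$ are decisions about subtrees of $T_x$, and the hypothesis on $P$ (stated for $T_x$) applies. I would prove this by induction on the stack depth via \cref{lem:VD_shortest_path_color} and the tie-breaking rule. The rule ensures that a vertex of $\Vor(\gamma)$ has, in $T_x$, its path through $\gamma$ followed by the $T_\gamma$ path.
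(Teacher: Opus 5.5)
This is essentially the paper's argument: the paper gives no standalone proof but assembles exactly your ingredients (the centroid step as in \cref{lem:P_Q_formulation}, \cref{lem:coarse-4-subtrees} for a single remaining coarse edge, and \cref{prop:refinement-correctness} plus the stack of Voronoi cells for refinement, with {\sc FindCritical} supplying the critical edges w.r.t.\ $T_{x_b}$), and your invariant just makes that bookkeeping explicit. Two small fixes are needed. First, $\tilde e$ is not unique: in both uses $Q$ holds for every edge of $T_x$ incident to one fixed vertex $w$ (the green vertex of $\tilde f$, or the $\alpha$-colored endpoint of $e^*_c$), so state the invariant for $w$, as \cref{lem:coarse-4-subtrees} also tacitly does. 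Second, $T_x(v)$ can leave the stacked cells through other boundary vertices, so the identity you need is only that $T_\gamma(v)$ and $T_x(v)$ agree inside the stacked cells, which suffices because $w$ lies there.
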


\subsection{The procedure {\sc FindCritical}}\label{subsec:findCritical}

The objective of {\sc FindCritical} is to find $e^*_j, j\in {1,2}$, the fine ($x_b$-$Y$)-critical edge on each of the rootward paths $P^*_j$ of $C^*_{x_b}$, the fundamental cycle of $e$ w.r.t $T^*_{x_b}$. The main difference in comparison to {\sc SimpleFindCritical} is that we do not have the fine representation of $C^*_{x_b}$, but only a coarse one.

The procedure (provided in \cref{alg:findCritical}) first decomposes $P^*_j$ into $O(m)$ segments identifying the fine edges that delimit each segment. 
This is done using the procedure {\sc SegmentBreakdown} which is described in \cref{subsec:recFunCyc}.
Then, the procedure preforms a binary search on the (fine) edges that delimit the segments to locate the first edge among them that is not $x_b$-colored.
The color of an edge can be determined using $O(1)$ distance queries from the competing sites, using Voronoi diagrams that were computed at earlier stages.

This way a single segment, $s$, that contains the fine edge $e^*_j$ is identified.
Next, $e^*_j$ is found by the procedure {\sc FindCriticalOnSegment} on segment $s$, which is described in \cref{subsec:findCriticalOnSegment}.

\alglanguage{pseudocode}
\begin{algorithm}[H]
\caption{\textsc{FindCritical}$(i,X,x_b,e,\hat e,Y)$
\label{alg:findCritical}}
\textbf{Input:}
\begin{itemize}[itemsep=0pt, topsep=4pt]
    \item $i$ - recursive level
    \item $X$ - stack of pairs $(x_{b+1}, \VD^*(x_b, R_{b+1}^{out})), \dots , (x_{i}, \VD^*(x_{i-1}, R_{i}^{out}))$ 
    \item $x_b$ - root site at level $b$
    \item $e$ - a fine edge in $T_{x_i}$ that is in the intersection of the Voronoi cells in the stack $X$
    \item $\hat e$ - the primal coarse edge in  $\hat T_{x_i}$ that represents  the fine edge $e$.
    \item $Y$ - set of competitor sites of $x_b$
\end{itemize}
\textbf{Output:} 
The two ($x_b$-$Y$)-critical edges $(e_1,e_2)$ of $e$ w.r.t. $T_{x_b}$.
\begin{algorithmic}[1]
    \For{$P^*_j, j \in \{1,2\}$} \Comment{paths of the fundamental cycle $C^*_{e}$}
        \State $S \leftarrow$ {\sc SegmentBreakdown} of $P^*_j$
        \State Using binary search eliminate all segments in $S$ except a single segment $s$
        \State $e_j \leftarrow$ {\sc FindCriticalOnSegment} on $s$ \label{line:FCcallsFCOS}
    \EndFor
    \State \Return $\{e_1, e_2\}$
\end{algorithmic}
\end{algorithm}

\subsection{Decomposing the Fundamental Cycle}\label{subsec:recFunCyc}

Before describing {\sc SegmentBreakdown} we establish some required structural properties.
Recall that $i$ is the current recursive level at which {\sc FindCritical} was called, $b$ is the level of the  base call to {\sc TreeElimination} (i.e. the level of the vertex w.r.t. which the first Voronoi diagram on the stack is defined).
Fix a fine edge $e=uv$ of $T_{x_b}$ that is in $R_i^{out}$.
Let $m > \ell \geq i$ be the level at which the edge $e$ is an edge of an MSSP structure (i.e. $e \in R_\ell^{out} \cap R_{\ell+1}$).
Recall that $x_b, \dots x_\ell$ are such that for every $b \leq k \leq \ell$, $x_k$ is the last vertex of $\partial R_k$ on $T_{x_b}(x_b,u)$.
Note that this implies the following:
\begin{enumerate}
    \item For any $b\leq k \leq j \leq \ell$, the vertex $x_j$ is the last vertex of $\partial R_j$ on $T_{x_k}(x_k,u)$. 
    \item For any $b\leq k \leq j \leq \ell$, $T_{x_j}(x_j,v)$ ends with $e$ and $T_{x_j}(x_j,v)$ is a suffix of $T_{x_k}(x_k,v)$.
    \item For any $b< k \leq \ell$, $e$ is strictly inside $\Vor(x_k)$ in $\VD(x_{k-1},R_k^{out})$.
\end{enumerate}

The following lemmas describe the structural relations between fundamental cycles of  $e^*$ w.r.t. dual trees $T^*_{x_k}$'s at different levels (i.e., for different $x_k$'s), and the relations between fundamental cycles of $e^*$ w.r.t. a dual tree $T^*_{x_k}$ and the Voronoi diagrams of sites $x_j$ for $j<k$.
This structure will allow us to represent the fundamental cycle of $e^*$ w.r.t. to $T^*_{x_b}$ using segments of the coarse trees $\hat T^*_{x_k}$ for $k$' in the range $[b,\ell]$ that we already have available.

For any $k$, let $P^{*k}$ denote the right rootward path of the fundamental cycle of $e^*$ w.r.t. $T^*_{x_k}$. 
The arguments for the left rootward paths are identical.
For the remainder of this section when we write $Vor(x_k)$ we mean $\Vor(x_k)$ in $\VD(x_{k-1},R_k^{out})$.

\begin{proposition}\label{cor:dual-fund-cycle-in-vor-cell-coincide-for-consecutive-levels}
The prefixes of $P^{*k}$ and $P^{*(k-1)}$
that are strictly in $\Vor(x_k)$ are the same. 
\end{proposition}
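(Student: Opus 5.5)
The plan is to show that, inside $\Vor(x_k)$, the cotrees $T^*_{x_k}$ and $T^*_{x_{k-1}}$ agree edge by edge, and then to follow both rootward paths from $e^*$ simultaneously. Recall that $\Vor(x_k)$ is taken in $\VD(x_{k-1},R_k^{out})$, and that $T_{x_k}$ is the shortest path tree of $R_k^{out}$ rooted at $x_k$. By the tie-breaking rule and the argument in the proof of \cref{lem:VD0-Tx}, a vertex $w\in R_k^{out}$ lies in $\Vor(x_k)$ exactly when the last vertex of $\partial R_k$ on $T_{x_{k-1}}(x_{k-1},w)$ is $x_k$. For such $w$, the suffix of $T_{x_{k-1}}(x_{k-1},w)$ starting at $x_k$ lies in $R_k^{out}$ and is a shortest path there. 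Since shortest paths are unique, this suffix equals $T_{x_k}(x_k,w)$. Hence every vertex $w\in\Vor(x_k)$ other than $x_k$ has the same parent edge in $T_{x_k}$ and in $T_{x_{k-1}}$, and \cref{lem:VD_shortest_path_color} shows that this parent also lies in $\Vor(x_k)$.

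Next I show that a fine edge $d=wz$ strictly inside $\Vor(x_k)$ belongs to $T_{x_k}$ if and only if it belongs to $T_{x_{k-1}}$. Suppose $d\in T_{x_{k-1}}$ with $w$ the parent of $z$. If $z\neq x_k$, the previous paragraph makes $d$ the parent edge of $z$ in $T_{x_k}$ as well. The case $z=x_k$ is harmless because $x_k$ is the root of $T_{x_k}$. The converse is symmetric. Consequently the dual edge $d^*$ lies in $T^*_{x_k}$ if and only if it lies in $T^*_{x_{k-1}}$.

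Now compare the two rootward paths. Both $P^{*k}$ and $P^{*(k-1)}$ start at the same endpoint of $e^*$, taking the right endpoint in both cases. By property 3, $e$ is strictly inside $\Vor(x_k)$. I argue by induction along the paths. Suppose the two paths agree up to a dual vertex $f^*$, and the next edge of $P^{*(k-1)}$, call it $d^*$, is strictly inside $\Vor(x_k)$. Then $d^*\in T^*_{x_{k-1}}$, so $d^*\in T^*_{x_k}$ by the previous step. The parent edge of $f^*$ in a cotree rooted at the infinite face is determined by the tree alone, so it suffices to show that $d^*$ is the parent edge of $f^*$ in $T^*_{x_k}$ too.

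The main obstacle is this orientation step, because the rootward direction in $T^*_{x_k}$ could in principle differ. I plan to handle it with the fundamental-cycle enclosure argument from \cref{lem:TwoColorMonotonicity}. The fundamental cycle of $d$ with respect to $T_{x_{k-1}}$ is formed by the two tree paths to the endpoints of $d$. These paths lie inside $\Vor(x_k)$ except for their common prefix up to $x_k$, and they coincide with the corresponding paths of $T_{x_k}$. So both trees induce the same primal cycle through $d$, and hence the same side on which $f^*$ lies relative to the root face. Given that side, $d^*$ is the rootward edge from $f^*$ in both cotrees exactly when $f^*$ lies inside this cycle. This shows that the prefixes strictly in $\Vor(x_k)$ coincide up to the first edge that is not strictly inside $\Vor(x_k)$.

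Finally I must make sure this also covers the case where $P^{*k}$ exits $\Vor(x_k)$ earlier than $P^{*(k-1)}$. This follows from the same inductive step applied with the roles of $k$ and $k-1$ swapped.
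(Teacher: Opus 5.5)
You follow the paper's route. The paper's proof is your first paragraph ($T_{x_k}$ and $T_{x_{k-1}}$ coincide on $\Vor(x_k)$) plus the bare assertion that the fundamental-cycle prefixes inside $\Vor(x_k)$ therefore coincide; your edge-membership and orientation steps are a reasonable way to make that precise. But your induction has a real hole.

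$P^{*k}$ and $P^{*(k-1)}$ are not paths to the root. Each stops at the LCA, in its own cotree, of the two endpoints of $e^*$. That LCA depends also on the \emph{left} path, which may leave $\Vor(x_k)$ and differ completely between the two cotrees; already $T_{x_k}(v)$ and $T_{x_{k-1}}(v)$ can differ outside $\Vor(x_k)$. So ``it suffices to show that $d^*$ is the parent edge of $f^*$ in $T^*_{x_k}$'' is not sufficient. You must also exclude that $f^*$ is the LCA in one cotree but not in the other, in which case one path stops at $f^*$ while the other continues with an edge strictly inside $\Vor(x_k)$. Your closing ``swap the roles'' remark again only compares parent edges, so it does not cover this.

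The repair uses your enclosure idea once more. Suppose $d^*$ is the parent edge of $f^*$ in both cotrees and is strictly inside $\Vor(x_k)$. Then in either cotree the descendants of $f^*$ are exactly the faces enclosed by the common primal cycle $C_d$ on the side away from the root. Hence the other endpoint of $e^*$ (a face of $G$ in $R_k^{out}$, since $e$ is not an edge of $\partial R_k$) is a descendant of $f^*$ in one cotree iff in the other.
\begin{itemize}
\item Applied along the prefix of $P^{*(k-1)}$ strictly inside $\Vor(x_k)$, this shows $P^{*k}$ does not stop before the last vertex $f^*_t$ of that prefix.
\item Applied at $f^*_t$, it shows $P^{*k}$ cannot continue inside $\Vor(x_k)$ past $f^*_t$. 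Any such continuing edge would be the common parent edge of $f^*_t$ by your swapped step.
\end{itemize}
Alternatively, the fundamental cycle of $e^*$ is dual to the cut between $T(v)$ and the rest. For $w\in\Vor(x_k)$ one has $w\in T_{x_k}(v)$ iff $w\in T_{x_{k-1}}(v)$, so the two cycles contain the same dual edges strictly inside $\Vor(x_k)$. Together with your orientation step, this gives the claim.

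Three smaller points to fix:
\begin{itemize}
\item Your reason for the case $z=x_k$ is wrong. The right reason is that the case cannot occur: the parent of $x_k$ in $T_{x_{k-1}}$ precedes $x_k$, so it is not in $\Vor(x_k)$.
\item The two cotrees are rooted at different hole faces (of $R_k^{out}$ and of $R_{k-1}^{out}$). Say why both lie on the same side of $C_d$: this holds because $C_d\subseteq R_k^{out}$ and $R_{k-1}\subseteq R_k$.
\item The claim that both right paths start at the same endpoint of $e^*$ needs justification. In any cotree the right path starts at the face to the right of $e$, and $e$ is oriented $u\to v$ in both trees.
\end{itemize}
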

\begin{proof}
By definition of $\Vor(x_k)$, $T_{x_k} \cap \Vor(x_k)$ and $T_{x_{k-1}} \cap \Vor(x_k)$ are identical.
Since $e$ is an edge of $T_{x_k}$ strictly inside $\Vor(x_k)$, the prefixes in $\Vor(x_k)$ of the fundamental cycle of $e^*$ w.r.t. both trees are the same.
\end{proof}

By applying \cref{cor:dual-fund-cycle-in-vor-cell-coincide-for-consecutive-levels} inductively, we obtain
\begin{corollary}\label{cor:decomp}
    Let $b\leq k'\leq \ell$. The prefix of $P^{*k}$ that is strictly in 
    $\bigcap_{b < j \leq k'} \Vor(x_j)$
    is the same for all $b \leq k \leq {k'}$.

\end{corollary}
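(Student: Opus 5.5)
We argue by induction on $k'$, with \cref{cor:dual-fund-cycle-in-vor-cell-coincide-for-consecutive-levels} as the inductive step. For brevity, write $I_{k'} = \bigcap_{b<j\leq k'} \Vor(x_j)$, and let $\pi_k(I)$ denote the prefix of $P^{*k}$ that is strictly inside a vertex set $I$, i.e., the maximal initial segment of $P^{*k}$ all of whose edges have both primal endpoints in $I$. The claim is that $\pi_k(I_{k'})$ is independent of $k$ for $b\le k\le k'$.

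\textbf{Base case, $k'=b$.} The intersection is empty, so by convention it is the whole vertex set and there is only one $k$. For $k'=b+1$ the claim is exactly \cref{cor:dual-fund-cycle-in-vor-cell-coincide-for-consecutive-levels} with $k=b+1$.

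\textbf{Inductive step.} Assume the claim for $k'-1$. First, restricting to a smaller set takes a prefix of a prefix: if two paths share the prefix strictly inside $I$, and $I'\subseteq I$, then they share the prefix strictly inside $I'$. Both of these prefixes are the initial segment of the common $I$-prefix that stays strictly in $I'$. This holds provided the $I'$-prefix does not run past the end of the $I$-prefix, which it cannot, since strictly-in-$I'$ implies strictly-in-$I$.

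Apply this with $I=I_{k'-1}$ and $I'=I_{k'}$. Then $\pi_k(I_{k'})$ is the same for all $b\le k\le k'-1$. Separately, \cref{cor:dual-fund-cycle-in-vor-cell-coincide-for-consecutive-levels} says $P^{*k'}$ and $P^{*(k'-1)}$ share their prefix strictly in $\Vor(x_{k'})$. Applying the same observation with $I=\Vor(x_{k'})$ and $I'=I_{k'}$ gives $\pi_{k'}(I_{k'})=\pi_{k'-1}(I_{k'})$. Chaining the two equalities completes the induction.

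\textbf{Main subtlety.} The only delicate point is whether \cref{cor:dual-fund-cycle-in-vor-cell-coincide-for-consecutive-levels} is applicable at each level $k'\le\ell$. That proposition requires $e$ to be strictly inside $\Vor(x_{k'})$ and to belong to both $T_{x_{k'}}$ and $T_{x_{k'-1}}$. Both facts are supplied by properties 2 and 3 listed above the proposition, which hold for every $b<k'\le\ell$.

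A second point to check is that the fundamental-cycle paths $P^{*k}$ for $k<k'$ live in the larger graphs $R_k^{out}\supseteq R_{k'}^{out}$. The prefix strictly in $I_{k'}$, however, consists of edges of $R_{k'}^{out}$, so the comparison is meaningful. Beyond these checks the argument is purely the monotonicity of "prefix strictly inside" under shrinking the vertex set.
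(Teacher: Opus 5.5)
Your proof is correct and follows the paper's own argument: both restrict the consecutive-level equality of \cref{cor:dual-fund-cycle-in-vor-cell-coincide-for-consecutive-levels} from $\Vor(x_k)$ to the smaller set $\bigcap_{b<j\leq k'}\Vor(x_j)$ and chain the resulting equalities over $k$. Your induction on $k'$ just repackages that chain, and it usefully makes explicit the prefix-monotonicity step that the paper leaves implicit in its ``but therefore''.
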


\begin{proof}
Applying \cref{cor:dual-fund-cycle-in-vor-cell-coincide-for-consecutive-levels} at each level $k$ between $k'$ and $b$ we get that:

The prefix of $P^{*k}$ in $\Vor(x_{k})$ equals the prefix of $P^{*(k-1)}$ in $\Vor(x_{k})$.

\noindent But therefore, for every $b < k \leq k'$:

The prefix of $P^{*k}$ in $\bigcap_{b < j \leq k'} \Vor(x_j)$ equals the prefix of $P^{*(k-1)}$ in $\bigcap_{b < j \leq k'} \Vor(x_j)$.

\noindent Hence the corollary follows.

\end{proof}

\paragraph{The Decomposition.}
\Cref{cor:decomp} implies a decomposition of $P^* = P^{*b}$ into $k\leq \ell-b$ segments $(s_\ell, s_{i_1}, \dots, s_{i_k})$, as follows.
Cf. \cref{fig:Rootward_Decomposition}.
Let $i_0=\ell$.
The first segment, $s_{\ell}=s_{i_0}$, is the maximal prefix of $P^*$ strictly in all $\Vor(x_j)$ for all $b \leq j \leq {\ell}$. 
We define the remaining segments inductively.
Suppose that we have already identified $s_{i_0}, s_{i_1}, ..., s_{i_j}$.
Consider the suffix ${P^*}'$ of $P^*$ that was not yet assigned to any segment.
Let $i_{j+1} < i_j$ be such that the first Voronoi cell that ${P^*}'$ leaves is $\Vor(x_{i_{j+1}})$. 
The segment $s_{i_j}$ is defined to be the maximal prefix of ${P^*}'$ strictly in all $\Vor(x_k)$ for all $b < k \leq {i_j}$. 
Equivalently, it is the prefix of ${P^*}'$ until it leaves $\Vor(x_{i_{j+1}+1})$.
We denote the decomposition of $P^*$ by $S(P^*)$.

Before describing {\sc SegmentBreakdown} we note that an immediate application of \cref{cor:decomp} implies that the each segment $s_{i_j}$ is a subpath of the fine dual tree $T^*_{x_{i_j}}$.

\begin{corollary}\label{lem:segments-fine}
   For any $0 \leq j \leq k$, the segment $s_{i_j}$ of $P^*$ is identical to $T^*_{x_{i_j}}(e^*_s,e^*_f)$, where $e^*_s$ and $e^*_f$ are the first and last edges of $s_{i_j}$  
\end{corollary}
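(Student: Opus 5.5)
The plan is to apply \cref{cor:decomp} with $k'=i_j$ and then argue that the segment $s_{i_j}$, being a contiguous subpath of $P^{*b}$ lying in the appropriate intersection of Voronoi cells, is also a contiguous subpath of $P^{*i_j}$. Once that is established, a contiguous subpath of a rootward path of $T^*_{x_{i_j}}$ is, by definition, the tree path $T^*_{x_{i_j}}(e^*_s,e^*_f)$ between its first and last edges.

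\textbf{Step 1: placing the segment.} By construction, $s_{i_j}$ is strictly inside $\Vor(x_k)$ for all $b<k\le i_j$. It is preceded on $P^*=P^{*b}$ by the segments $s_{i_0},\dots,s_{i_{j-1}}$. Each of these is strictly inside all $\Vor(x_k)$ for $b<k\le i_{j'}$, and since $i_{j'}\ge i_j$ this includes every $k\le i_j$. Hence the whole prefix of $P^{*b}$ ending with the last edge $e^*_f$ of $s_{i_j}$ is strictly inside $\bigcap_{b<k\le i_j}\Vor(x_k)$. This is exactly the quantity to which \cref{cor:decomp} applies.

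\textbf{Step 2: invoking \cref{cor:decomp}.} The maximal prefix of $P^{*k}$ strictly inside $\bigcap_{b<k''\le i_j}\Vor(x_{k''})$ is the same for all $b\le k\le i_j$. In particular it agrees for $k=b$ and for $k=i_j$. Our prefix up to $e^*_f$ is contained in that common maximal prefix, so it is also a prefix of $P^{*i_j}$. Consequently $s_{i_j}$ appears as a contiguous subpath of $P^{*i_j}$, which is a rootward path in $T^*_{x_{i_j}}$. Since dual tree paths are unique, this subpath equals $T^*_{x_{i_j}}(e^*_s,e^*_f)$.

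\textbf{Main obstacle.} The delicate point is the precise meaning of ``strictly in'' for dual edges and prefixes. One must check that the maximal-prefix statement of \cref{cor:decomp} really covers the entire initial part up to and including $e^*_f$. This requires that the maximality used in defining segments be consistent with the one in the corollary, and that segments earlier in the decomposition, defined via larger index sets, do not break the prefix at some cell boundary $\Vor(x_k)$ with $k\le i_j$.

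I would handle this by a short induction on $j$. The claim to maintain is that the union $s_{i_0}\cdots s_{i_j}$ is strictly inside $\bigcap_{b<k\le i_j}\Vor(x_k)$. This follows from the monotone decrease $i_0>i_1>\cdots$ together with the definition of each segment as the part before leaving $\Vor(x_{i_{j+1}+1})$, since each earlier segment ends only upon leaving a cell with index larger than $i_j$.
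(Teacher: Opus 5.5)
Your proposal is correct and follows the paper's own route: the paper obtains this statement as an immediate application of \cref{cor:decomp} with $k'=i_j$. You additionally spell out the step the paper leaves implicit, namely that the whole prefix of $P^{*b}$ through $s_{i_j}$ lies strictly inside $\bigcap_{b<k\le i_j}\Vor(x_k)$ (because $i_0>i_1>\dots>i_j$), so that this prefix is also a prefix of the rootward path $P^{*i_j}$ in $T^*_{x_{i_j}}$.
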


\begin{figure}[htb]
    \centering
    \includegraphics[width=0.45\textwidth]{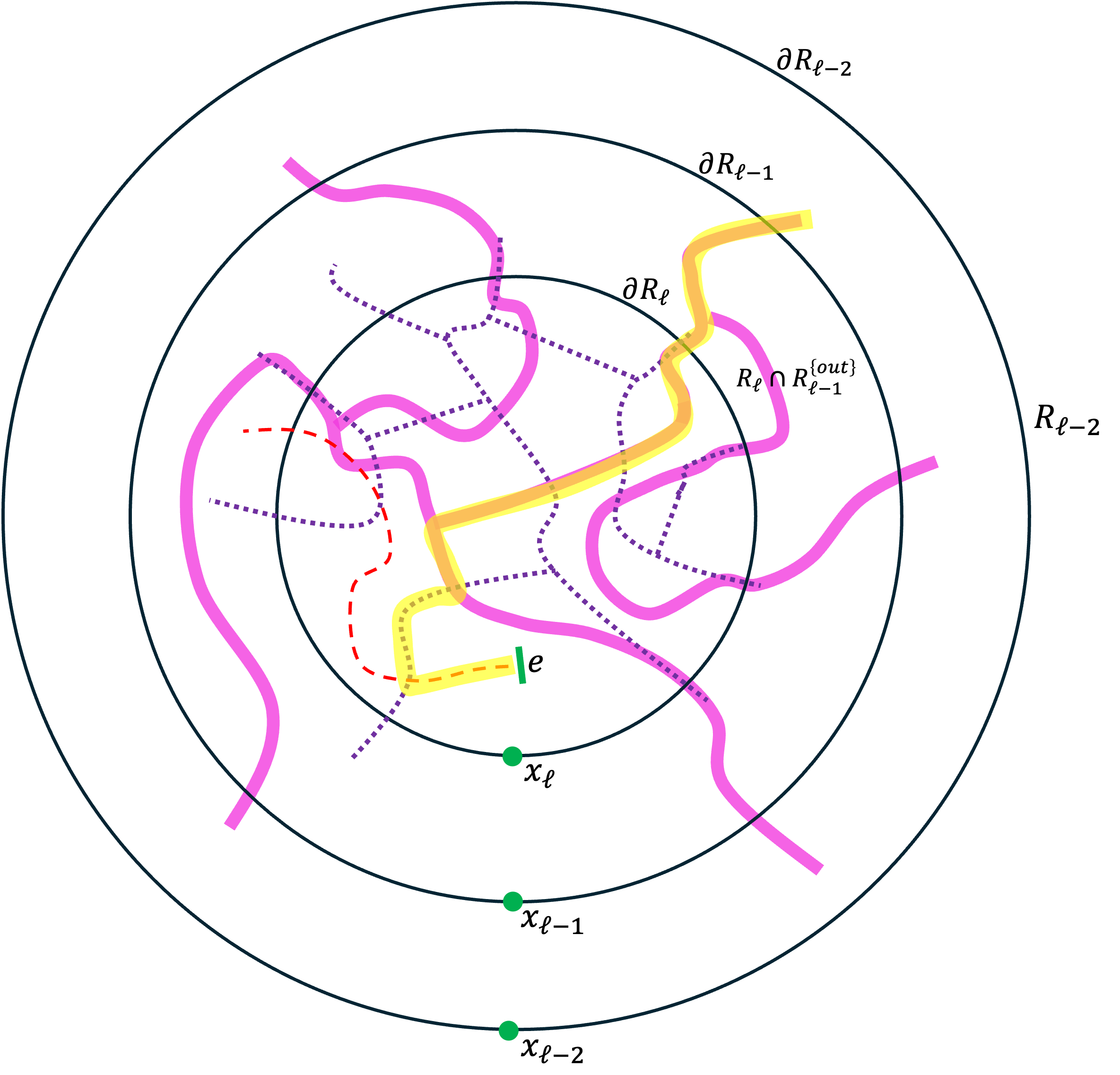}
    \caption{
    A schematic illustration of the decomposition induced by~\cref{cor:decomp}.
    The edge $e$ lies strictly within $\Vor(x_\ell)$ and $\Vor(x_{\ell-1})$.
    The rootward path $P^{*{\ell}}$ of the fundamental cycle of $e$ with respect to $T^*_{x_{\ell}}$ is shown as a dashed line.
    The Voronoi diagram $\VD^*(x_{\ell-1}, R_{\ell}^{out})$ is dotted in purple, and $\VD^*(x_{\ell-2}, R_{\ell-1}^{out})$ is drawn in bold pink.
    Highlighted in yellow is the rootward path $P^{*{(\ell-2)}}$ of the fundamental cycle of $e$ with respect to $T^*_{x_{\ell-2}}$.
    The path $P^{*{(\ell-2)}}$ consists of three concatenated segments: one along the dashed $P^{*\ell}$, one along the dotted $\VD^*(x_{\ell-1}, R_{\ell}^{out})$, and one along the bold $\VD^*(x_{\ell-2}, R_{\ell-1}^{out})$.
    }
    \label{fig:Rootward_Decomposition}
 \end{figure}

\paragraph{The Algorithm {\sc SegmentBreakdown}.}
We describe an algorithm that calculates the decomposition $S(P^*)$ of path $P^*$ into segments, where $\ell$ is the level at which the fine edge $e=uv$ is an edge of an MSSP structure. 
The algorithm begins with identifying the level $\ell$ by starting with the coarse edge $\hat e$ at level $i$. 
If $e$ is an edge of the MSSP structure at level $i$, then $\ell = i$. 
Otherwise, let $x_{i+1}$ be the lowest ancestor of $u$ in $\hat{T}_{x_i}$ that belongs to $\partial R_{i+1}$ (possibly $x_{i+1}=u$). 
We push onto the stack $X$ the pair $(x_{i+1},\VD(x_i,R_{i+1}^{out}))$.  
Note that, by choice, $x_{i+1}$ is the last vertex of $\partial R_{i+1}$ on $T_{x_b}(x_b,u)$.
By \cref{pro:gamma_subtree_level_association}, the coarse edge $\hat e$ (whose level is $i$) stores a pointer to the coarse edge $\hat e_{i+1}$ at level $i+1$ whose associated fine edge is $e$. 
We repeat the process with $\hat e_{i+1}$ taking the place of $\hat e$ and $x_{i+1}$ taking the place of $x_i$ until we find level $\ell$. 

The algorithm {\sc SegmentBreakdown} returns a 
a partition $S$ of $P^*$ into $k \leq \ell-b$ subpaths such that, for each subpath $s' \in S$, there is a level $b \leq j \leq \ell$ such that all fine edge $d'^* \in s'$ are represented by coarse edges in $\hat{T}^*_{x_j}$.
{\sc SegmentBreakdown} does so by locating, for each pair $(x_{i'}, \VD^*)$ in the stack $X$, the first dual edge on the subpath (implied by \cref{cor:decomp}) forming the intersection of $P^*$ with  $\Vor^*(x_{i'})$ (if the intersection is non-empty)\footnote{Note that $\Vor^*(x_{i'})$ is the boundary of $\Vor(x_{i'})$. I.e, the set of duals of edges with exactly one endpoint not in $\Vor(x_{i'})$. Hence the first such edge is the first edge of $P^*$ not strictly in $\Vor(x_{i'})$.}, as follows.

Let $\hat e_\ell$ be the coarse edge of $\hat T_{x_\ell}$ that represents the fine edge $e$.
Let $\hat P^{*\ell}$ be the right rootward path of the fundamental cycle of edge $\hat e^*_\ell$ w.r.t $\hat T^*_{x_\ell}$.
By \cref{lem:dual-coarse-tree}, $\hat P^{*\ell}$ is a coarsening of $P^{*\ell}$.
Since $P^*$ is not accessible to us we begin with $\hat P^{*\ell}$.
We assume that we have a mechanism $\mathcal M$ for finding the first fine edge $d^*$ represented by $\hat P^{*\ell}$ such that $d$ belongs to the boundary of $\Vor(x_i)$, for any $b\leq i\leq \ell$ (or reports that no such edge $d^*$ exists). 
The mechanism also returns the coarse edge  of $\hat T^*_{x_{i-1}}$ that represents $d^*$.
We provide this mechanism $\mathcal M$ in \cref{subsec:findCriticalOnSegment}.

Start by setting a path $\tilde P^*$ to be $\hat P^{*\ell}$, and let $i=\ell$. 
Utilizing mechanism $\mathcal M$ for $\tilde P^*$, we find the first (fine) edge $\tilde e^*$ of $\tilde P^*$ that is not strictly in $\Vor(x_{i-1})$. 
I.e., $\tilde e^*$ is an edge of the boundary of $\Vor(x_{i-1})$.
If $\tilde e^*$ does not exist, we keep $\tilde P^*$ unchanged.
Otherwise, we truncate $\tilde P^*$ just before $\tilde e^*$ so that after the truncation, by \cref{cor:decomp}, $\tilde P^*$ is the maximal common prefix of all $P^{*j}$ for $i-1 \leq j \leq \ell$.
We then decrement $i$ and repeat the process until $\tilde P^*$ is the maximal prefix common to all $P^{*j}$ for $b \leq j \leq \ell$.

Let $(i_1+1)$ be the last index in this process for which $\tilde P^*$ was truncated. Let $\tilde e^*_{i_1}$ be the fine edge on the boundary of $\Vor(x_{i_1+1})$ found at that time, and let $\hat e^*_{i_1}$ be the coarse edge representing $\tilde e^*_{i_1}$ in $\hat T^*_{x_{i_1}}$. We have thus identified the first segment, $s_\ell$ in the decomposition of $P^*$. In this segment $P^*$ coincides with a prefix of a rootward path in $T^*_{x_\ell}$ (by \cref{cor:decomp}).

Now that we have found the relevant level $i_1$, we move on to find the next segment of the decomposition,  $s_{i_1}$.  This segment starts with the fine edge $\tilde e^*_{i_1}$.
By \cref{lem:segments-fine}, the segment $s_{i_1}$ is a path on $P^{*{i_1}}$ and thus is represented by $\hat P^{*{i_1}}$.
We therefore repeat the above process starting by setting $\tilde P^*$ to be the suffix of $\hat P^{*{i_1}}$ that starts with the edge $\tilde e^*_{i_1}$, and only working at levels between $b$ and $i_1$.

We repeat the process until we have partitioned all of $P^*$. For every segment we encountered along the process we iterated over at most $m$ levels, using the mechanism $\mathcal M$ once at each level.  Thus the entire procedure requires $k\cdot O(m) = O(m^2)$ calls to mechanism $\mathcal M$.

The set of segments we found, $\{s_\ell$, $s_{i_1}$ .., $s_{i_{k}}\}$ (where $\ell>i_1>...>i_{k}$), forms the desired decomposition of $P^*$. 
The first segment starts at $e^*$. 
Every other segment begins with a (fine) dual edge on the boundary of $\Vor(x_{i+1})$ in $VD(x_i,R_{i+1}^{out})$, for some $i$.

\subsection{The Procedure {\sc FindCriticalOnSegment}}\label{subsec:findCriticalOnSegment}
We describe the implementation of mechanism $\mathcal{M}$. 
Recall that mechanism $\mathcal{M}$ identifies, for given indices $b\leq i< j\leq \ell$, the first fine edge $d^*$ represented by $\hat{P}^{*j}$ such that $d$ lies on the boundary of $\Vor(x_{i+1})$.
The mechanism also returns the coarse edge representing $d^*$ in $\hat T^*_{x_{i}}$. 

We describe a slightly more general procedure in subroutine {\sc FindCriticalOnSegment}; 
let $s$ be a rootward path in some coarse dual tree  that starts at a fine edge $e^*_{s}$ represented by the coarse edge $\hat e^*_s$, and ends with a fine edge $e^*_{f}$ represented by the coarse edge $\hat e^*_f$.
There exists some efficiently verifiable predicate $\Gamma$ (e.g., belonging to some Voronoi cell) that is satisfied only for a prefix of fine edges of $s$, and the goal of the procedure {\sc FindCriticalOnSegment} is to find the first fine edge $d^*$ of $s$, immediately after this prefix. 

In one use case, corresponding to the implementation of $\mathcal{M}$ used by {\sc SegmentBreakdown}, a fine edge $d^*$ satisfies the predicate $\Gamma$  iff $d^*$ is strictly in some Voronoi cell. 
In another use case, corresponding to the the call in \cref{line:FCcallsFCOS} of  {\sc FindCritical}, $d^*$ satisfies the predicate $\Gamma$ iff $d^*$ is $x_b$-colored with respect to a competitor set $Y$.

Note that in the case that {\sc FindCriticalOnSegment} is used to implement mechanism $\mathcal{M}$ we also need to report the coarse edge $\hat d_i^*$ representing $d^*$ in $\hat T^*_{x_{i}}$.
We find $\hat d_i^*$ by obtaining the colors of the primal endpoints of $d^*$ in $VD(x_i, R_{i+1}^{out})$. 
The coarse edge $\hat d_i^*$ is the Voronoi edge mapped to this pair of colors in the representation of $VD(x_i, R_{i+1}^{out})$.

Each coarse edge of $s$ has one or two fine edge associated with it. 
The procedure {\sc FindCriticalOnSegment} performs a binary search by evaluating the predicate on these associated fine edges to identify the last associated fine edge $e'^*$ that belongs to the prefix. 
The fine edge $e^*_c$ immediately following the prefix belongs to a path $\rho$ of fine edges that starts with $e'^*$ and ends at the fine associated edge after $e'^*$ on $s$. 
The path $\rho$ is represented by either a single or two coarse edges of $s$ since each coarse edge has an associated fine edge.
Let $i$ denote the level of the coarse edges of $s$. 
The path $\rho$ does not cross $\partial R_{i+1}$ because if a coarse edge $\hat \epsilon$ represents a fine edge of $\partial R_{i+1}$ then this fine edge is the associated edge of $\hat \epsilon$. 

There are two cases depending on whether $\rho$ is in $R_{i+1}$ or in $R_{i+1}^{out}$. We can distinguish the case by inspecting whether the endpoint of the first coarse edge representing $\rho$ is in $R_{i+1}$ or not.

\begin{enumerate}
\item If $\rho$ is in $R_{i+1}$ then it is represented by an MSSP structure from part (A).
We can then find $e^*_c$ by an explicit binary search on the path $\rho$ using the MSSP data structure. 

\item Otherwise, $\rho$ is in $R_{i+1}^{out}$, so it is represented by a single coarse Voronoi edge.
This is because coarse edges in $R_{i+1}^{out}$ are Voronoi edges, and because the two associated fine edges of each Voronoi edge are the first and last fine edges on the fine path represented by the Voronoi edge.
Thus, {\sc FindCriticalOnSegment} can find $e^*_c$ on the bisector represented by this single Voronoi edge $\hat e'^*$.

Let $\alpha$ and $\delta$ be the sites whose cells are separated by $\hat e'^*$. 
Recall that this pair of sites is stored in the representation of the Voronoi diagram.
The Voronoi edge $\hat e'^*$ is a segment of the bisector $\beta^*(\alpha,\delta)$ that starts with a fine edge $e^*_{\beta_1}$ and ends with a fine edge $e^*_{\beta_2}$. 
If $\hat e^*_s$ is the coarse edge $\hat e'^*$ then $e^*_s$ is the first fine edge along $\rho$, otherwise the first fine edge along $\rho$ is $e^*_{\beta_1}$. We denote this first fine edge along $\rho$ by $q^*_s$.
Similarly, we identify $q^*_f$, the last fine edge along $\rho$, from either $e^*_f$ or $e^*_{\beta_2}$.
We know that the fine critical edge $e^*_c$ we are looking for is between $q^*_s$ and $q^*_f$ on the bisector $\beta^*(\alpha,\delta)$. 

{\sc FindCriticalOnSegment} then calls the {\sc TreeElimination} on the level $(i+1)$ tree $\hat T_\alpha$, the (coarse) shortest path tree rooted at $\alpha$ in $R_{i+1}^{out}$, with the objective of finding an edge incident to $e^*_c$, the $\Gamma$-critical edge of the subpath of the bisector $\beta^*(\alpha,\delta)$. The edge $e^*_c$ is known to be between $q^*_s$ and $q^*_f$ on $\beta^*(\alpha,\delta)$. 
To be specific, {\sc TreeElimination} is called with level $i+1$, site $\alpha$, competitor set $\{\delta\}$, coarse tree $\hat T_\alpha$, an empty stack, and decision predicate $P_{bisector-\Gamma -Prefix}$, which is described next. 
We note that the predicate $\Gamma$ and the delimiting edges $q^*_s, q^*_f$ are implicitly provided in the decision predicate.

\end{enumerate}

\subsection{Bisector Prefix Predicate}\label{subsec:prefix_on_bisector}

The decision predicate $P_{bisector-\Gamma-Prefix}(e_1, e_2)$ is defined in \cref{alg:bisPreDec}.
We begin by showing that the computation time of the predicate is asymptotically comparable to the computation time of $P_{trichromatic}$, as defined in \cref{alg:triDec}, with the dominant cost incurred by a constant number of distance and point location queries to the parts of the oracle that were constructed so far.
To this purpose, we examine the three non-trivial core operations performed by \cref{alg:bisPreDec}, and show how each can be carried out using a constant number of such queries.

The first core operation is determining the color of a vertex among $\{\alpha, \delta\}$.
This is done by using a distance query from $\alpha$ and from $\delta$ using the distance oracle of $R_{i+1}^{out}$ which has already been computed.

The second core operation is determining if a vertex $w$ belongs to the subtree $T^e$.
This is done by simply examining the following distances in the current subgraph $R_{i+1}^{out}$, available in the distance oracle of $R_{i+1}^{out}$ at this stage: 
\begin{itemize}[itemsep=0pt, topsep=4pt]
    \item $d_{R_{i+1}^{out}}(\alpha, v)$; the distance from the root $\alpha$ to $v$, the leafward vertex of $e$ in $T_\alpha$.
    \item $d_{R_{i+1}^{out}}(\alpha, w)$; the distance from the root $\alpha$ to $w$.
    \item $d_{R_{i+1}^{out}}(v, w)$; the distance from $v$ to $w$.
\end{itemize}
By the uniqueness assumption of shortest paths, the shortest path from $\alpha$ to $w$ is incident to $v$ and therefore $w\in T^e$, if and only if $d_{R_{i+1}^{out}}(\alpha, w) = d_{R_{i+1}^{out}}(\alpha, v) + d_{R_{i+1}^{out}}(v, w)$.

The third core operation is determining, for any pair of fine edges $q^*_1, q^*_2$ on the bisector $\beta^*(\alpha, \delta)$, whether $q^*_1$ precedes $q^*_2$ along $\beta^*(\alpha, \delta)$.
It follows from \cref{lem:VD_shortest_path_color} that the left-to-right ordering in $T_\alpha$ of the $\alpha$-colored primal endpoints of edges on $\beta^*(\alpha, \delta)$ coincides with the edge order along the bisector.
Since the fine tree $T_\alpha$ is not directly available, we derive the left-to-right order of any given vertex pair $(x_1, x_2)$ in $T_\alpha$ by issuing a point location query to the oracle for each vertex, followed by a left-to-right query on an MSSP tree provided by part (A) of the oracle, as follows.

Let $S_1 = (\alpha, s'_{i+2}, s'_{i+3}, \ldots, x_1)$ and $S_2 = (\alpha, s''_{i+2}, s''_{i+3}, \ldots, x_2)$ denote the sequences of sites returned by the point location query, where we append $x_1$ and $x_2$ at the end of the corresponding sequence (unless it already appears as the last vertex).
Let $s'_j = s''_j$ be the last site common to $S_1$ and $S_2$.
Let $s'_{j+1}$ and $s''_{j+1}$ denote the successors of $s'_j$ in $S_1$ and $S_2$, respectively.
The MSSP tree $\mathcal{T}_{s'_j}$ rooted at $s'_j$ in $R_j^{out}$, provided by part (A) of the distance oracle, includes representations of both paths $T_\alpha(s'_j, s'_{j+1})$ and $T_\alpha(s'_j, s''_{j+1})$.
Observe that $T_\alpha(s'_j,s'_{j+1})$ and $T_\alpha(s'_j,s''_{j+1})$ are subpaths of the paths in $T_\alpha$ from $\alpha$ to $x_1$ and $x_2$, respectively.
Therefore, the third core operation reduces to determining the left‑to‑right order between 
$T_{s'_j}(s'_j, s'_{j+1})$ and $T_{s'_j}(s'_j, s''_{j+1})$, which can be directly answered in by a single query to the MSSP tree rooted at $s'_j$.

Next we prove the correctness of $P_{bisector-\Gamma-Prefix}(e_1, e_2)$, namely that when {\sc TreeElimination} is called with $P_{bisector-\Gamma-Prefix}(e_1, e_2)$ it returns an edge $e$ incident to $e^*_c$, if $e^*_c$ exists.
The correctness is actually shown by \cref{lem:TreeElimination_correctness}, provided that we show that the conditions of \cref{lem:TreeElimination_correctness} for $P_{bisector-\Gamma-Prefix}(e_1, e_2)$ are met.
This is done in the following  \cref{lem:BisectorPrefixDecisionCorrectness}.

\alglanguage{pseudocode}
\begin{algorithm}[h]
\caption{$\textsc{BisectorPrefixDecisionRule}(e_1,e_2)$}
\label{alg:bisPreDec}
\textbf{Input:} The ($\alpha$–$\{\delta\}$)-critical edges of $e=uv$ w.r.t. $T_\alpha$; where $e^*_1$ and $e^*_2$ lie on rootward dual paths $P^*_1,P^*_2$ of the fundamental cycle of $e$ in $T^*_\alpha$. $P^*_1, P^*_2$ are disjoint except for the rootmost dual vertex $q^*$.
\\
\textbf{Assumption:} There is a known predicate $\Gamma$, and two known dual edges $q^*_s$ and $q^*_f$ on $\beta^*(\alpha,\delta)$ such that $q^*_s$ is $\Gamma$-positive and $q^*_f$ is $\Gamma$-negative, and there is a single contiguous subpath of $\beta^*(\alpha, \delta)$ between $q^*_s$ and $q^*_f$ that satisfies $\Gamma$.
\\
\textbf{Output:} True if and only if $T^e = T_\alpha(v) \cup \{e\}$ contains an edge incident to the $\alpha$-colored endpoint of the $\Gamma$-critical edge $e^*_c$.
\begin{algorithmic}[1]
    \If{the $\alpha$-colored primal endpoint of $q^*_s$ and the $\alpha$-colored primal endpoint of $q^*_f$ both belong to $T^e$} \label{ln:Both}
        \State \Return true
    \ElsIf{exactly one of the $\alpha$-colored endpoints of $q^*_s$ and the $\alpha$-colored endpoint of $q^*_f$ belongs to $T^e$}
        \State Let $j$ such that one of the following holds: \label{ln:Single}
        \State \hspace{1em} (i) $e^*_j$ lies between $q^*_s$ and $q^*_f$ on $\beta^*(\alpha,\delta)$, or
        \State \hspace{1em} (ii) $e^*_j$ is null and $q^*$ lies between $q^*_s$ and $q^*_f$ on $\beta^*(\alpha,\delta)$
        \State {\bf if} $j$ is null {\bf then} \Return false
        \State {\bf if} $e^*_j$ does not exist {\bf then} set $e^*_j$ to be the last edge of $P^*_{3-j}$. \label{ln:E1_end_of_P2}
        \If{the $\alpha$-colored endpoint of $q^*_s$ belongs to $T^e$}
            \State \Return the negation of $\Gamma(e^*_j)$ \label{ln:Gamma_return}
        \Else
            \State \Return the value of $\Gamma(e^*_j)$
        \EndIf
    \ElsIf{($e^*_1$ and $e^*_2$ are both null) or ($q^*$ is $h^*$)} \label{ln:ConfC_Sub245}
        \State \Return false.
    \ElsIf{exactly one of $e^*_1, e^*_2$ is null} \label{ln:ConfC_Sub3}
        \State Let $j$ be such that $e^*_j$ is null
        \State Set $e^*_j$ to be the last edge of $P^*_{3-j}$ \label{ln:ConfC_Sub23_e*1}
    \EndIf

    \If{both $e^*_1$ and $e^*_2$ are not between $q^*_s$ and $q^*_f$ on $\beta^*(\alpha,\delta)$} \label{ln:ConfC_Order}
        \State \Return false
    \ElsIf{$\Gamma(e^*_1) \neq \Gamma(e^*_2)$} \label{ln:ConfC_Return}
        \State \Return true
    \Else
        \State \Return false
    \EndIf
    
\end{algorithmic}
\end{algorithm}

\begin{lemma}\label{lem:BisectorPrefixDecisionCorrectness}
Let $e_1,e_2$ be the ($\alpha$-$\{\delta\}$)-critical edges of $e=uv$ w.r.t. $T_\alpha$, such that $u$ is $\alpha$-colored (w.r.t. $\{\alpha,\delta\}$) and rootward of $v$.
Assume that a $\Gamma$-critical edge $e^*_c$ exists on the bisector subpath between $q^*_s$ and $q^*_f$. 
\cref{alg:bisPreDec} returns true if and only if there is an edge $\tilde e \in T^e = T_{\alpha}(v) \cup \{e\}$ such that $\tilde e$ is incident to the $\alpha$-colored endpoint of $e^*_c$.
\end{lemma}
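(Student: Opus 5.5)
We prove both directions by tracking which part of the plane contains $w_c$, the $\alpha$-colored endpoint of $e^*_c$. The argument parallels the proof of \cref{lem:SimpleTrichromaticDecisionCorrectness}, with two sites $\alpha,\delta$ and the $\Gamma$-prefix replacing the third color.

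The basic fact is this. By \cref{lem:VD_shortest_path_color} and the fact recorded before \cref{alg:bisPreDec}, the $\alpha$-colored endpoints of the edges of $\beta^*(\alpha,\delta)$ appear along the bisector in the same left-to-right order as in $T_\alpha$. The vertices of $T^e$ are exactly the primal vertices enclosed by the fundamental cycle $C^*_e=e^*\cup P^*_1\cup P^*_2$. It follows that the set of bisector edges whose $\alpha$-colored endpoint lies in $T^e$ is a contiguous interval $I$ of $\beta^*(\alpha,\delta)$.

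We next identify the endpoints of $I$. By \cref{lem:TwoColorMonotonicity} applied with $x=\alpha$ and $Y=\{\delta\}$, the green prefix of each $P^*_j$ is entirely $\alpha$-colored. The bisector can only cross $C^*_e$ where $C^*_e$ stops being $\alpha$-colored, namely at the critical edges $e^*_j$. If $e^*_j$ is null, the crossing happens instead at $q^*$, or at $h^*$ when $q^*=h^*$; in that case we substitute the last edge of $P^*_{3-j}$, as in lines~\ref{ln:E1_end_of_P2} and~\ref{ln:ConfC_Sub23_e*1}. So the endpoints of $I$ are exactly $e^*_1$ and $e^*_2$, suitably substituted.

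The question is therefore whether $e^*_c\in I$. The assumption gives the positions of $q^*_s$ and $q^*_f$, and $e^*_c$ is the unique point in $[q^*_s,q^*_f]$ where $\Gamma$ switches from true to false. We then go through the branches of \cref{alg:bisPreDec}.

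\emph{Line~\ref{ln:Both}.} Both $q^*_s$ and $q^*_f$ lie in $I$. By contiguity, all of $[q^*_s,q^*_f]$, and hence $e^*_c$, lies in $I$, so we correctly return true.

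\emph{Line~\ref{ln:Single}.} Exactly one of $q^*_s,q^*_f$ lies in $I$, so exactly one endpoint $e^*_j$ of $I$ lies strictly inside $[q^*_s,q^*_f]$. If $q^*_s\in I$, then $I\cap[q^*_s,q^*_f]=[q^*_s,e^*_j]$. This interval contains the switch point iff $\Gamma(e^*_j)$ is false, which is why line~\ref{ln:Gamma_return} returns the negation. The case $q^*_f\in I$ is symmetric. If no such $j$ exists, the intersection degenerates, and returning false is the correct answer.

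\emph{Neither $q^*_s$ nor $q^*_f$ in $I$.} Either $I$ is disjoint from $[q^*_s,q^*_f]$, or $I\subset[q^*_s,q^*_f]$. In the degenerate configurations (both critical edges null, or $q^*=h^*$), we argue as in case (5) and case (4) of the earlier proof that $C^*_e$ does not meet the bisector between $q^*_s$ and $q^*_f$, so $I$ misses the interval; these return false at line~\ref{ln:ConfC_Sub245}. Otherwise, line~\ref{ln:ConfC_Order} correctly handles the disjoint case. When $I\subset[q^*_s,q^*_f]$, $e^*_c\in I$ iff $\Gamma$ changes value between the endpoints of $I$, i.e.\ iff $\Gamma(e^*_1)\neq\Gamma(e^*_2)$. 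This relies on $\Gamma$ holding on a prefix of the interval.

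The main obstacle is making the boundary argument rigorous in the degenerate cases. Concretely, we must show that when $e^*_j$ is null and $q^*\neq h^*$, the last edge of $P^*_{3-j}$ really is the correct endpoint of $I$. The plan is to reuse the curve and cycle constructions of \cref{lem:SimpleTrichromaticDecisionCorrectness}: build a closed curve from $T_\alpha(\alpha,u)$, the green prefixes, and bisector pieces, and check which bisector edges it encloses.

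We must also handle off-by-one issues, where $e^*_c$ sits exactly at an endpoint of $I$. This works because $I$ is defined through the $\alpha$-colored endpoint, and the critical edge $e^*_j$ itself has its $\alpha$-colored endpoint inside $T^e$.
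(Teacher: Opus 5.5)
Your plan is the paper's argument, reorganized.
\begin{itemize}
\item Your interval $I$ is the paper's maximal segment $S$ of bisector edges whose $\alpha$-colored endpoint lies in $T^e$. The paper uses $S$ for its ``false'' direction.
\item Identifying the ends of $I$ with the (substituted) critical edges is what the paper does, subcase by subcase, through the cycles $C$ of its ``true'' direction.
\end{itemize}
Running both directions through $I$ is tidier. However, the step you postpone, namely locating the ends of $I$ in subcases (2)--(5), is exactly the content of those case analyses, and it still has to be carried out.

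The step that actually fails is the one your last paragraph relies on. The $\alpha$-colored endpoint of $e_j$ need not lie in $T^e$. Also, $V(T^e)$ is not the set of vertices enclosed by $C^*_e$: the vertex $u$ is in $T^e$ but lies outside $C^*_e$.

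For a concrete instance, let the first edge of $P^*_j$ be $(vw)^*$, where $v,w$ are $\alpha$-colored, $w\neq u$, and $w\notin T_\alpha(v)$. Let the next face on $P^*_j$ be $(v,w,z)$, where $z\in T_\alpha(v)$ is $\delta$-colored. Then:
\begin{itemize}
\item $e_j=wz$, and its $\alpha$-colored endpoint $w$ is not a vertex of $T^e$.
\item $I$ ends at $(vz)^*$, one bisector edge before $e^*_j$.
\item Take $q^*_s=(vz)^*$ to be $\Gamma$-positive, $\Gamma(e^*_j)$ false, and $q^*_f$ a $\Gamma$-negative edge beyond $e^*_j$ whose $\alpha$-colored endpoint is outside $T^e$. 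Then $e^*_c=e^*_j$, and line~\ref{ln:Gamma_return} returns true, although no edge of $T^e$ is incident to $w$.
\end{itemize}

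The discrepancy can also go the other way. When the prefix is empty and $e_j=uw$ with $w$ inside, the bisector edges around $u$ can extend $I$ beyond $e^*_j$. Bisector edges at $u$ elsewhere can even destroy the contiguity of $I$.

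So the ends of $I$ are determined only up to the bisector edge sharing a face with $e^*_j$. Your interval bookkeeping does not decide the case where $e^*_c$ is that edge. The paper's cycles are equally silent about whether the $\alpha$-colored endpoint of $e^*_1$ is enclosed, so this is an off-by-one in the literal ``iff'' rather than an idea you missed from the paper. A sound version of your argument would define $I$ through $V(T_\alpha(v))$ and handle this boundary edge explicitly, instead of asserting the false membership claim.
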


\begin{proof}
Let $P^*_1$ and $P^*_2$ be the rootward paths of the fundamental cycle of $e=uv$ w.r.t. $T_{\alpha}$ that contain $e^*_1$ and $e^*_2$ respectively. Note that $P^*_1$ and $P^*_2$  share the common dual vertex $q^*$.
Let $x_s, x_f$ be the $\alpha$-colored endpoints of $q^*_s, q^*_f$, respectively.\\
We distinguish between the following three basic configurations based on the intersection of $x_s, x_f$ with $T^e$:
\begin{enumerate}[label=(\alph*), itemsep=0pt, topsep=4pt]
    \item \label{case:both-in-Te} Both $x_s$ and $x_f$ belong to $T^e$.
    \item \label{case:one-in-Te} $x_s$ belongs to $T^e$, while $x_f$ does not. The case where only $x_f$ belongs is symmetric and omitted.
    \item \label{case:none-in-Te} Neither $x_s$ nor $x_f$ belongs to $T^e$.
\end{enumerate}
For each configuration, we consider five subcases, determined by the existence of the critical edges and the position of $q^*$ (this is similar to the cases described in \cref{par:critical_edge_definition} and analyzed in the proof of \cref{lem:SimpleTrichromaticDecisionCorrectness}).
Recall that $T_\alpha$ is a shortest path tree in $R_{i+1}^{out}$.
Let $h^*$ denote the infinite face of $R_{i+1}^{out}$.
\begin{enumerate}[itemsep=0pt, topsep=4pt]
    \item \label{case:both-exist} Both critical edges exist.
    \item \label{case:one-exists-h} Exactly one critical edge exists and $q^*= h^*$.
    \item \label{case:one-exists-not-h} Exactly one critical edge exists and $q^* \neq h^*$.
    \item \label{case:none-exist-h} None of the critical edges exists and $q^* = h^*$.
    \item \label{case:none-exist-not-h} None of the critical edges exists and $q^* \neq h^*$.
\end{enumerate}
For each of the configuration we consider each of the five subcases.

For the first direction, assume \cref{alg:bisPreDec} returns true.
For each configuration we define an appropriate cycle $C$ in the plane that separates a subtree of $T^e$ from the rest of the graph.
The cycle $C$ is then used in a unified way to prove that the $\alpha$-colored vertex of $e^*_c$ is in that subtree of $T^e$.

\begin{enumerate}[label=(\alph*)]
    \item Both $x_s$ and $x_f$ belong to $T^e$.
    By \cref{ln:Both}, for any subcase in this configuration, \cref{alg:bisPreDec} returns true.
    Define cycle $C$ to start with $T_\alpha(u, x_s)$, continue along $\beta^*(\alpha,\delta)$ from $q^*_s$ until reaching $q^*_f$, then move rootward in $T_\alpha(u, x_f)$.

    \item $x_s$ belongs to $T^e$, and $x_f$ does not belong to $T^e$.
    
    For subcases (1), (2) and (3), w.l.o.g assume in \cref{ln:Single} that $j=1$ (note that since \cref{alg:bisPreDec} returns true, $j$ is not null).
    If $e^*_1$ is null then by \cref{ln:Single} it must be that $q^*$ is on $\beta^*(\alpha, \delta)$ between $q^*_s, q^*_f$, in which case $e^*_1$ is set in \cref{ln:E1_end_of_P2} to be the last edge of $P^*_2$.
    Note that in all three subcases, since \cref{alg:bisPreDec} returns true in \cref{ln:Gamma_return}, $\Gamma(e^*_1)$ is false.
    
    Define cycle $C$ to start with $T_\alpha(u, x_s)$, continue along $\beta^*(\alpha,\delta)$ from $q^*_s$ until reaching $e^*_1$, then travel along $P^*_1$ from $e^*_1$ towards $e^*$.

    For any other subcase in this configuration, \cref{alg:bisPreDec} returns false.
    
    \item Neither $x_s$ nor $x_f$ belong to $T^e$.

    By \cref{ln:ConfC_Sub245}, since \cref{alg:bisPreDec} returns true, subcases (2),(4),(5) do not occur.
    For subcase (3), w.l.o.g let $e^*_1$ be null and by \cref{ln:ConfC_Sub3} we set $e^*_1$ to be the last edge of $P^*_2$.
    Note that now for both subcases (1) and (3), since \cref{alg:bisPreDec} returns true then by \cref{ln:ConfC_Order}, $e^*_1$ and $e^*_2$ are between $q^*_s$ and $q^*_f$ on $\beta^*(\alpha,\delta)$.

    Define cycle $C$ to start with $e$, travel along $P^*_1$ towards $e^*_1$, continue along $\beta^*(\alpha,\delta)$ until reaching $e^*_2$, travel along $P^*_2$ from $e^*_2$ towards $e^*$. Cf. \cref{fig:BisectorPrefixProof_simple-figure}
\end{enumerate}

\begin{figure}[H]
    \centering
    \begin{minipage}{0.28\textwidth}
        \centering
        \includegraphics[width=\linewidth]{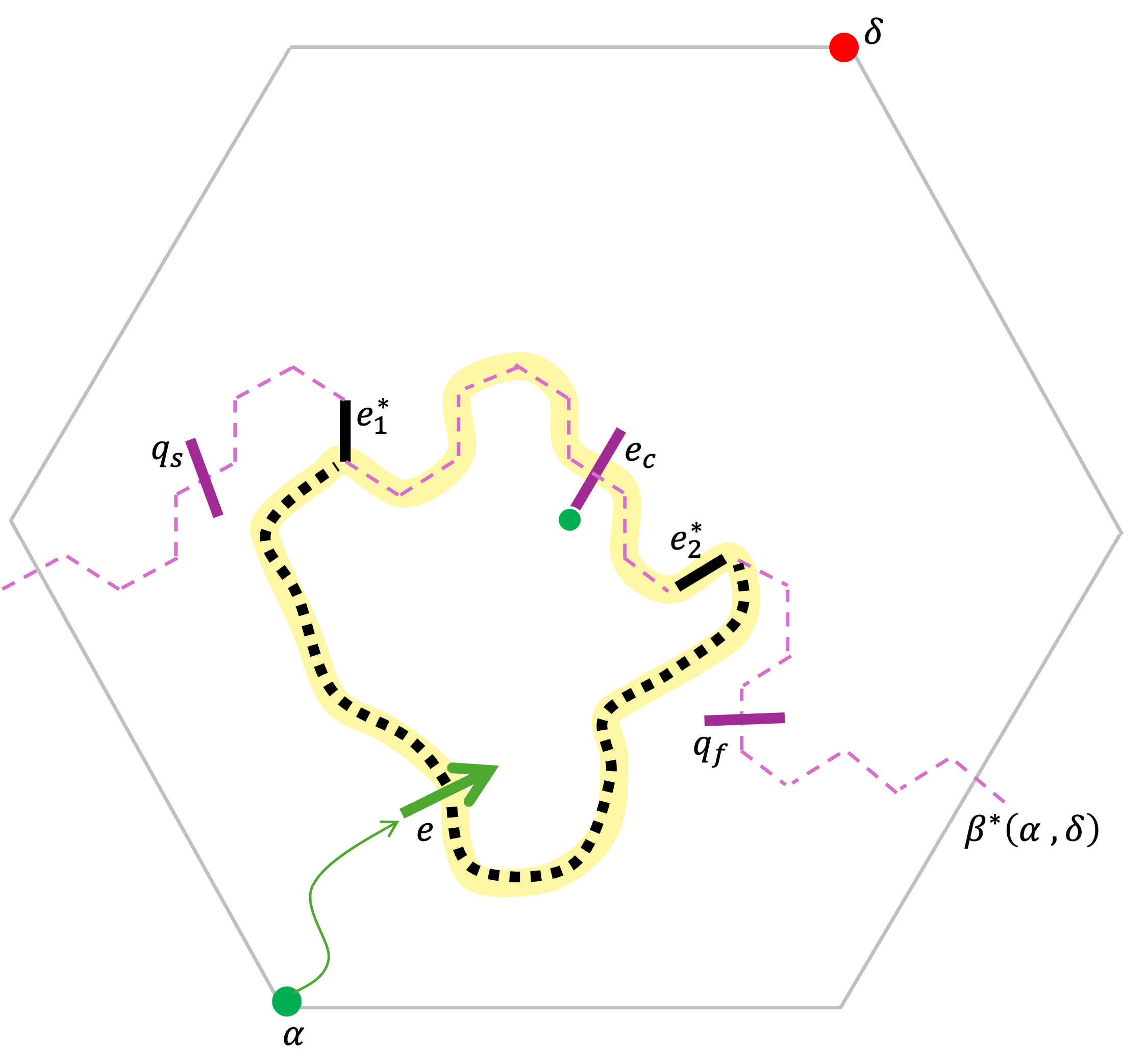}
    \end{minipage}%
    \hfill
    \begin{minipage}{0.7\textwidth}
        \captionof{figure}{%
            \textbf{The cycle $C$ described in the Proof of Lemma~\ref{lem:BisectorPrefixDecisionCorrectness}.} 
            The bold green arrow represents edge $e$ and the dashed black line represents the the green prefixes of paths $P^*_1, P^*_2$ of the fundamental cycle of $e^*$ w.r.t. $T^*_\alpha$.
            The bisector $\beta^*(\alpha, \delta)$ is indicated by a thin dashed purple line.
            The critical edges $e^*_1, e^*_2$ of the fundamental cycle are solid black.
            The primal edges corresponding to $q^*_s, q^*_f$ and to $e^*_c$ are solid purple.
            The $\alpha$-colored vertex of $e^*_c$ is indicated by a bold green node.
            The cycle $C$ is highlighted in yellow.
            The figure corresponds to the positive direction in subcase (1) of configuration (c) where both critical edges exist, and neither $x_s$ nor $x_f$ belong to $T^e$.
        }
        \label{fig:BisectorPrefixProof_simple-figure}
    \end{minipage}
\end{figure}

In all cases, since the the cycle $C$ consists only of portions of $P^*_1, P^*_2$, paths in $T^e$ and $\beta^*(\alpha,\delta)$, and since no edge of $T_\alpha$ is dual to an edge of $\beta^*(\alpha, \delta)$, then $C$ encloses only vertices of a subtree of $T^e$.
By construction, $C$ also encloses the $\alpha$-colored endpoints of all edges along $\beta^*(\alpha,\delta)$ between some edge satisfying $\Gamma$ and some edge that does not satisfy $\Gamma$.
Thus, $C$ encloses the $\alpha$-colored endpoint of $e^*_c$.
Therefore, the $\alpha$-colored endpoint of $e^*_c$ belongs to $T^e$ and some edge $\tilde e \in T^e$ is incident to it.

For the second direction, assume \cref{alg:bisPreDec} returns false.
We again define for each configuration an appropriate cycle $C$ in the plane.
Since no edge of $T_\alpha$ is dual to an edge of $\beta^*(\alpha, \delta)$, it follows that any subtree of $T_\alpha$ rooted at any single vertex or edge, and in particular $T^e$, can contain the $\alpha$-colored endpoints of at most a single segment of edges along the $\beta^*(\alpha,\delta)$ bisector.
Let $S$ be the maximal segment of edges along the $\beta^*(\alpha,\delta)$ bisector, between $q^*_s$ and $q^*_f$, with $\alpha$-colored endpoints in $T^e$.
The cycle $C$ is defined so that for all edges of $S$, $C$ encloses their $\alpha$-colored endpoints.

\begin{enumerate}[label=(\alph*)]
    \item Both $x_s$ and $x_f$ belong to $T^e$.
    None of the subcases applies because in this configuration, by \cref{ln:Both} \cref{alg:bisPreDec} returns true.

    \item $x_s$ belongs to $T^e$, and $x_f$ does not belong to $T^e$.

    Consider the fundamental cycle $C_e$ of $e$ w.r.t. $T^*_\alpha$.
    Since $C_e$ encloses $x_s$ but not $x_f$, $C_e$ must intersect the bisector $\beta^*(\alpha, \delta)$ along its non-empty segment between $q^*_s$ and $q^*_f$.
    Hence, a critical edge must exist, so subcases (4) and (5) are not possible. 
    Moreover, either one of the rootward paths that form $C_e$ contains a critical edge between $q^*_s$ and $q^*_f$, or $q^*$ is on $\beta^*(\alpha, \delta)$ between $q^*_s$ and $q^*_f$.
    Thus, in subcases (1), (2) and (3), $j$ of \cref{ln:Single} is not null, and we assume w.l.o.g that $j=1$.
    If $e^*_1$ is null then by \cref{ln:Single} it must be that $q^*$ is on $\beta^*(\alpha, \delta)$ between $q^*_s, q^*_f$, in which case $e^*_1$ is set in \cref{ln:E1_end_of_P2} to be the last edge of $P^*_2$.
    Note that in all three subcases, since \cref{alg:bisPreDec} returns false in \cref{ln:Gamma_return}, $\Gamma(e^*_1)$ is true.
    
    Define cycle $C$ to start with $T_\alpha(u, x_s)$, continue along $\beta^*(\alpha,\delta)$ from $q^*_s$ until reaching $e^*_1$, then travel along $P^*_1$ from $e^*_1$ towards $e^*$.

    Note that $C$ encloses all the $\alpha$-colored endpoints of edges on $\beta^*(\alpha,\delta)$ between $q^*_s$ and $e^*_1$.
    Also note that no edge on $\beta^*(\alpha,\delta)$ between $e^*_1$ and $q^*_f$ has an $\alpha$-colored endpoint in $T^e$.
    Thus, $S$ is enclosed by $C$.
    Since $q^*_s$ and $e^*_1$ both satisfy $\Gamma$, then by the monotonicity of $\Gamma$ along $\beta^*(\alpha,\delta)$ between $q^*_s$ and $q^*_f$, it follows that all of edges of $S$ satisfy $\Gamma$.
    Thus, the $\alpha$-colored endpoint of $e^*_c$ does not belong to $T^e$ and no edge $\tilde e \in T^e$ is incident to it.

    \item Neither $x_s$ nor $x_f$ belong to $T^e$.

    For subcase (2) and (3) w.l.o.g let $e^*_1$ be null and set $e^*_1$ to be the last edge of $P^*_2$, as set in \cref{ln:ConfC_Sub23_e*1}.

    We first consider the following:
    \begin{itemize}
        \item subcases (1) and (3), with the additional property that both $e^*_1$ and $e^*_2$ do not lie between $q^*_s$ and $q^*_f$ on $\beta^*(\alpha,\delta)$. This case returns false in \cref{ln:ConfC_Order} of \cref{alg:bisPreDec}.
        \item subcases (2), (4), (5). This case returns false in \cref{ln:ConfC_Sub245} of \cref{alg:bisPreDec}.
    \end{itemize}
    Recall that $S$ is the single maximal contiguous segment of edges along $\beta^*(\alpha,\delta)$ between $q^*_s$ and $q^*_f$ with $\alpha$-colored endpoints in $T^e$.
    Let $C_e$ denote the fundamental cycle of $e$ with respect to $T^*_\alpha$.  
    Neither $x_s$ nor $x_f$ belong to $T^e$ so $C_e$ encloses neither $x_s$ nor $x_f$.
    Hence, it follows by \cref{lem:VD_shortest_path_color} that if $C_e$ intersects the bisector $\beta^*(\alpha,\delta)$ outside the interval between $q^*_s$ and $q^*_f$, then it cannot also intersect it within that interval.
    In any subcase with no critical edges, $C_e$ does not intersect the bisector $\beta^*(\alpha,\delta)$ anywhere.
    Therefore, in each of the considered cases listed above, $S$ is empty and no vertex in $T^e$ is incident to an edge between $q^*_s$ and $q^*_f$ on $\beta^*(\alpha,\delta)$.
    Thus, $\alpha$-colored endpoint $e^*_c$ does not belong to $T^e$ and no edge $\tilde e \in T^e$ is incident to it.
    
    The remaining subcases are (1) and (3) with the additional property that both $e^*_1$ and $e^*_2$ lie between $q^*_s$ and $q^*_f$ on $\beta^*(\alpha,\delta)$. 
    This case returns false in \cref{ln:ConfC_Return} of \cref{alg:bisPreDec} if both $e^*_1, e^*_2$ satisfy $\Gamma$, or both do not.
    Define cycle $C$ to start at $e$, travel along $P^*_1$ towards $e^*_1$, continue along $\beta^*(\alpha,\delta)$ until reaching $e^*_2$, travel along $P^*_2$ from $e^*_2$ towards $e^*$.

    Note that $C$ encloses all the $\alpha$-colored endpoints of edges on $\beta^*(\alpha,\delta)$ between $e^*_1$ and $e^*_2$.
    Also note that no edge on $\beta^*(\alpha,\delta)$ between $q^*_s$ and $e^*_1$ or between $e^*_2$ and $q^*_f$ has an $\alpha$-colored endpoint in $T^e$.
    Thus, $S$ is enclosed by $C$.
    Since $e^*_1$ and $e^*_2$ either both satisfy $\Gamma$ or both do not, then by the monotonicity of $\Gamma$ along $\beta^*(\alpha,\delta)$ between $q^*_s$ and $q^*_f$, it follows that either all of edges of $S$ satisfy $\Gamma$ or all of edges of $S$ do not satisfy $\Gamma$.
    Thus, the $\alpha$-colored endpoint of $e^*_c$ does not belong to $T^e$ and no edge $\tilde e \in T^e$ is incident to it.
    
\end{enumerate}

\end{proof}

\subsection{Running Time Analysis}\label{subsec:analysis}

We now analyze the running time of {\sc TreeElimination} for finding a trichromatic face in a 3-site Voronoi diagram, proving the following.

\begin{theorem}\label{thm:tc-time}
    Let $Q$ be the query time of the distance oracle to be constructed.
    For any $R_i\in \mathcal{R}$, any 3 sites on $\partial R_i$, and any Voronoi diagram $\VD^*$ in $R_i^{out}$ w.r.t. these sites,
    The time to find the trichromatic face of $\VD^*$ is $(\log n)^{O(m)} \cdot Q$.
\end{theorem}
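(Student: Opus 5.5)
We bound the running time by a recurrence on the level gap. Let $T(d)$ be the worst-case time of a call to {\sc TreeElimination} whose coarse tree lives at level $\ell$ with $m-\ell \le d$. This includes all nested calls: to {\sc FindCritical}, {\sc SegmentBreakdown}, {\sc FindCriticalOnSegment}, and the recursive {\sc TreeElimination} calls with the bisector-prefix predicate. Similarly, let $F(d)$ be the time of one call to {\sc FindCritical} at such a level. The goal is to prove $T(d) \le (c\log n)^{c d}\cdot Q$ for a suitable constant $c$. Taking $d\le m$ then gives the theorem.

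\emph{Primitive operations.} Every such operation costs $\Otild(1)\cdot Q$ or less. This covers a color test (a constant number of distance queries), a point location in a Voronoi diagram on the stack, a membership test $w\in T^e$, a left/right ordering test, a query to a part-(A) MSSP, and a query to the explicit coarse trees $\hat T_x$ and $\hat T^*_x$ with their LCA, level-ancestor and marked-ancestor structures. Stack checks in \cref{alg:CS-stack} cost $O(m)$ point locations, which is $O(m)\cdot Q$. Evaluating either decision predicate, \cref{alg:triDec} or \cref{alg:bisPreDec}, costs $O(1)$ such operations, as argued in \cref{subsec:prefix_on_bisector}.

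\emph{Counting the work at one level.} Within one level, centroid elimination on $\hat T$ takes $O(\log n)$ steps, because the centroid decomposition has logarithmic height. Each step makes one call to {\sc FindCritical} and $O(m)$ point locations. When a single coarse edge remains, the procedure examines $O(1)$ candidate fine edges $e'$. Finding $\ell'$ for each candidate walks at most $m$ levels, using the pointers of \cref{pro:gamma_subtree_level_association}. Each candidate costs one {\sc FindCritical}. Afterwards the procedure either continues on the MSSP tree at the same level or descends to level $\ell+1$. Continuing on the MSSP tree adds only $O(\log n)$ further steps and never returns to that case. Along the whole descent there are at most $m$ level changes, so the total is $O(m\log n)$ calls to {\sc FindCritical} plus $\mathrm{poly}(m)\cdot Q$ overhead. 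Some of these calls occur at finer levels, but these are still at most $F(d)$ each.

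\emph{Bounding {\sc FindCritical}.} {\sc SegmentBreakdown} uses $O(m^2)$ invocations of mechanism $\mathcal M$. The binary search over segments uses $O(\log m)$ color tests. Finally there is one {\sc FindCriticalOnSegment}. Each {\sc FindCriticalOnSegment} first binary-searches over the associated fine edges of the coarse path, which takes $O(\log n)$ predicate evaluations. Level ancestors in $\hat T^*$ give random access along the path. It then either does a binary search inside a part-(A) MSSP, costing $O(\log n)$ queries, or makes a single call to {\sc TreeElimination} at level $\ge i+1$, costing at most $T(d-1)$. This gives
\[ F(d) \le O(m^2+1)\bigl(\Otild(1)\cdot Q + T(d-1)\bigr), \qquad T(d)\le O(m\log n)\,F(d)+\mathrm{poly}(m)\cdot Q. \]
Hence $T(d)\le O(m^3\log n)\,T(d-1)+\Otild(1)\cdot Q$. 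Since $m=O(\log n)$, unrolling $d\le m$ times gives $T(m)\le (\log n)^{O(m)}\cdot Q$.

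\emph{Main obstacle.} The hard step is justifying that the nested {\sc TreeElimination} call made inside {\sc FindCriticalOnSegment} really lives at a strictly finer level, so that the recursion decreases $d$. It must also not itself trigger unbounded re-descents. I would argue this as follows. The Voronoi edge refined there belongs to $\VD^*(\cdot,R_{i+1}^{out})$, and the new call starts with an empty stack on $\hat T_\alpha$ at level $i+1$. Inside that call, every further descent increases the level, and every call it makes to {\sc FindCriticalOnSegment} again moves at least one level deeper. At level $m-1$ the MSSP case of \cref{thm:SimpleVoronoiSite} applies, which gives the base case $T(1)=O(\log^2 n)\cdot Q$. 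It also needs care that the $O(1)$ candidate edges at the single-coarse-edge step do not multiply across levels. They do not: only one candidate is followed, and the others cost only a {\sc FindCritical} each, which is already counted in the $O(m\log n)$ factor.
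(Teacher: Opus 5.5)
\textbf{Gap: the induction measure $d=m-\ell$ does not decrease on nested calls.} Your count of the work inside one call matches the paper's: $O(m\log n)$ calls to {\sc FindCritical}, $O(m^2)$ calls to {\sc FindCriticalOnSegment} per {\sc FindCritical}, and at most one nested {\sc TreeElimination} per {\sc FindCriticalOnSegment}. The problem is the step $F(d)\le O(m^2)\bigl(\Otild(1)\cdot Q+T(d-1)\bigr)$, which does not hold. A call to {\sc FindCritical} at level $\ell$ looks for critical edges on the fundamental cycle of $e^*$ w.r.t.\ $T^*_{x_b}$, where $x_b$ is the root site at the bottom of the stack. It does not work w.r.t.\ $T^*_{x_\ell}$. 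Hence {\sc SegmentBreakdown} cuts $P^*=P^{*b}$ into segments lying in coarse dual trees $\hat T^*_{x_{i_j}}$ at levels $\ell=i_0>i_1>\dots\ge b$. {\sc FindCriticalOnSegment} on a segment of level $i_j$ then spawns {\sc TreeElimination} at level $i_j+1$, which may be as coarse as $b+1\le\ell$.

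In your paragraph on the main obstacle, the index $i$ in level $i+1$ is the level of the segment, not the level of the calling tree. So the nested call is strictly finer only than the \emph{base} level. Relative to $\ell$ the recursion can return to coarser levels, which makes the appeal to $T(d-1)$ circular. Your base case fails for the same reason. A call with $\ell=m-1$ but $b<m-1$ is not the plain MSSP search of \cref{thm:SimpleVoronoiSite}: its critical edges are w.r.t.\ $T^*_{x_b}$, and its segments reach back to level $b$.

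\textbf{Repair.} Induct on the level $b$ of the root site, which is what the paper does with $T(n',i,b)$ and $F(i,b)$.
\begin{itemize}
\item Let $T_b$ bound a {\sc TreeElimination} call, including all its descents, whose root site has level $\ge b$.
\item Descents to finer levels, and the continuation on a part-(A) MSSP tree, keep $b$ fixed and are absorbed by your $O(m\log n)$ factor.
\item Every call spawned by {\sc FindCriticalOnSegment} starts with an empty stack at level $\ge b+1$, so its root site has level $\ge b+1$.
\end{itemize}
This gives $T_b\le O(m^3\log n)\bigl(\Otild(1)\cdot Q+T_{b+1}\bigr)$, with base case $b=m-1$ handled by \cref{thm:SimpleVoronoiSite}. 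Unrolling at most $m$ times yields $(\log n)^{O(m)}\cdot Q$, matching the paper's \eqref{eq:T_F_recursive}.
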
 

Consider a call to {\sc TreeElimination}$(i,x_b,Y,\hat T,X,P)$, where $b$ is the recursive level of the site $x_b$, and $i\geq b$ is the recursive level of the tree $\hat T$ to be eliminated. 
We want to analyze the total running time of such a call to {\sc TreeElimination}. 

Let $T(n',i,\bottom)$ denote an upper bound on the maximum running time over any invocation of {\sc TreeElimination} on a coarse tree of size at most $n'$ at recursion level at least $i$, where the recursive level of the site $x_b$ is at least $\bottom$. Note that $i \geq b$.
Similarly, let $F(i,\bottom)$ denote an upper bound on the maximum running time over any invocation of {\sc FindCritical} on a coarse dual cycle at level at least $i$, with a site $x_b$ whose recursive level is at least $\bottom$.

We first argue that $T(n',i,\bottom)$ and $F(i,\bottom)$ are well defined. That is, that every call to {\sc TreeElimination} or to {\sc FindCritical} terminates. 
This is because every recursive call ever made either strictly decreases $n'$ without changing $i$ and $b$, or strictly increases $i$ without changing $b$, or strictly increases $b$. By inspecting the base cases, this implies that every call terminates. 
To derive recurrence relations for $T(n',i,b)$ and $F(i,\bottom)$, consider the call to {\sc TreeElimination} that achieves the maximum running time. 

As long as the tree contains more than a single edge, the procedure performs $O(m)$ distance queries and a single call to {\sc FindCritical} at level $i$, and then makes a single recursive call to {\sc TreeElimination}, but with a tree that is smaller by a constant factor. Hence,

\begin{equation}\label{eq:T1}
T(n',i,\bottom) \leq T\left(\frac{n'}{k},i,\bottom\right) + mQ + F(i,\bottom)  ,\quad \forall n' > 1,
\end{equation}
where $k>1$ is some constant.

When the tree contains just a single edge, the procedure determines a level $\ell'>i$ in $O(m)$ time, and calls {\sc FindCritical} at level $\ell'$. 
This is repeated $O(1)$ times, and then followed by a single recursive call to {\sc TreeElimination} with a new coarse tree at level at least $i+1$. Therefore, 

\begin{equation}\label{eq:T2}
T(1,i,\bottom) \leq T(n,i+1,\bottom) +  O(F(i,\bottom)) + O(m),\quad \forall i < m
\end{equation}
The base case is
\begin{equation}\label{eq:T_base}
T(1,m,\bottom) = O(1)
\end{equation}
because at level $m$ every coarse edge is in fact a fine edge.

We proceed to derive a recurrence for $F(i,\bottom)$.
{\sc FindCritical} performs exactly two iterations (one for each rootward path of the fundamental cycle). 
In each iteration it first calls {\sc SegmentBreakdown}.
Then, it eliminates the segments using binary search
that performs a total of $O(\log n)$ distance queries.
Finally, it calls {\sc FindCriticalOnSegment}.

{\sc SegmentBreakdown} performs no direct computation but only makes $O(m^2)$ invocations to {\sc FindCriticalOnSegment} (to realize mechanism $\mathcal M$).
Each call to {\sc FindCriticalOnSegment} eliminates the edges of the segment using binary search that makes $O(\log n)$ distance queries, and then performs at most one call to {\sc TreeElimination}, in which the recursion level is at least one greater than $\bottom$. Therefore, 

\begin{equation}\label{eq:FFS}
F(i,\bottom) \leq O(m^2)\cdot (\Otild(Q)+T(n,\bottom+1,\bottom + 1))
\end{equation}
Here, $\Otild$ suppresses polylogarithmic factors in $n$, the size of the input graph.

The base case is
\begin{equation}\label{eq:F_base}
F(m,m) = \Otild(Q),
\end{equation}
because in this case the only segment is at level $m$, so a binary search is performed on fine edges.

We now resolve the set of recurrence relations.
From \cref{eq:T1} and \cref{eq:T2} we get
\begin{equation}\label{eq:TR}
T(n',i,\bottom) \leq T(n,i+1,\bottom) + \Otild(mQ + F(i,\bottom))
\end{equation}
Repeatedly substituting \cref{eq:TR} into itself, using the fact that $F(i,b) \geq  F(i',b)$ for all $i'>i$, and incorporating \cref{eq:T1} and the base case of \cref{eq:T_base} yields
\begin{equation}\label{eq:T_F_only}
T(n',i,\bottom) \leq \Otild {(m \cdot(mQ + F(i,\bottom)))}
\end{equation}
Applying \cref{eq:FFS} or \cref{eq:F_base} into \cref{eq:T_F_only} gives us
\begin{align}
T(n',i,\bottom) & \leq  \Otild {(m \cdot(mQ + m^2 (Q +  T(n,\bottom+1,\bottom +1))} \nonumber\\
& = \Otild {(m^3Q +  m^3T(n,\bottom+1,\bottom +1))}\label{eq:T_F_recursive}
\end{align}
Repeating the substitution of \cref{eq:T_F_recursive}, and then applying \cref{eq:T1} unfolds the  recurrence down to the base case. Finally, incorporating the base case from \cref{eq:T_base}, and using the fact that $m\leq \log n$ in any recursive decomposition, results in 
\begin{equation}\label{eq:T_F_full}
T(n',i,\bottom) = (\log n)^{O(m)} \cdot Q
\end{equation}

\section{Near Optimal Construction of the Oracle of \cite{ourJACM}}\label{sec:jacm_complete}

To complete the proof of \cref{thm:main}, we discuss in \cref{subsec:complementary_element_D} how to construct the remaining components (D)–(E) for each Voronoi diagram used in the distance oracle of \cite{ourJACM} in $\Otild(S\cdot m \cdot Q)$, where $S$ is the number of sites of the Voronoi diagram, and $Q$ is the query time of the oracle.
This is dominated by the $S\cdot (\log n)^{O(m)} \cdot Q$ time of the construction of a Voronoi diagram of parts (B) and (C) with $S$ sites described in \cref{sec:complete}.

Together, these suffice for proving \cref{thm:main} and to construct the oracle of \cite{ourJACM} for all $m = o(\log n / \log \log n)$, but not for $m \in \Omega(\log n / \log \log n) \cap o(\log n)$.
In \cref{subsec:complementary_optimized_compute}
we improve the construction of parts (B) and (C) from \cref{sec:complete} for  $m \in \Omega(\log n / \log \log n) \cap o(\log n)$, which covers all tradeoffs for the oracle of \cite{ourJACM}.

\subsection{Efficient constructions of parts (D)--(E) of the oracle of \cite{ourJACM}}\label{subsec:complementary_element_D} 

Recall that the oracle works with a recursive $r$-division with $m$ levels.
In fact, there is no need to discuss part (E) and its construction in detail because the construction  already provided in \cite{ourJACM} is independent of all the other parts, and takes $\Otild(mn)$ time. 
This is dominated by the $n^{1+o(1)}$ construction time of the other parts of the oracle, for any $m = o(\log n)$, which covers all tradeoffs considered in \cite{ourJACM}.

We therefore only discuss part (D), which is defined in \cite{ourJACM} as follows:

\begin{enumerate}
  \item[(D)] \textbf{Site Tables and Side Tables.} For each Voronoi diagram $\VD^*(u,R_i^{out})$ from parts (B) or (C), and for every node $f^*$ in its centroid decomposition (where the primal vertices of $f^*$ are $y_j$, and their corresponding sites are $s_j$, $j\in\{0,1,2\}$), let $R_{i'}$ be an ancestor region of $R_i$ at some level $i'\ge i$. Then, for each such $R_{i'}$ and for every $j\in\{0,1,2\}$, store the ordered pair $(q,x)$, where:
    \begin{itemize}
      \item $q$ is the first vertex on the shortest $s_j$--to--$y_j$ path (in $R_i^{out}$) that lies on the boundary $\partial R_{i'}$, and 
      \item $x$ is the last vertex on the same path that lies on $\partial R_{i'}$.
    \end{itemize}
    In addition, we store the distance $\dist_G(u,x)$. Note that if the shortest $s_j$--to--$y_j$ path does not intersect $\partial R_{i'}$, then no $(q,x)$ pair exists; in that case, we store a single bit indicating whether $R_{i'}^{\mathrm{out}}$ lies to the right or to the left of the site--centroid--site chord\footnote{The site--centroid--site chord is the path consisting of the concatenation of the shortest path from $s_j$ to $y_j$, the edge $y_j y_{j-1}$ of the face $f$, and the reverse of the shortest path from $s_{j-1}$ to $y_{j-1}$.} 
    \[
    (s_j,\ldots,y_j,\, y_{j-1},\ldots,s_{j-1})
    \]
    in $R_{i'}^{\mathrm{out}}$.
  
\end{enumerate}

Part (D) is constructed for each Voronoi diagram of parts (B) and (C) immediately after  constructing each Voronoi diagram, level by level, starting at level $m$. 
We provide a procedure to compute all ingredients of both the site table and the side table for a given Voronoi diagram of $S$ sites in $\Otild (S\cdot m\cdot Q)$ time.

\paragraph{Site tables.}
Let $\VD^*(u, R_i^{out})$ be some Voronoi diagram of $R_i^{out}$ for the sites of $\partial R_i$ w.r.t. additive weights from some source $u\in R_i$; as computed in part (B) or (C).
Let $f^*\in \VD^*(u, R_i^{out}), R_{i'},y_j,s_j$ be as defined above. 
Let $P$ be the shortest $s_j$--to--$y_j$ path (in $R_i^{out}$), and let $(q_{i'}, x_{i'})$ be the first and last vertex on $P$ that lies on the boundary $\partial R_{i'}$.
Let $k>i$ be the smallest level for which $y_j\in R_{k}$.

We first observe the following monotonicity property of the vertices $x_{i'}$ along $P$, as shown in \cref{fig:last_vertices_order(part_d)}. 

\begin{proposition}\label{prop:last_vertices_order(part_d)}
    For all $i < i_1 < i_2 < k$, $x_{i_1}$ appears on $P$ before $x_{i_2}$.
    For all $k \leq i_1 < i_2 \leq m$, $x_{i_1}$ appears on $P$ after $x_{i_2}$ (if it exists).
\end{proposition}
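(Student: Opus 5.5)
The plan is to use only the nesting of the regions $R_i \subset R_{i+1} \subset \cdots \subset R_m = G$ and the Jordan curve theorem applied to the simple cycles $\partial R_{i'}$. Each such cycle separates the plane into the inside of $R_{i'}$ and $R_{i'}^{out}$. The path $P$ starts at $s_j \in \partial R_i$ and ends at $y_j$. The endpoint $s_j$ lies in $R_i \subseteq R_{i'}$ for all $i' \ge i$. The endpoint $y_j$ satisfies $y_j \in R_{i'}$ exactly when $i' \ge k$: by the choice of $k$, and since the regions are nested, $y_j \notin R_{i'}$ for $i < i' < k$.

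First consider $i < i_1 < i_2 < k$. Here $y_j$ lies strictly outside both $R_{i_1}$ and $R_{i_2}$, i.e.\ in $R_{i_2}^{out} \setminus \partial R_{i_2} \subseteq R_{i_1}^{out}\setminus\partial R_{i_1}$. Let $x_{i_2}$ be the last vertex of $P$ on $\partial R_{i_2}$; it exists because $P$ must cross $\partial R_{i_2}$ to reach $y_j$. The suffix of $P$ after $x_{i_2}$ stays strictly in $R_{i_2}^{out}\setminus \partial R_{i_2}$. Since $R_{i_1}\subseteq R_{i_2}$, this suffix contains no vertex of $\partial R_{i_1}$ (note $\partial R_{i_1}\subseteq R_{i_2}$). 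Hence $x_{i_1}$ lies on the prefix ending at $x_{i_2}$, so it appears no later than $x_{i_2}$. To get strict precedence, observe that if $x_{i_1}=x_{i_2}$, then the vertex lies on both boundaries. Then either the next edge of $P$ enters the interior of $R_{i_2}^{out}$ (fine), or I would argue directly that the last $\partial R_{i_1}$ vertex precedes the last $\partial R_{i_2}$ vertex. Should the paper's convention allow ties, I would state "not after" and treat equality as a degenerate case.

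Now consider $k \le i_1 < i_2 \le m$. Here $y_j \in R_{i_1} \subseteq R_{i_2}$. If $P$ touches $\partial R_{i_2}$, then since $P$ lies in $R_i^{out}$ and starts inside $R_{i_1}$ (at $s_j$), the part of $P$ after its last $\partial R_{i_2}$ vertex $x_{i_2}$ stays inside $R_{i_2}$. Any excursion of $P$ reaching $\partial R_{i_2}$ at $x_{i_2}$ must then return from $R_{i_1}^{out}$ into $R_{i_1}$ to reach $y_j$. The key fact is that $R_{i_2}^{out}\cap R_{i_1} \subseteq \partial R_{i_2}$, because $R_{i_1}\subseteq R_{i_2}$. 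Thus $x_{i_2}$ is either on $\partial R_{i_1}$ or strictly outside $R_{i_1}$. In the latter case, the subpath from $x_{i_2}$ to $y_j \in R_{i_1}$ crosses $\partial R_{i_1}$, giving a $\partial R_{i_1}$ vertex after $x_{i_2}$. Therefore $x_{i_1}$ appears after $x_{i_2}$ whenever the latter exists.

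The main obstacle is the degenerate case where a vertex lies on both boundaries (boundaries of nested regions may share vertices), and possibly the case $i_1 = k$ with $y_j$ on $\partial R_k$. I expect to handle these by interpreting "before/after" as non-strict, or by invoking the tie conventions of the single-hole assumption. The rest is a direct Jordan-curve argument on nested simple cycles.
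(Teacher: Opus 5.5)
Your nesting/separation argument is the natural one and matches what the paper has in mind; the paper gives no written proof, only the schematic figure of nested boundary cycles. Two points need repair.

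First, do not try to prove strict precedence, because it is false in general. Boundary cycles of nested regions may share vertices. In the first part your own argument shows that no vertex of $\partial R_{i_1}$ occurs after $x_{i_2}$. Hence whenever $x_{i_2}\in\partial R_{i_1}$ you get $x_{i_1}=x_{i_2}$; for example, $P$ leaves $R_{i_1}$ at a vertex of $\partial R_{i_1}\cap\partial R_{i_2}$ and immediately exits $R_{i_2}$. Your ``(fine)'' case is exactly such a tie, not strict precedence, and the ``argue directly'' alternative cannot succeed. The same happens in the second part when $x_{i_2}\in\partial R_{i_1}$ and the rest of $P$ stays in the interior of $R_{i_1}$.

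So commit to the non-strict reading (``not after'' / ``not before''). That is all the paper needs for the subsequent decomposition of $P$ into segments $P(x_\ell,x_{\ell+1})$, some of which may then be trivial. The figure's disjoint concentric circles hide this. Under this reading the case $y_j\in\partial R_k$ is not an extra obstacle, and no ``tie convention'' of the single-hole assumption is involved.

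Second, in the second part the fact you cite, $R_{i_2}^{out}\cap R_{i_1}\subseteq\partial R_{i_2}$, does not give the dichotomy ``$x_{i_2}\in\partial R_{i_1}$ or $x_{i_2}\notin R_{i_1}$.'' A priori $x_{i_2}$ could be an interior vertex of $R_{i_1}$, and then nothing forces $P$ to meet $\partial R_{i_1}$ at or after $x_{i_2}$. What you need is $\partial R_{i_2}\cap R_{i_1}\subseteq\partial R_{i_1}$. This follows from the refinement property of the recursive decomposition. A vertex of $R_{i_1}$ on $\partial R_{i_2}$ is incident to an edge not in $R_{i_2}$. That edge is not in $R_{i_1}$, so it belongs to another level-$i_1$ region, and the vertex is a boundary vertex of $R_{i_1}$.

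With this fact, the case $x_{i_2}\notin R_{i_1}$ gives a vertex of $\partial R_{i_1}$ strictly after $x_{i_2}$, namely the first vertex of the suffix that lies in $R_{i_1}$. The case $x_{i_2}\in R_{i_1}$ gives $x_{i_2}\in\partial R_{i_1}$. Either way $x_{i_1}$ exists and is not before $x_{i_2}$. With these two fixes your proof is complete.
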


\begin{figure}[htb]
    \centering
    \includegraphics[width=0.5\textwidth]{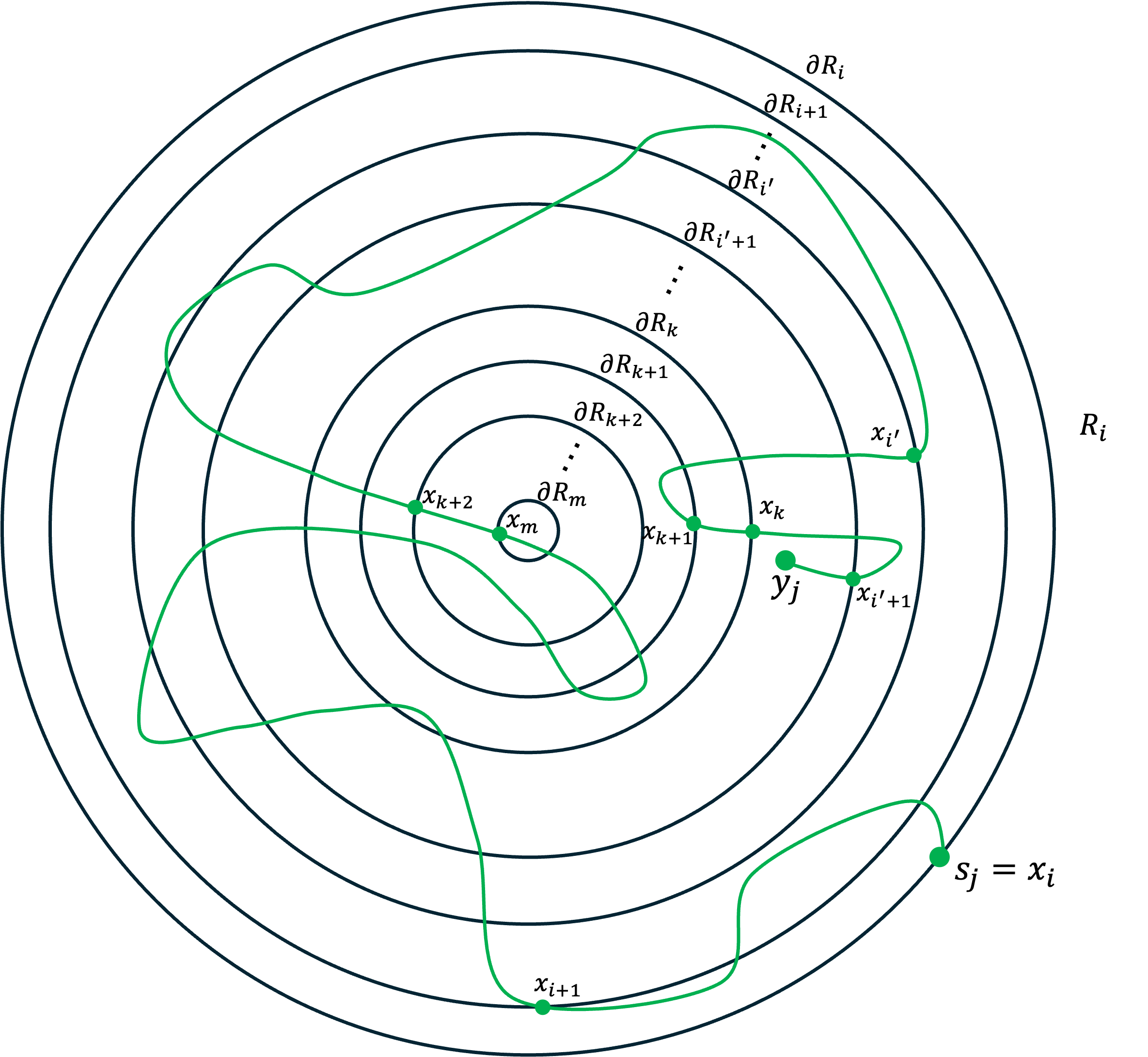}
    \caption{
    A schematic illustration of ~\cref{prop:last_vertices_order(part_d)}.
    The shortest path from $s_j\in \partial R_i$ to $y_j\in R_k$ is shown in green.
    The boundaries of the regions $R_i$ through $R_m$ in the graph decomposition are depicted as nested concentric circles.
    The last vertices $x_{i'}$ for each region are labeled explicitly along the path.
    The ordering of the $x_{i'}$ vertices along the path reflects the monotonicity stated in the proposition: for $i < i_1 < i_2 < k$, the vertices appear in increasing order along the path, and for $k \leq i_1 < i_2 \leq m$, the order reverses.
    }\label{fig:last_vertices_order(part_d)}
 \end{figure}

We first find $x_{i'}$ and $\dist_G(u,x_{i'})$, for each $k > i' > i$.
This is done using a single point location query (as described in~\cref{par:point_location_discussion}) for $y_j$ in $\VD^*(s_j,R_{i+1}^{out})$.
This point location query takes $O(Q)$ time, and is possible since at the current stage of computing the parts for level $i$, all components of the oracle have been computed for all levels greater than $i$.
We decompose $P$ into segments $P_0 = P(s_j,x_{i+1}), P_1 = P(x_{i+1},x_{i+2}), \dots, P_{k-2} = P(x_{k-2},x_{k-1}), P_{k-1} = P(x_{k-1},y_j)$.

We shall use the following markings on edges of the MSSP data structures of part (A) and of the coarse trees.
Each edge that represents a vertex at some level $0 < \ell \leq m$ is marked as such an edge (there is a separate mark for each level).
In addition, each shortcut edge of an MSSP tree is marked as such.
The markings are done when constructing these trees and only add a multiplicative $O(m)$ factor to the construction time.

We next find $x_k$.
We first try to find $x_k$ on $P_{k-1}$. 
Since $y_j$ is a vertex of the MSSP tree $T_{x_{k-1}}$, we query $T_{x_{k-1}}$ for the leafmost ancestor of $y_j$ that is a shortcut edge. 
If such a shortcut edge exists, then its leafward endpoint is $x_k$.
The distance $\dist_{T_{x_{k-1}}}(x_{k-1}, x_k)$ together with the previously calculated $\dist_G(u,x_{k-1})$ yields $\dist_G(u,x_k)$.
We break $P_{k-1}$ into two segments $P_{k-1} = P(x_{k-1},x_k)$, and $P_{k} = P(x_{k},y_j)$.

If $x_k$ was not found in this way, then we proceed as in the following description of the general case of finding $x_{i'}$ for $i'>k$ after having found all the last vertices on $P$ for levels less than $i'$.

We would like to find $x_{i'}$ for $i' \geq k$. 
By \cref{prop:last_vertices_order(part_d)}, $x_{i'}$ is not on the last segment in the decomposition of $P$.
To see this, recall that if $i'=k$ we are in the case that $x_{k}$ is before $x_{k-1}$ on $P$, and if $i'>k$, by \cref{prop:last_vertices_order(part_d)}, $x_{i'}$ appears before $x_k$ on $P$.
We work segment by segment in reverse order.
To handle the segment $P_\ell = P(x_\ell,x_{\ell+1})$, 
we check whether $x_{i'}$ lies on the part of $P$ that is represented by the $x_{\ell}$-to-$x_{\ell+1}$ path in $\hat T_{x_{\ell+1}}$. 
Let $z_{\ell+1}z'_{\ell+1}$ be the leafmost $i'$-marked ancestor edge of $x_{\ell+1}$ in $\hat T_{x_{\ell+1}}$.
If $z_{\ell+1}z'_{\ell+1}$ exists then $x_{i'}$ lies on the part of $P$ that is represented by the $z_{\ell+1}$-to-$z'_{\ell+1}$ path in $\hat T_{z_{\ell+1}}$.
We repeat the same process increasing level by level until we have found $z'_{i'}$ which is, by definition, $x_{i'}$.

If $z_{\ell+1}z'_{\ell+1}$ does not exist then $x_{i'}$ is not on $P_\ell$, and we move on to $P_{\ell-1}$. The entire process of finding $x_{i'}$ takes $\Otild(m) = \Otild(1)$ time, so $\Otild(m^2)$ for finding all of the $x_i$'s.

To find  $q_{i'}$ for each $i' > i$ (if it exists) we use a symmetric process that works along the segments $P_0,P_1, \dots, $ of $P$ in increasing order, where we query for the rootmost level-$i'$ marked ancestor edge instead of the leafmost such edge.

Hence, the time to compute all the site tables for a single Voronoi diagram with $S$ sites is $\Otild(S \cdot(Q + m^2))$

\paragraph{Side tables.} 
Let $\chi_j$ be the $(s_j,\ldots,y_j,\, y_{j-1},\ldots,s_{j-1})$ site-centroid-site chord. 
Let $R_{i'}$ be an ancestor region of $R_i$ at some level $i'\geq i$ such that $\partial R_{i'}$ does not intersect $\chi_j$.
We need to compute a single bit indicating whether $R_{i'}^{out}$ lies to the right or to the left of $\chi_j$.
Since $\partial R_{i'}$ is disjoint from $\chi_j$, this is equivalent to determining whether an arbitrary vertex $z$ on $\partial R_{i'}$ is left or right of $\chi_j$.

Let $z$ be a vertex on $\partial R_{i'}$. 
We use the point locating mechanism for $z$ in $\VD^*(u, R_i^{out})$.
Note that, as component $D$ has not yet been computed for the Voronoi diagram in this level, we cannot perform a direct call to the point location procedure of \cite{ourJACM}.
Instead, we perform a call to the point locate mechanism of \cite{CharalampopoulosGMW19} only in the current $i$ level, utilizing the distance query mechanism of \cite{ourJACM} for each of the sites evaluated.
Namely, the point location mechanism of \cite{CharalampopoulosGMW19} evaluates $O(\log n)$ sites, derived from the centroid decomposition of $\VD^*(u, R_i^{out})$, as prospective sources for a shortest path to $z$ w.r.t. additive distances.
In inspecting the shortest path distance from each prospective site $s'\in \partial R_i$, we call the point location mechanism of \cite{ourJACM} over $\VD^*(s', R_{i+1}^{out})$ in $O(Q)$ time.
The runtime of the overall point location query for $z$ in $\VD^*(u, R_i^{out})$ is therefore  $\Otild(Q)$.

Let $s_i\in \partial R_i$ be the site leading to $z$, which is returned by the point location query.
We need to determine whether vertex $z$ is left or right of $\chi_j$.
Recall that $u$ is the source vertex of the Voronoi diagram for which we are computing the side tables.
If $s_i \neq s_j$ and $s_i \neq s_{j-1}$ then 
consider the three paths $T_u(s_i,z), T_u(s_j,y_j), T_u(s_{j-1},y_{j-1})$.
The three paths are disjoint since each path does not intersect $\partial R_i$ after the first vertex.
Therefore, whether vertex $z$ is left or right of $\chi_j$ is determined by the cyclic order of $s_i, s_j, s_{j-1}$ along $\partial R_i$.
Cf. \cref{fig:side_tables}.
Otherwise, w.l.o.g. $s_i = s_j$, and whether vertex $z$ is left or right of $\chi_j$ is equivalent to whether $z$ is left or right of $y_j$ in $T_{s_j}$. 
We have already described how to determine this in $\Otild(Q)$ time in \cref{subsec:prefix_on_bisector}.

Hence, the total time to compute the side table for a Voronoi diagram with $S$ sites is $\Otild(S \cdot m \cdot Q)$, which dominates the time for constructing the site tables.

\begin{figure}[H]
    \centering
    \begin{minipage}{0.3\textwidth}
        \centering
        \includegraphics[width=\linewidth]{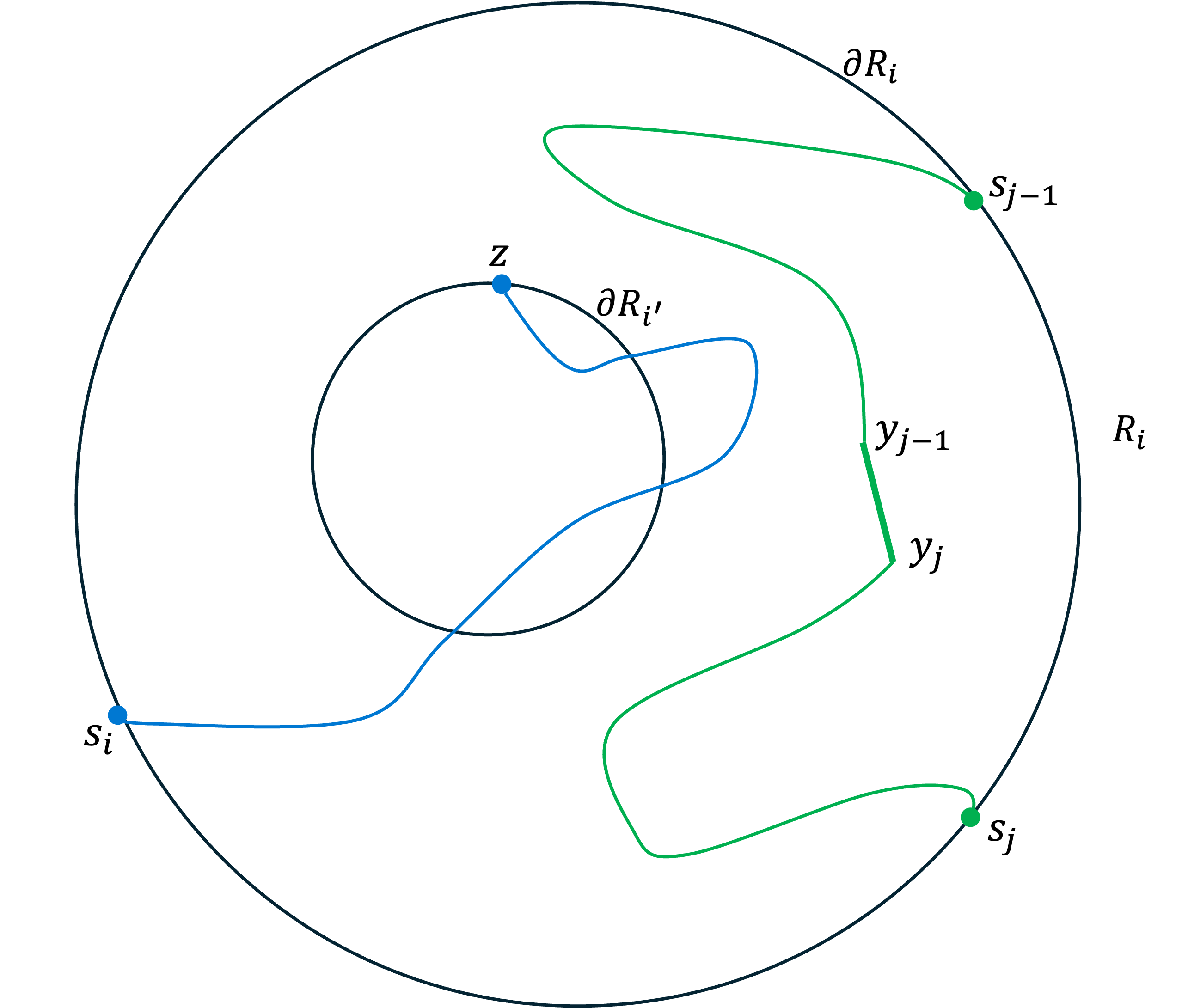}
    \end{minipage}%
    \hfill
    \begin{minipage}{0.7\textwidth}
        \captionof{figure}{%
            A schematic diagram for the computation of the side tables. 
            The boundaries of the regions $R_i$ and $R_{i'}$ in the graph decomposition are shown.
            The $\chi_j$ chord is indicated by a green line.
            The edge $y_j-y_{j-1}$ is strictly inside $R_{i'}$.
            The shortest path from $s_i$ to $z$ is indicated by a blue line.
            }
        \label{fig:side_tables}
    \end{minipage}
\end{figure}

\subsection{Constructing parts (B)--(C) of the oracle of 
\cite{ourJACM} for $m \in \Omega(\log n / \log \log n)$}\label{subsec:complementary_optimized_compute}

The analysis in \cref{subsec:analysis} concluding the algorithm description of \cref{sec:complete} yields a total runtime of $(\log n)^{O(m)}\cdot Q$
for finding a single trichromatic face at level $i$ of the recursive decomposition, where $Q$ denotes the distance oracle query runtime for all levels $i<i'\leq m$.
Recall that the distance oracle of \cite{ourJACM} provides $Q=\Otild(2^m)$.
Thus, for any $m=o(\log n / \log \log n)$ the analysis of \cref{sec:complete} results in an $n^{o(1)}$ time algorithm for finding each trichromatic face, and suffices to produce an overall  construction time of $n^{1+o(1)}$ for all Voronoi diagrams in parts (B) and (C) of the oracle.
This, however, is not the case for $m \in \Omega(\log n / \log \log n) \cap o(\log n)$ where the algorithm of \cref{sec:complete} for finding a single trichromatic face requires $n^{\Omega(1)}$ time, thereby taking $n^{1+\Omega(1)}$ time for constructing parts (B) and (C) of the oracle. 
In this subsection we resolve this issue.

\paragraph{The scaffolding mechanism.}
To overcome this problem when $m \in \Omega(\log n / \log \log n) \cap o(\log n)$ we shall invoke the procedure of \cref{sec:complete} to find a trichromatic face in a Voronoi diagram of a region $R_i$ of the recursive decomposition $\mathcal R$ that has $m$ levels, using a surrogate oracle with a recursive decomposition that has only $O(\sqrt{\log n}) \in o(\log n/\log\log n)$ levels, but still includes the region $R_i$. 
The query time of this surrogate oracle is $2^{O(\sqrt{\log n})}$.
This way, finding the trichromatic face will take 
just $(\log n)^{O(\sqrt{\log n})}\cdot 2^{O(\sqrt{\log n})} = n^{o(1)}$ time.

Let $r_1, r_2, \dots r_m$ be the parameters of the recursive decomposition $\mathcal R$ of the oracle that we need to construct.
Recall that we can afford to compute the additive distances for all the Voronoi diagrams required for the oracle of \cite{ourJACM} with decomposition $\mathcal R$.

Let $m'=\sqrt m$ and let $\mathcal R'$ be the recursive decomposition with parameters $r'_1, r'_2, \dots r'_{m'}$ such that $r'_j = r_{\lfloor j\cdot m/m'\rfloor}$. 
We call $\mathcal R'$ the scaffolding decomposition.
We construct the oracle of \cite{ourJACM} for $\mathcal R'$, which we call the scaffolding oracle. 
The construction requires $(\log n)^{O(m')}2^{O(m')} = n^{o(1)}$ time for finding a single trichromatic face of each Voronoi diagram in the oracle, so a total of $n^{1+o(1)}$ time to construct all parts of the scaffolding oracle. 

We work in batches of $m'$ consecutive levels of the $m$-level decomposition $\mathcal R$, starting at level $m$.
For the $j$'th batch, the surrogate decomposition $\mathcal R'_j$ is given by the parameters $$r'_j, r_{\lfloor j\cdot m/m' \rfloor +1}, r_{\lfloor j\cdot m/m' \rfloor +2}\dots, r_{\lfloor (j+1)\cdot m/m' \rfloor -1}, r'_{j+1}, r'_{j+2}\dots r'_{m'}.$$
That is, we insert $m/m'$ consecutive levels of $\mathcal R$ into the beginning of the suffix of the scaffolding decomposition that starts with $r'_{j+1}$.
We construct the oracle of \cite{ourJACM} for $\mathcal R'_j$, which we call the $j$'th surrogate oracle.
We already have all the components of the surrogate oracle for levels of $\mathcal R'_j$ 
starting from $r'_{j+1}$ (including), so we only need to compute the parts of the oracle for the batch of $m/m'$ levels. 
This is done using the algorithm of \cref{sec:complete} in $(\log n)^{O(m'+m/m')}2^{O(m'+m/m')} = n^{o(1)}$ time for finding a single trichromatic face of any Voronoi diagram in each of the remaining levels the oracle, so a total of $n^{1+o(1)}$ time to construct all parts of the surrogate oracle.

Doing this for all $m'$ batches takes $O(m')\cdot n^{1+o(1)} = n^{1+o(1)}$ time.
Along this computation we have computed all the Voronoi diagrams for the oracle of \cite{ourJACM} for all the regions of the $m$-level decomposition $\mathcal R$.

\newpage
\bibliographystyle{alpha}
\bibliography{references}

\end{document}